\numberwithin{equation}{section}
\newtheorem{theorem}{Theorem}[section]
\newtheorem{lemma}[theorem]{Lemma}
\newtheorem{proposition}[theorem]{Proposition}
\newtheorem{assumption}[theorem]{Assumption}
\theoremstyle{definition}
\newtheorem{remark}[theorem]{Remark}
\newcommand{\rset}{\mathbb{R}}
\newcommand{\cset}{\mathbb{C}}
\newcommand{\Rd}{\,\mathrm{d}}
\newcommand{\tPsi}{\tilde\Psi} %
\newcommand{\tlambda}{\tilde{\lambda}} %
\newcommand{\tLambda}{\tilde\Lambda}  %
\newcommand{\tH}{\tilde H } %
\newcommand{\tA}{\tilde A } %
\newcommand{\bLambda}{\Lambda} %
\newcommand{\bH}{\bar{H}}
\newcommand{\bB}{\bar{B}} %
\newcommand{\bA}{\bar{A}} %
\newcommand{\brA}{\breve{A}} %
\newcommand{\by}{\bar{y}} %
\newcommand{\TR}{\mathrm{Tr}\,} %
\newcommand{\CSP}{\cset^{d\times d}}
\def\EXP#1{e^{#1}}
\def\OPER#1{\hat{#1}}
\def\OPERW#1{\widehat{#1}}
\def\PERIOD{\,.}
\def\COMMA{\,,}
\def\BIGO{\mathcal{O}}
\def\Id{\mathrm{I}}
\def\IU{\mathrm{i}}
\def\NORMFAC{\left(\frac{\sqrt M}{2\pi}\right)}
\def\WEYL#1{({#1})^{\OPERW{}}\,\,}
\def\HSPACE{\mathcal{H}}
\def\SSPACE{\mathcal{S}}
\def\FT{\mathcal{F}}
\def\MP{{\#}}
\def\XYH{\tfrac{1}{2}(x+y)}
\def\XXH#1#2{\tfrac{1}{2}({#1}+{#2})}
\def\ZBSTAR{\tilde{z}_b}
\def\HBSTAR{\tilde{H}_b}
\def\HSSTAR{\tilde{H}_s}
\title[Quantum observables approximated by molecular dynamics]%
{Canonical quantum observables \\
 for molecular systems approximated \\
 by ab initio molecular dynamics}
\author{Aku Kammonen}
\address{Institutionen f\"or Matematik, Kungl. Tekniska H\"ogskolan, \\100 44 Stockholm, Sweden}
\email{kammo@kth.se}
\author{Petr Plech\'{a}\v{c}}
\address{      Department of Mathematical Sciences, 
         University of Delaware, \\
         Newark, DE 19716, USA}
\email{plechac@udel.edu}
\author{Mattias Sandberg}
\address{Institutionen f\"or Matematik, Kungl. Tekniska H\"ogskolan,\\ 100 44 Stockholm, Sweden}
\email{msandb@kth.se}
\author{Anders Szepessy}
\address{Institutionen f\"or Matematik, Kungl. Tekniska H\"ogskolan,\\ 100 44 Stockholm, Sweden}
\email{szepessy@kth.se}
\thanks{The authors are grateful to Caroline Lasser for valuable help on this work.
The research was supported by
Swedish Research Council 621-2014-4776 and the Swedish e-Science Research Center. The research of P.P. was supported by ARO MURI Award No.  W911NF-14-024}
\subjclass{82C10, 81Q20}
\date{} 
\begin{document}

\begin{abstract}
It is known that ab initio molecular dynamics based on the electron ground state eigenvalue
can be used to approximate quantum observables in the canonical ensemble when the temperature is low compared to
the first electron eigenvalue gap. 
This work proves that a certain weighted average of the different ab initio dynamics,  corresponding to each electron eigenvalue, approximates quantum observables for any temperature.
The proof uses the semiclassical Weyl law to show that
canonical quantum observables of nuclei-electron systems, based on matrix valued Hamiltonian symbols, can be 
approximated by ab initio molecular dynamics with the error proportional to the 
electron-nuclei mass ratio. The result
covers observables that depend on time-correlations. 
A combination of the Hilbert-Schmidt inner product for quantum operators and Weyl's law
shows that the error estimate holds 
for observables and Hamiltonian symbols  that have three and five bounded derivatives, respectively, 
provided the electron eigenvalues are distinct for any nuclei position
and the observables are in the diagonal form with respect to the electron eigenstates.
\end{abstract}

\maketitle
\tableofcontents

%%%%%%%%%%%%%%%%%%%%%%%%%%%%%%%%%%%%%%%%%%%%%%%%%%%%%%%%%%%%%%%%%%%%%%%%%%%%
%
% SECTION
%
%%%%%%%%%%%%%%%%%%%%%%%%%%%%%%%%%%%%%%%%%%%%%%%%%%%%%%%%%%%%%%%%%%%%%%%%%%%%

\section{Introduction} %{Motivation for this work}

\subsection{Background}
Given a quantum system defined by the Hamiltonian $\OPER{H}(x,-i\hbar \nabla)$ acting on $L^2(\rset^{N})$ the quantum canonical ensemble at the inverse temperature $\beta = 1/(k_B T)$ is described by the density operator $\OPER{\rho} = \EXP{-\beta\OPER{H}}$. A  quantum observable is defined by a Hermitian, densely defined operator $\OPER{A}$ on $L^2(\rset^{N})$ and the quantum canonical ensemble average is obtained from the normalized trace of the product as 
%\begin{equation}\label{quant_observable}
 %\langle \OPER{A}\rangle := 
 $\frac{\TR({\OPER{\rho}\OPER{A}})}{\TR{\OPER{\rho}}}\PERIOD$
%\end{equation}
The Weyl quantization establishes a connection between the operator Hamiltonian $\OPER{H}(x,-i\hbar\nabla)$ and its  real-valued symbol function $H(x,p)$ defined on the classical phase space $\rset^{N}\times\rset^{N}$. The semiclassical analysis for $\hbar\to 0$ shows that the quantum observables 
%$\langle \OPER{A}\rangle$
can be approximated by the classical Gibbs ensemble average
\begin{equation}\label{classical_observable}
%\langle {A}\rangle := 
\frac{\int_{\rset^{N}}\int_{\rset^{N}} A(x,p) \EXP{-\beta H(x,p)}\, dx dp}{\int_{\rset^{N}}\int_{\rset^{N}} \EXP{-\beta H(x,p)}\, dx dp} \COMMA
\end{equation}
where the function $A(\cdot,\cdot)$ is the symbol of the 
Weyl quantized operator $\OPER{A}$.
%$\OPER{A}$ is the Weyl quantization of the %symbol  $A(x,p)$.

% It is well known that quantum observables in the canonical ensemble in some situations can be
%  approximated by molecular dynamics
% simulations with constant number of particles, volume and temperature, see \cite{marx_hutter}.
The first mathematical result of such a semiclassical limit was obtained by Wigner \cite{wigner}. Wigner introduced
the ``Wigner''-function, based on solutions to the Schr\"odinger equation with scalar potentials,
and made an expansion in the Planck constant $\hbar$ to relate the canonical quantum observable 
%as defined by von Neumann \cite{neumann}, 
to the classical Gibbs phase-space average.

We study molecular dynamics approximations of canonical ensemble averages  for quantum observables of the nuclei-electrons system. The role of the semiclassical parameter is played by the mass ratio $m_e/m_n$ between
electrons mass $m_e$ (the light particles) and the nuclei mass $m_n$ (the heavy particles). We assume the atomic units (a.u.) in which $\hbar=1$ and the mass of electrons $m_e = 1$, thus our small parameter is $1/M \ll 1$
where $M$ is the mass of nuclei in atomic units. Note that in atomic units the proton mass is $m_p\approx  1836$. The Boltzmann constant is $k_B=1$ in atomic units hence $\beta = 1/T$. 

The Hamiltonian of this system consists of the kinetic energy of nuclei and the electronic kinetic energy operator together with the operator describing interaction between electrons and nuclei
$$
\OPER{H} = -\frac{1}{2M} \Delta_{x} - \frac{1}{2} \Delta_{x_e} +\OPER{V}_e(x,x_e)\PERIOD
$$
In this work we treat the electronic kinetic energy operator and the interaction operator, $\OPER{H}_e = -\frac{1}{2}\Delta_{x_e}+\OPER{V}_e(x,x_e)$ as a matrix valued potential $V:\rset^N\to \cset^d$ defining the new Hamiltonian
%for nuclei-electron interaction and electron kinetic energy
\begin{equation}\label{ham_defin}
\OPER{H} = -\frac{1}{2M} \Id \otimes \Delta + \OPER{V}(x)\COMMA
\end{equation}
where $\Id$ is the $d\times d$ identity matrix. 
%The purpose of having a matrix valued operator for the electron part in the Hamiltonian
 %is to replace the time evolution for the electrons by the Schr\"odinger electron eigenvalue problem. 
 The matrix-valued potential $V$ is then obtained by approximating $\OPER{H}_e$ on a finite dimensional subspace of electronic states. We assume that this approximation results in the Hermitian matrix valued confining potential $V$ with non-degenerate eigenvalues.
 %or a numerical discretization of  $\OPER{H}_e$.
%   An advantage of including the electron part  as a matrix-valued operator
%  is that the classical limit can be  derived as the electron-nuclei mass ratio $1/M$ tends to zero, see \cite{Teufel_matrix} which, in Section~6, includes an overview of previous results.
By including the electron part  as a matrix-valued operator
one can derive the limit as the electron-nuclei mass ratio $1/M$ tends to zero, see \cite{Teufel_matrix} which, in Section~6, includes an overview of previous results.
 This limit can then be approximated by
 ab initio molecular dynamics simulations for nuclei, with the potential generated by the electron eigenvalue problem, see \cite{lebris,marx_hutter}, based on the nuclei and electron scale separation using the Born-Oppenheimer approximation \cite{panati_spohn_teufel1, panati_spohn_teufel2}. Such molecular dynamics simulations have the benefit to require 
 less computational effort than to solve the Schr\"odinger equation with  time dependent electrons.

If the temperature is small in comparison to the minimal difference of the second and first eigenvalue of the electron potential $\OPER{V}$, the probability for the quantum system to be in excited electron states is negligible and the canonical ab initio molecular dynamics based on the electron ground state yields accurate approximations. When the temperature is not small compared to this electron eigenvalue gap, the probability to be in excited states is substantial and the molecular dynamics associated with the electronic ground state energy will not yield accurate approximation of quantum observables.
%**
%**
%**
\subsection{Overview of results}
We address an important question %which seems unanswered in a mathematical sense,
which seems mathematically open:
How to modify the canonical {\it ab initio} molecular dynamics in order to accurately approximate quantum  observables
based on matrix valued potentials and all temperatures? 

We derive molecular dynamics methods that accurately, in mathematical sense,
approximate a quantum observable also in the case where the temperature can be large compared to the first spectral gap.
The approximation consists of a weighted sum %, over electron eigenvalues,
 of molecular dynamics observables  for %based on the corresponding
 the scalar Hamiltonians which are %based on %consisting of %which are %equal to 
 the eigenvalues of the original matrix-valued Hamiltonian symbol.
 Furthermore, the weights, which are the probabilities to be in the corresponding electron states, are determined precisely 
as molecular dynamics observables. For instance, molecular systems with light atoms  and applications with laser heating
have been simulated more accurately by taking several electron states into account,
see e.g.  \cite{parinelo_frenkel}, \cite{ring-poly}
and Section 5.3.4 in \cite{marx_hutter}.
%as, e.g., liquid hydrogen and liquid water have been simulated more accurately by taking several electron states into account, see \cite{ring-poly}, .

Section~\ref{sec_gibbs}  presents analysis
with quantum observables not depending on time. In Section~\ref{sec_corr} we study observables that are correlations in time or depending on time correlations.
The main result is 
that the weighted sums of molecular dynamics observables approximate
canonical quantum observables, based on Schr\"odinger Hamiltonians with matrix-valued potentials for any positive temperature.  
The approximation error is bounded by the square root of
electron-nuclei mass ratio, $M^{-1/2}$, times a constant,
provided 
the observable symbols are in the diagonal form with respect to the electron eigen-states and
have up to three derivatives bounded
in $L^2(\rset^{N}\times\rset^{N})$, the Hamiltonian symbol has five derivatives 
and the electron eigenvalues are distinct for any nuclei position. 
An improved approximation error $\mathcal{O}(M^{-1})$ holds with additional assumptions.
%
%The result includes
%observables that depend on correlations in time.
%
The main mathematical tool is the semiclassical Weyl law, as formulated e.g. in \cite{zworski,Teufel_matrix}.

The semiclassical Weyl law has been used before, see \cite{Teufel_matrix}, to approximate canonical quantum observables, based on general matrix valued Hamiltonians, including infinite dimension $d=\infty$, by phase-space averages. 
 For general Hamiltonian operators the classical approximation with $\BIGO(M^{-1})$ accuracy 
includes $\BIGO(M^{-1/2})$ perturbations of the leading order Hamiltonian dynamics, as described in \cite{Teufel_matrix} by insightful analysis of the Hamiltonian flow based on perturbed Hamiltonian functions and symplectic forms.

\medskip

With the aim to obtain sharper error estimates,
we focus on a simpler case from the analysis perspective, when the nuclei kinetic energy contribution to the Hamiltonian operator is defined by the Laplacian, as in \eqref{ham_defin},  and $d$ is finite. We exploit this simplified structure and the assumption of non crossing electron eigenvalues, to construct global projections to the electron states related to the adiabatic approximation.
%to obtain sharper error estimates.
%Other differences are in the use of %projections to electron states, 
We show how these projections, in the special case of the Hamiltonian \eqref{ham_defin}, can be determined by a nonlinear eigenvalue problem. 
%\todo{ removed
%solved by a power expansion provided by  Cauchy-Kovalevsky's theorem.}
%and  that we analyze the weights to be in %the different electron states and determine %them as molecular dynamics observables. 
This new approach with a non-linear eigenvalue problem has an advantage in providing more precise error estimates. 
%\todo{removed The construction is based on
%a convergent power series and not only on a semiclassical expansion which has been used to derive semiclassical limits elsewhere.}

Another novelty of our approach is that 
 our error estimates are not using  the Calderon-Vaillancourt theorem that bounds the $L^2$ operator norm
 of a symbol by $L^\infty$ estimates of derivatives of order $N$ of the symbol. 
 Instead we take advantage of the Hamiltonian form \eqref{ham_defin} to derive error estimates that combine  the Hilbert-Schmidt norm of quantum observables and Weyl's law
 to obtain new bounds in terms of the $L^2(\rset^{N}\times\rset^{N})$ norm of remainder symbols based on three and five derivatives of the observables and the Hamiltonian, respectively. The new bounds of the remainder symbols given in Lemmas \ref{moyal_lemma} and \ref{comp_lemma} are based on Hermitian properties of the Moyal composition.
 The constant in our $\BIGO(M^{-1})$ approximation result in Theorem~\ref{gibbs_corr_thm_analytic} depends e.g. on
 the $L^2(\rset^{N}\times\rset^{N})$ norm of three derivatives of the observable, while estimates based on the Calderon-Vaillancourt theorem
 require similar bounds on derivatives of order $N\gg 1$, which in practise can be very large 
 %also 
 %compared to the parameter $M$ 
 so that $\BIGO(M^{-1})$ would not lead to useful 
 error estimates for computational approximations with realistic values of $M$. 
 
In conclusion, we make a step in the direction of precisely estimating constants for semiclassical approximations. There are two main new mathematical ideas that take advantage of the particular form of the Hamiltonian \eqref{ham_defin}: a non-linear eigenvalue problem that avoids the traditional asymptotic expansion and an error estimate of remainder terms that use up to fourth order derivatives only. We think these two ideas can be useful to further trace constants in situations with e.g.\ avoided crossings, degenerate and crossing electron eigenvalues, vanishing temperature and more general Hamiltonians.

At the core of the canonical ensemble is the Gibbs distribution,
which has the important property
that it is the marginal distribution for  
a subsystem weakly coupled to a heat bath, where the
composite system is assumed to have the microcanonical distribution, see e.g. \cite{feynman}.
In Appendix~\ref{Gibbs_section} we present a variant of this property, assuming 
instead that the marginal distribution of the subsystem
is determined by the subsystem Hamiltonian. 
Another basic property is that the Gibbs density is a time independent solution of the 
Liouville-von Neumann equation. 
Appendix~\ref{Gibbs_section} also includes a comparison of the classical and quantum Gibbs 
densities with respect to the classical and quantum Liouville equations. 
%In particular we seek error estimates that are uniform in the number of particles in the subsystem.
In Section~\ref{sec_comp_demo} and Appendix~\ref{numerical_tests} we present numerical results of simple model problems where the quantum and molecular dynamics observables are compared.

%%%%%%%%%%%%%%%%%%%%%%%%%%%%%%%%%%%%%%%%%%%%%%%%%%%%%%%%%%%%%%%%%%%%%%%%%%%%
%
% SECTION
%
%%%%%%%%%%%%%%%%%%%%%%%%%%%%%%%%%%%%%%%%%%%%%%%%%%%%%%%%%%%%%%%%%%%%%%%%%%%%
\section{The Schr\"odinger equation and Gibbs ensembles}\label{sec_gibbs}

\subsection{Problem formulation and Weyl quantization}
%As explained in the previous section 
We consider the matrix valued Schr\"odinger operator
\begin{equation*}\label{ham_defin1}
\OPER{H} = -\frac{1}{2M} \Id\otimes \Delta + V(x)\COMMA
\end{equation*}
where $V:\mathbb R^N \rightarrow \cset^{d\times d}$ is a  Hermitian matrix valued 
confining potential and $\Id\otimes \Delta $ is the $d\times d$ identity matrix 
times the Laplacian on $\mathbb R^N$, modeling the nuclei kinetic energy. 
%We assume suitable conditions on $V$ so that the Hamiltonian has a self-adjoint %extension on the Hilbert space 
%$\HSPACE := L^2(\rset^N,\cset^d)\equiv [L^2(\rset^N)]^d$.{\bf why precisely do we need %this and what are the conditions}
%We use the notation $\HSPACE := L^2(\rset^N,\cset^d)\equiv [L^2(\rset^N)]^d$.

Hence,  the quantum model consists of $N/3$ nuclei whose  coordinates are in $\rset^3$ 
and the wave functions $\Phi_n\in L^2(\rset^N,\cset^d)$ are vector-valued with $d$ components. We use the notation $\HSPACE := L^2(\rset^N,\cset^d)\equiv [L^2(\rset^N)]^d$.
Approximation of the electronic part of the Hamiltonian using the electron eigenvalue problem gives the matrix potential (operator) $V(x)$ defined for each nuclei configuration $x\in\rset^N$.
Here $M\gg 1$ is the nuclei-electron mass ratio, assumed to be much larger than one.
The setting with individual nuclei masses and a diagonal mass matrix $M$ 
can be transformed to the form \eqref{schrod_eq}
by introducing the new coordinates $M_1^{1/2}\bar x=M^{1/2}x$, which transforms 
the Hamiltonian into
\[
-\frac{1}{2M_1} \Id\otimes \Delta_{\bar x}  + V(M_1^{1/2}M^{-1/2}\bar x)\PERIOD
\]
 
To handle the spectrum of $\OPER{H}$ 
we assume that the
smallest eigenvalue $\lambda_1(x)$ of $V(x)$ tends to infinity as
$|x|\rightarrow\infty$.
This assumption implies that the spectrum of $\OPER{H}$ 
is discrete, see \cite{dellara}. More precisely,
let $\Phi_n:\mathbb R^N\rightarrow \cset^d$ and $E_n\in\rset$ for $n=1,2,3,\ldots$  be solutions  
%the infinite sequence of all normalized eigenfunction in $\HSPACE:=L^2(\rset^N,\cset^d)\equiv [L^2(\rset^N)]^d$, with the corresponding eigenvalues $E_n\in\rset$, i.e., solution of the eigenvalue problem
of the eigenvalue problem
\begin{equation}\label{schrod_eq}
\OPER{H}\Phi_n = E_n\Phi_n\COMMA
\end{equation}
then the set of eigenfunctions $\{\Phi_n\}_{n=1}^\infty$ forms a complete 
%$L^2(\rset^N,\cset^d)\equiv [L^2(\rset^N)]^d$
basis of the Hilbert space $\HSPACE$. 

To have a complete set of eigenfunctions in $\HSPACE$ is used  for the 
 analysis of the canonical quantum ensemble average in this work, although it is not crucial.  The approach we present 
 is based on the concept of the trace of quantum operators introduced 
 by von Neumann, \cite{neumann}.

\medskip
\noindent{\it Notation and Weyl calculus.}
Since our analysis provides error estimates for approximating quantum observables with classical ones it is natural to use tools of Weyl calculus that defines correspondence between operators on $\HSPACE$ and their symbols in a suitable functional space. Here we introduce the notation used throughout this paper.  

For functions $u,v \in \HSPACE $ we denote the scalar product
\[
\langle v,w\rangle = \int_{\rset^N} 
v^*(x)\cdot w(x)\, \Rd x\COMMA\;\;\;
\mbox{where  $v^*(x)\cdot w(x) := \sum_{j=1}^d v_j^*(x)w_j(x)$} \COMMA
%\underbrace{v^*(x)\cdot w(x)}_{\sum_{j=1}^d v_j^*(x)w_j(x)} \Rd x\COMMA
\]
and the corresponding norm $\|u\|^2_\HSPACE = \int_{\rset^N} |u(x)|^2 \Rd x$. The space of smooth rapidly decaying matrix-valued functions, i.e., Schwartz space, is denoted $\SSPACE(\rset^N\times\rset^N,\cset^{d\times d})$ and it is abbreviated as $\SSPACE$. 
We define the Fourier transform $\FT: A(x,p)\mapsto \FT[A](\xi_x,\xi_p)$
\begin{equation}\label{FT:definition}
\FT[A](\xi_x,\xi_p) = \int_{\rset^{N}}\int_{\rset^{N}} A(x',p') \EXP{-\IU (x'\cdot\xi_x + p'\cdot \xi_p)}\Rd x' \Rd p'
\PERIOD
\end{equation}
We emphasize that the Fourier transform of a symbol $A$ is denoted by $\FT[A]$ while $\OPER{A}$ denotes the Weyl quantization of the symbol $A$. We also use the notation $\WEYL{A}$ instead of the simple $\OPER{A}$, in particular in long expressions such as $\WEYL{A\EXP{-\beta H}}$.  
%to evaluate observables for $v,w\in [L^2(\mathbb R^N]^d$.
We define the Weyl quantization operator of the matrix-valued symbol $A\in \SSPACE$ as %\SSPACE(\rset^N\times\rset^N,\cset^{d\times d})$ 
the mapping $A\mapsto \OPER{A}$
%$\WEYL{\,.\,}: \SSPACE \to \mathcal{B}(\HSPACE,\HSPACE)$
that assigns to the symbol $A$ the linear operator 
$\OPER{A}:\HSPACE \to \HSPACE$ defined 
for all $\Phi\in\SSPACE(\rset^N,\cset^d)$ by
\begin{equation}\label{Weyl:definition}
  \OPER{A} \Phi(x)=\NORMFAC^N
  \int_{\rset^{N}}\int_{\rset^{N}} \EXP{\IU M^{1/2}(x-y)\cdot p} 
   A(\XYH,p) \Phi(y) \Rd p \Rd y \PERIOD
\end{equation}
For example, the symbol $H(x,p):=\tfrac{1}{2}|p|^2 \Id + V(x)$ yields 
$\OPER{H} = -(2M)^{-1}\Id\otimes \Delta + V(x)$.
%By density of $\SSPACE\subset \HSPACE$ the %definition is naturally extended to $\Phi\in %\HSPACE$ and more general symbols $A$ which %satisfy (component-wise) for all $k,l\geq 0$ and %$(x,p)\in\rset^N\times\rset^N$ a regularity bound
%\[
%\nabla^k_x\nabla^\ell_p A_{ij}(x,p) \leq %K_{k,\ell} \left(1 + |x| + %|p|\rigt)^r\COMMA\;\;\mbox{for some $r>0$.}
%\]
% We consider the Weyl quantized operator $\OPER{A}$  that 
% maps $[L^2(\rset^N)]^{d}$ to $[L^2(\rset R^N)]^{d}$ and 
% is defined from a $d\times d$ matrix valued symbol
% $A:\rset^{2N}\rightarrow \CSP$ in the Schwartz class
% by
% \[
%   \OPER{A} \Phi(x)=\NORMFAC^N
%   \int_{\rset^{N}}\int_{\rset^{N}} \EXP{\IU M^{1/2}(x-y)\cdot p} 
%   A(\XYH,p) \Phi(y) \Rd p \Rd y \PERIOD
% \]
The definition \eqref{Weyl:definition} implies  %(at least formally) 
that any quantisation $\OPER{A}$ is an integral operator
\[
\OPER{A} \Phi(x)= \int_{\rset^{N}} K_A(x,y)\Phi(y) \Rd y\COMMA 
\]
with a matrix-valued, distributional kernel $K_A(x,y)$ 
\begin{equation}\label{Kernel:definition}
   K_A(x,y)= \NORMFAC^N
   \int_{\rset^{N}}\EXP{\IU M^{1/2}(x-y)\cdot p} A(\XYH,p)  \Rd p\PERIOD
\end{equation}
The expression above shows that $K_A$ is the Fourier transform in the second argument of the symbol $A(x,p)$ and consequently
the Weyl quantization is well defined for
symbols in $\SSPACE'$, the space of tempered distributions.
%From the expression above, which essentially %shows that $K_A$ is the Fourier transform in the %second argument of the symbol $A(x,p)$ we see %that the mapping $A\mapsto K_A$ is an isomorphism %on $\SSPACE$, as well as $\SSPACE'$ (space of the %tempered distributions), and $\HSPACE$. %Furthermore the correspondence $A\mapsto %\OPER{A}$ between $\HSPACE$ and  %$\mathcal{L}_{\mathrm{HS}}(\HSPACE,\HSPACE)$ the %space of Hilbert-Schmidt operators on $\HSPACE$.

An important property of the Weyl quantization is given by the Moyal product $A\MP B$ of two symbols $A$, $B$ 
\begin{equation}\label{Moyal:definition}
[A\MP B](x,p) = \left.\EXP{\tfrac{\IU}{2\sqrt{M}}(\nabla_{x'}\cdot\nabla_{p} - \nabla_{x}\cdot\nabla_{p'})} A(x,p)B(x',p') \right|_{(x,p)=(x',p')}\PERIOD
\end{equation}
The Moyal product provides correspondence between the algebra of operators and the algebra of their symbols by identifying composition of two operators with the Weyl quantization of the Moyal product of their symbols,  
%has the property that it maps the Weyl quantisation into the composition of the operators, 
more precisely
\begin{equation}\label{Composition:rule}
\WEYL{A\MP B} = \WEYL{A}\WEYL{B}\PERIOD
\end{equation}
% To introduce the Moyal product we define an anti-symmetric form $\pi:\cset^2\times \cset^2 \to \cset$ by $\pi((a,b),(a',b')) = ab'-a'b$
% \begin{equation}\label{Moyal:definition}
% [A\MP B] = \sum_{k=0}^\infty \frac{1}{k!} \left(\frac{i}{2\sqrt{M}}\rihgt)^k \left(\nabla^k_{x'} A\cdot \nabla^k_p B - \nabla^k B \cdot \nabla^k A\right)^k\COMMA 
% \end{equation}
% where we introduce the Poisson bracket $\{A,B\}$
% \[
% \{A,B\}(x,p) = \sum_{i=1}^d\sum_{j=1}^N \frac{\partial A_i}{\partial x_j}\frac{\partial B_i}{\partial p_j} - \frac{\partial B_i}{\partial x_j}\frac{\partial A_i}{\partial p_j} 
% \]
Further properties and background on the Weyl calculus are presented, e.g., in \cite{zworski,folland2}.

\medskip

%The principal idea in this work is to view such kernels as infinite dimensional matrices
%and determine their trace as operators on $\HSPACE$. 
The principal idea in this work is to study the trace of operators on $\HSPACE$ with kernels $K_A$ defined by \eqref{Kernel:definition}.
The trace is a composition of the
trace in $L^2(\rset^N)$ and the trace of $d\times d$ matrices. %integrate over all $x$ 
%in $\rset^N$ and for each $x$ sum the diagonal elements in the  $d\times d$ matrix. 
%That is, the $L^2$ measure for the trace is the product measure composed of 
%the Lebesgue measure on $\rset^N$ times the counting measure on $\cset^d$.
The two different traces are defined by
\[
\begin{split}
\TR{\OPER{B}} &:=
\sum_{n=1}^\infty \langle \Phi_n,\OPER{B}\Phi_n\rangle\COMMA \ \mbox{ for an operator $\OPER{B}$ on $\HSPACE$,} \\
\TR{ B}&:=\sum_{j=1}^d B_{jj}\, ,\ \mbox{ for a $d\times d$ matrix } B\, . \\
\end{split}
\]
The analysis here uses the fact that
the %$[L^2(\mathbb R^{N})]^{d}$ 
$\HSPACE$-trace of a Weyl operator based on a $d\times d$ 
matrix valued symbol is equal to
the phase-space average of its symbol trace. Indeed  we have by \eqref{Kernel:definition} for $A\in\SSPACE$
\begin{equation}\label{trace1}
%\begin{split}
\TR{\OPER{A}} =
%\sum_{n=1}^\infty \langle \Phi_n,\OPER{A}\Phi_n\rangle 
\int_{\mathbb R^{N}} \TR K_A(x,x) \Rd x = \NORMFAC^N\int_{\mathbb R^{N}}
\int_{\mathbb R^{N}}
\TR  A(x,p)  \Rd p\, \Rd x\PERIOD
%\end{split}
\end{equation}
We introduce the coordinate $z=(x,p)\in\rset^{2N}$ in the phase space and its Lebesgue measure  $\Rd z=\Rd x\Rd p$.
In fact also the composition of two Weyl operators is determined by the phase-space average as follows.
\begin{lemma}\label{composition}
The composition of  two Weyl operators $\OPER A$ and $\OPER B$, with $A\in\SSPACE$ and $B\in\SSPACE$ satisfies
\[
\TR(\OPER{A}\OPER{B})=\NORMFAC^N\int_{\mathbb R^{2N}}  \TR\big(A(z)B(z)\big) \Rd z\COMMA
\]
where $A(z)B(z)$ is the matrix product of the two $d\times d$ matrices $A(z)$ and $B(z)$.
\end{lemma}
Although this result is not new, see \cite{Teufel_matrix}, we include a proof since it is important for the work here.
\begin{proof}
The kernel of the composition is
\[
\begin{split}
   K_{AB}(x,y) &=
    \NORMFAC^{2N}\int_{\mathbb R^{3N}}
      A(\XXH{x}{x'},p)B(\XXH{x'}{y},p')\\
      &\qquad\times \EXP{iM^{1/2}\big((x-x')\cdot p + (x'-y)\cdot p'\big)} \Rd p' \Rd p \Rd x'
\end{split}
\]
so that the trace of the composition becomes
\[
\begin{split}
    &\TR(\OPER{A}\OPER{B}) =\int_{\mathbb R^N}\TR K_{AB}(x,x) \Rd x\\
    &=\NORMFAC^{2N}\int_{\mathbb R^{4N}}
       \TR\big(A(\XXH{x}{x'},p)B(\XXH{x'}{x},p')\big) \\
&\qquad \times \EXP{iM^{1/2}\big((x-x')\cdot p + (x'-x)\cdot p'\big)} \Rd p' \Rd p \Rd x' \Rd x\\
&=\NORMFAC^{2N}\int_{\mathbb R^{4N}}
\TR\big(A(y,p)B(y,p')\big) \EXP{iM^{1/2}y'\cdot (p-p')} \Rd p' \Rd p \Rd y' \Rd y\\
&=\NORMFAC^{N}\int_{\mathbb R^{2N}}
\TR\big(A(y,p)B(y,p)\big)  \Rd p  \Rd y\COMMA\\
\end{split}
\]
using the change of variables $(y,y')=\big((x+x')/2,x-x'\big)$, which verifies the claim.
\end{proof}

The isometry between  Weyl operators with the Hilbert-Schmidt  inner product,  $\TR({\OPER{A}^*\OPER{B}})$, and the corresponding $L^2(\rset^N\times\rset^N,\mathbb C^{d\times d})$ symbols obtained by  Lemma~\ref{composition} shows how to extend from symbols in $\SSPACE$ to $L^2(\rset^N\times\rset^N,\mathbb C^{d\times d})$ by density of $\SSPACE$
in $L^2(\rset^N\times\rset^N,\mathbb C^{d\times d})$, see \cite{Teufel_matrix}. We will use the Hilbert-Schmidt norm 
$\|\OPER{A}\|_{\mathcal{HS}}^2=\TR(\OPER{A}^*\OPER{A})=\TR(\OPER{A}^2)$,
to estimate Weyl operators.

%%%%%%%%%%%%%%
%
% SUBSECTION
%
%%%%%%%%%%%%%%%%
\subsection{Gibbs density operator and its approximation} \label{sec:gibbs_intro}

The quantum statistical average of the (time-independent) observable $\OPER{A}$ in the canonical ensemble at the inverse temperature $\beta$ is given by
\begin{equation}\label{Gibbs_observable}
%\langle \OPER{A}\rangle_{\beta} = 
\frac{\TR(\OPER{\rho}\OPER{A})}{\TR(\OPER{\rho})}\COMMA\;\;
\mbox{with (non-normalized) density operator 
$\OPER{\rho} = \EXP{-\beta\OPER{H}}$.}
\end{equation}
Similarly the time-dependent or time-correlation  observables becomes
\begin{eqnarray}\label{Gibbs_timeobservable}
%&&\langle \OPER{A}_\tau\OPER{B}_0 \rangle_\beta = 
\frac{\TR( \OPER{A}_\tau\OPER{B}_0 \OPER{\rho})}{\TR(\OPER{\rho})}\COMMA\;\;\;\mbox{and related}\;\;
\frac{\TR(\OPER{A}_\tau(\OPER{B}_0 \OPER{\rho} + \OPER{\rho} \OPER{B}_0)}{\TR(\OPER{\rho})}\COMMA
\end{eqnarray}
where $\OPER{C}_\tau:=\EXP{{\rm i}\tau M^{1/2}\hat H} \hat C\EXP{-{\rm i}\tau M^{1/2}\hat H}$ is the quantum observable at time $\tau$ with $\OPER{C}_0=\hat C$,  
as presented more precisely  in Section \ref{sec_corr}. %\ref{a_t_def}

The Weyl quantization provides a correspondence  between quantum operators $\OPER{A}$ and their classical symbols $A$.
%the Hamiltonian operator $\OPER{H}$ and its %classical counter part, i.e., the Weyl symbol $H$. 
The quantum canonical ensemble is described by the density matrix operator $\OPER{\rho} = \EXP{-\beta \OPER{H}}$ while the classical canonical ensemble is defined by the Gibbs distribution $\mu({\rm d}z) \sim \EXP{-\beta H}\,{\rm d}z$ on the phase space $z=(x,p)$. 

Lemma~\ref{composition} suggests that the correspondence between the quantum and classical Gibbs observables can be achieved if we use as the density matrix operator the Weyl quantization $\OPERW{\EXP{-\beta H}}$. In Section~\ref{sec_corr} we derive error estimates that show that quantum observables such as those in \eqref{Gibbs_timeobservable} can be approximated by a classical Gibbs observables in \eqref{classical_observable}, derived from the Hamiltonian symbol $H$, and thus linked to the classical molecular dynamics. Theorems~\ref{gibbs_corr_thm_analytic}, \ref{gibbs_corr_thm} and \ref{gibbs_corr_thm_unif} prove  error estimates both
if the density operator used in \eqref{Gibbs_timeobservable} is $\OPER{\rho} = \EXP{-\beta\OPER{H}}$ and if it is replaced
with Weyl quantization  $\OPERW{\EXP{-\beta H}}$.

% The error estimates show that this approximation is of order $\BIGO(M^{-1/2})$
% if the density operator used in \eqref{Gibbs_timeobservable} is $\OPER{\rho} = \EXP{-\beta\OPER{H}}$ and becomes $\BIGO(M^{-1})$ if $\OPER{\rho}$ is replaced with Weyl quantization  $\OPERW{\EXP{-\beta H}}$. The precise statements that requires various technical assumptions are given in Theorem 3.2, 3.6, 3.7.

%The principal goal is to find molecular dynamics approximations to the Gibbs 
%canonical quantum ensemble average, defined as
%\[
%   \frac{\TR(\EXP{-\beta\OPER{H}}\OPER{A})}{\TR \EXP{-\beta\OPER{H}}}
%   =\frac{\sum_{n=1}^\infty \langle \Phi_n,{\EXP{-\beta\OPER{H}}}\OPER{A}\Phi_n\rangle}{\sum_{n=1}^\infty \langle \Phi_n,{\EXP{-\beta\OPER{H}}}\Phi_n\rangle}
%   =\sum_{n=1}^\infty \langle \Phi_n,\OPER{A}\Phi_n\rangle \EXP{-\beta E_n}
%\]
%
%
%for 
%a positive temperature $T$ and 
%an observable $\OPER{A}$ with the symbol %$A:\rset^{2N}\rightarrow\CSP$.
%More generally, 
%in Section~\ref{sec_corr}, we consider related time-dependent observables that involve the time dependent operator $\OPER{A}_\tau$
%\[
%\TR( \OPER{A}_\tau\OPER{B}_0 {\EXP{-\beta \OPER{H}}})= 
%\sum_{n=1}^\infty \langle \Phi_n, \OPER{A}_\tau\OPER{B}_0 %{\EXP{-\beta \OPER{H}}}\Phi_n\rangle\COMMA
%\]
%and the variant corresponding to time-correlated observables
%\[
%\TR\big( \OPER{A}_\tau(\OPER{B}_0 {\EXP{-\beta \OPER{H}}}+ %{\EXP{-\beta \OPER{H}}}\OPER{B}_0)\big)\PERIOD
%\]
Replacing the standard density operator $\EXP{-\beta\OPER{H}}$ with the operator $\OPERW{\EXP{-\beta H}}$ raises a question which one should be taken as a reference for error analysis. We show in the proof of Theorem~\ref{gibbs_corr_thm} that the observables based on these two operators differ  by a quantity of order $\BIGO(M^{-1})$ when the number of 
particles, $N$, is small compared to $M$. Thus in this case one can use either of them. However, the standard density operator $\EXP{-\beta \OPER{H}}$ is the stationary solution of the (quantum) Liouville-von Neumann equation while the corresponding symbol is not the time-independent solution of the classical Liouville equation. On the other hand starting with the classical Gibbs density $\EXP{-\beta H}$ the corresponding Weyl quantization gives the proposed density operator $\OPERW{\EXP{-\beta H}}$ which is not a stationary solution of the Liouville-von Neumann equation. We discuss this issue in Appendix~\ref{Gibbs_section}.

\medskip

The trace property of Lemma~\ref{composition} shows that an approximation  of the Gibbs observable,
where the order of the operations of exponentiation and quantization have been reversed, satisfies 
\[
\begin{split}
\TR(\OPERW{\EXP{-\beta H}}\OPER{A}) &= \sum_{n=1}^\infty \langle \Phi_n,\OPERW{\EXP{-\beta H}}\OPER{A}\Phi_n\rangle \\
&=\NORMFAC^{N}\int_{\mathbb R^N} \TR\big( A(z)\EXP{-\beta H(z)}\big) \Rd z
\end{split}
\]
and in the normalized form
\begin{equation}\label{S_def}
\begin{split}
G&:= \frac{ \TR(\OPERW{\EXP{-\beta H}}\OPER{A})}{ \TR(\OPERW{\EXP{-\beta H}})}
=\frac{\int_{\mathbb R^{2N}} \TR (A(z)\EXP{-\beta H(z)}) \Rd z}{
\int_{\mathbb R^{2N}} \TR (\EXP{-\beta H(z)}) \Rd z} \PERIOD
\end{split}
\end{equation}
We will use the diagonalized form of the Hamiltonian symbol and similarly of transformed observables.
%We first 
%consider symbols for observables that can be diagonalized by the same transformation as the Hamiltonian, which for instance
%includes the scalar observables.
More precisely, let  $\tA:\mathbb R^{2N}\rightarrow \CSP$ be a symbol in the Schwartz class
and consider symbols in the matrix product form $A(z)=\tPsi(x)\tA(z)\tPsi^*(x)$,
where $\tPsi(x)$ is the $d\times d$ matrix composed of the eigenvectors to $V(x)$ as columns, i.e.,
\begin{equation}\label{v_lambda_psi}
  \sum_{j=1}^d V_{ij}(x)\tPsi_{jk}(x)=\tlambda_k(x)\tPsi_{ik}(x)\COMMA
\end{equation}
and $\tlambda_k(x), k=1,\ldots, d$, denote the eigenvalues of $V(x)$ in the increasing order.
Here $\tPsi^*(x)$ is the Hermitian adjoint of $\tPsi(x)$ in $\cset^d$ and 
$\delta_{jk}$ is the Kronecker delta.
Then $\tPsi(x)$ diagonalizes $H(z)$ and it and  $A(z)$ satisfy:
\begin{equation}\label{tilde_H}
\begin{split}
H(z) &=\tPsi(x)\tH(z)\tPsi^*(x),\qquad \mbox{ where } \tH_{jk}(z)=\delta_{jk} 
\big(\frac{|p|^2}{2} + \tlambda_j(x)\big)\COMMA\\
A(z)&=\tPsi(x)\tA(z)\tPsi^*(x), \qquad \mbox{ where } 
\tA(z)=\tPsi^*(x)A(z)\tPsi(x)\PERIOD %
\end{split}
\end{equation}
The diagonal property of $\EXP{-\beta \tH}$ shows that the trace satisfies
\[
\begin{split}
\TR(A(z)\EXP{-\beta H(z)}) &=\TR (\tA(z)\EXP{-\beta \tH(z)})
=\sum_{j=1}^d \tA_{jj}(z)\EXP{-\beta \tH_{jj}(z)}\COMMA
\end{split}
\]
which only requires the diagonal part of $\tA$.
We obtain by \eqref{S_def} the approximate Gibbs quantum observable as a sum 
\[
G=\frac{\sum_{j=1}^d\int_{\mathbb R^{2N}} \tA_{jj}(z) \EXP{-\beta \tH_{jj}(z)} \Rd z}{
\sum_{j=1}^d\int_{\mathbb R^{2N}}  \EXP{-\beta \tH_{jj}(z)} \Rd z} 
\]
and each term can be written in canonical ensemble form:

\begin{lemma}\label{thm_stat} The approximate canonical ensemble average satisfies, for $\beta>0$,
\begin{equation}\label{S_def_q}
    \frac{ \TR(\OPERW{\EXP{-\beta H}}\OPER{A})}{ \TR(\OPERW{\EXP{-\beta H}})}
    =\sum_{j=1}^d q_j 
    \int_{\mathbb R^{2N}} \tA_{jj}(z) \frac{\EXP{-\beta \tH_{jj}(z)}\Rd z}{\int_{\mathbb R^{2N}} \EXP{-\beta \tH_{jj}(z')} \Rd z'} 
\end{equation}
with the weights given by respective probability to be in the state $j$ %based on the probabilities 
\begin{equation}\label{qh_def}
   q_j=q_j(\tH):=\frac{\int_{\mathbb R^{2N}}\EXP{-\beta \tH_{jj}(z)} \Rd z}{
   \sum_{k=1}^d\int_{\mathbb R^{2N}}\EXP{-\beta \tH_{kk}(z')} \Rd z'}\COMMA\quad j=1,\ldots, d \PERIOD
\end{equation}
%representing probabilities to be in the state $j$. 
\end{lemma}

In Section~\ref{sec_corr} we analyze the trace based
on  the time independent Gibbs density operator 
$\EXP{-\beta \OPER{H}}$ and on $\widehat{\EXP{-\beta H}}$, using instead the transformation
$\OPER{A}=\OPER{\Psi}\OPER{\bar A}\OPER{\Psi}^*$ for a diagonal symbol 
$\bar A:[0,\infty)\times\rset^{2N}\rightarrow \CSP$
with an orthogonal matrix $\Psi:\rset^{N}\rightarrow\CSP$ diagonalizing 
a non-linear perturbation of the eigenvalue problem \eqref{v_lambda_psi}. 
\begin{remark}\label{projections} 
Since $\Psi_{\cdot j}(\Psi_{\cdot j})^*$ is the projection to the electron state $j$, a method to handle projections to electron states 
is  to normalize with respect to
that state and use
\begin{equation*}\label{indep}
\begin{split}
\frac{\sum_{n=1}^\infty \langle \Phi_n, \big( \Psi_{\cdot j}\tA_{jj}(\Psi_{\cdot j})^*\big)^{\widehat{}} 
\ \big(\EXP{-\beta H}\big)^{\widehat{}} \ \Phi_n\rangle}{
\sum_{n=1}^\infty \langle \Phi_n, 
\big(\Psi_{\cdot j} (\Psi_{\cdot j})^*\big)^{\widehat{}}\  \big(\EXP{-\beta H}\big)^{\widehat{}} \ \Phi_n\rangle
}=   \int_{\rset^{2N}} \tA_{jj}(z)
\frac{\EXP{-\beta \tH_{jj}(z)}}{\int_{\rset^{2N}}\EXP{-\beta \tH_{jj}(\bar z)} d\bar z} \Rd z\PERIOD
\end{split}
\end{equation*}$\square$
\end{remark}

\subsection{Computational demonstration}\label{sec_comp_demo}
In order to demonstrate computational consequences of the presented analysis we formulate a simple model problem for comparing quantum observables with observables obtained from molecular dynamics. 
We consider a model Hamiltonian $\hat H = -M^{-1}\Id\otimes\Delta + V(x)$, where $V: \mathbb{R}\to \mathbb{R}^{2\times 2}$ and $\Id$ is the $2\times 2$ identity matrix. The corresponding Schr\"odinger eigenvalue equation $\OPER{H} \Phi_n = E_n\Phi_n$, where $E_n \in \mathbb{R}$ and $\Phi_n:\mathbb{R}\rightarrow\mathbb{R}^2$ represents a model with
a heavy particle with coordinate $x\in\rset$ and a two state light electron particle.
%an electronic eigenvalue problem. 
The details of the example,  numerical implementation, and specific values of the parameters are described in Appendix~\ref{numerical_tests}.

%\medskip
%\noindent{\it Equilibrium observables.}
\subsubsection{Equilibrium observables}\label{sec_equilib1}
First we focus on equilibrium position observables.
We demonstrate numerically an estimate of the difference of the quantum and classical canonical observables, 
given by \eqref{Gibbs_observable} and the right hand side in \eqref{S_def_q}, respectively. 
%as defined in \eqref{quant_observable} and \eqref{classical_observable},
%\eqref{G_corr} 
%$\langle \OPER{A}\rangle - \langle g\rangle=\mathcal O(M^{-1})$,
%for the observables $g:\rset\to\rset$ depending only on the position. 
We view the averaging for the quantum, %$\langle \OPER{A}\rangle = 
$\int_\rset A(x)\,\mu_{\mathrm{qc}}(x)\Rd x$, and classical, %$\langle g\rangle = 
$\int_\rset A(x)\,\mu_{\mathrm{cl}}(x) \Rd x$, position observables as averaging with respect to the measures with the densities
\begin{eqnarray}
\mu_{\mathrm{qc}}(x) &=&  \frac{\sum_{n}  |\Phi_n(x)|^2  \EXP{-\beta E_n}}{\int_{\rset}\sum_{n} |\Phi_n( x')|^2 \EXP{-\beta E_n}\Rd x'} \COMMA \label{qc_density}\\
\mu_{\mathrm{cl}}(x) &=&  \sum_{k=1}^2 q_k \frac{\EXP{-\beta \lambda_k(x)}}{\int_{\rset} \EXP{-\beta \lambda_k(x')}\Rd x'} \COMMA\;\;
q_k = \frac{\int_{\rset} \EXP{-\beta \lambda_{k}(x)} \Rd x}{\sum_{j = 1}^{2}\int_{\mathbb{R}} \EXP{-\beta \lambda_{j}(x)} \Rd x}
\label{cl_density}\PERIOD
\end{eqnarray}
Here $(E_n,\Phi_n(x))$ denotes an eigen-pair of the operator $\OPER{H}$ and $\lambda_{k}(x)$, $k=1,2$  is the $k$-th eigenvalue of the matrix-valued function $V(x)$, such that $\lambda_{1}(x)\leq \lambda_{2}(x)$.

\begin{figure}
 \includegraphics[height=3.5cm]{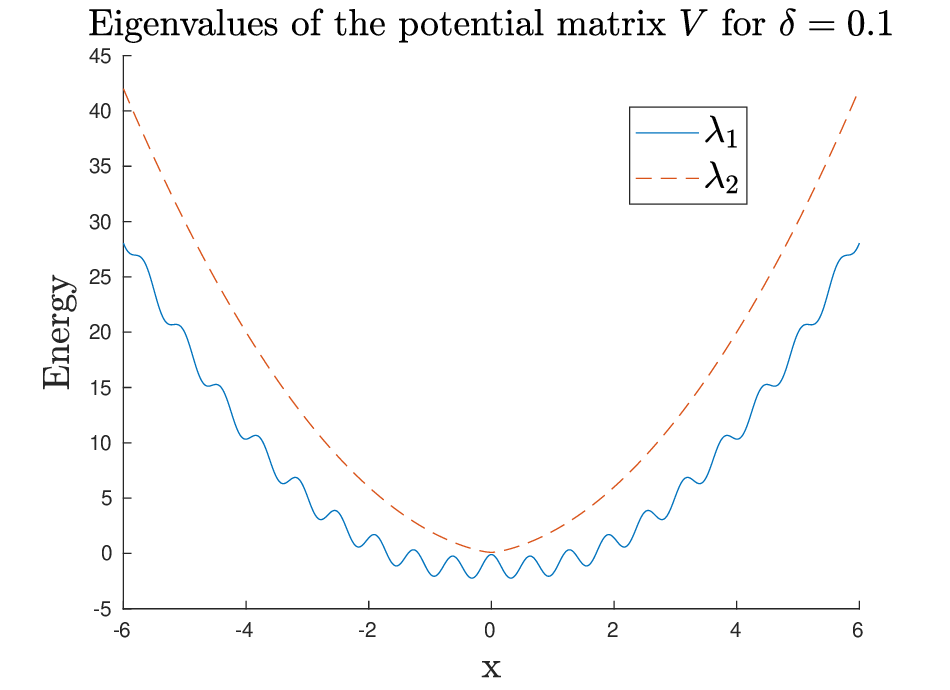}
    \caption{The eigenvalue functions $\lambda_1(x)$ and $\lambda_2(x)$ of $V(x)$.}
    \label{image:examplePotentialMatrix}
 \end{figure}

Figure~\ref{image:twoMDdensities} shows the classical density $\mu_{\mathrm{cl}}$ from molecular dynamics on the electron ground state, corresponding to $q_1=1$ and $q_2=0$, and the density given by the values of $q_k$ in \eqref{cl_density}. For the choice of parameters these probabilities are $q_1=0.8$ and $q_2=0.2$.
Figure~\ref{image:density} shows the Schr\"odinger and classical densities $\mu_\mathrm{qc}(x)$ and $\mu_\mathrm{cl}(x)$, computed using \eqref{qc_density} and \eqref{cl_density}. We note that the classical density computed using \eqref{cl_density} approximates the quantum density quite well, whereas the Figure~\ref{image:twoMDdensities} shows that the classical density computed using only the electron ground state is not close to the quantum density.
The potential $V(x)$ that was used in these numerical tests is described in 
Appendix~\ref{numerical_tests}, and has eigenvalues $\lambda_k(x)$ depicted in Figure~\ref{image:examplePotentialMatrix}. 
%
%
% Figure
\begin{figure}%[htbp]
  \begin{subfigure}{0.49\textwidth}
    \includegraphics[width=0.9\linewidth]{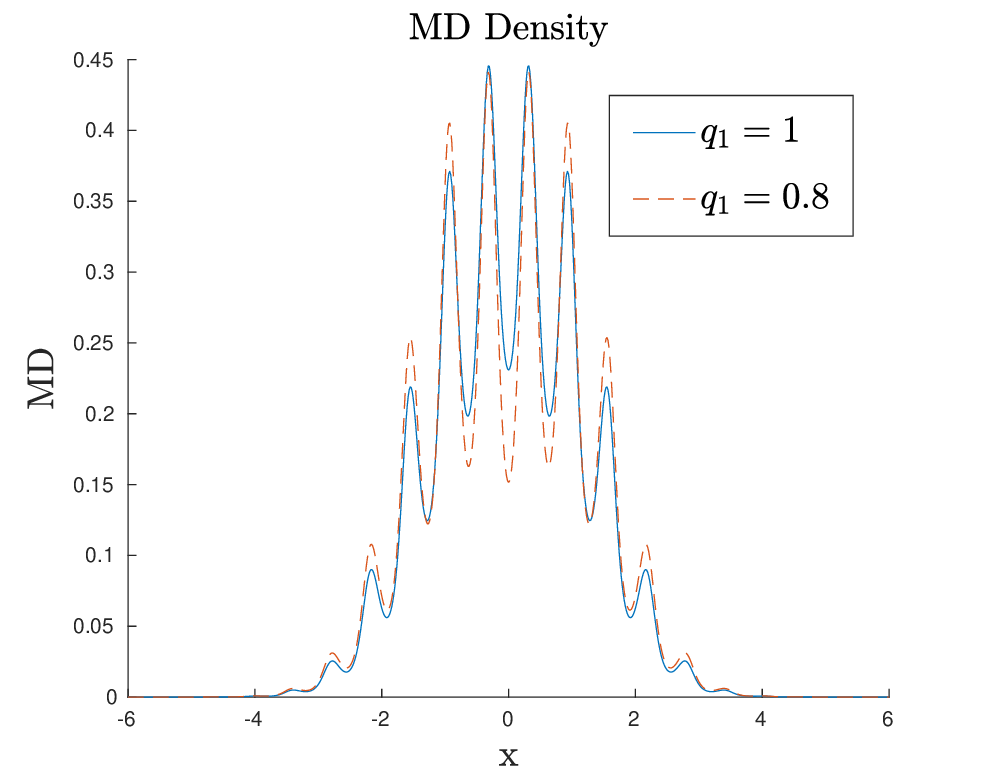}
    \caption{}
    \label{image:twoMDdensities}
  \end{subfigure}
  %\qquad
  \begin{subfigure}{0.49\textwidth}
     \includegraphics[width=0.9\linewidth]{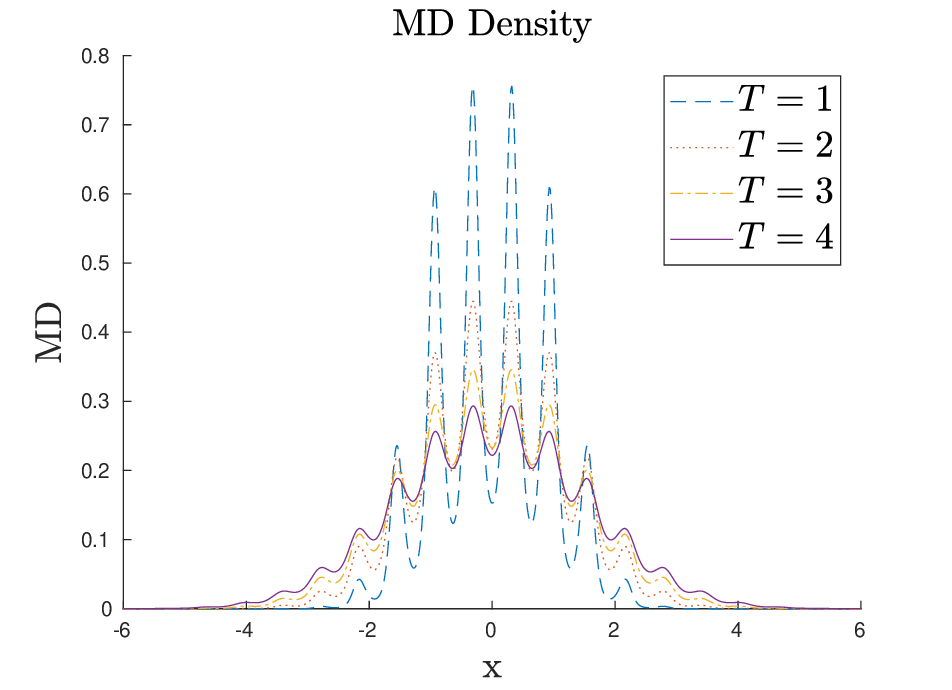}
     \caption{}
     \label{fig:mddenT}
  \end{subfigure}
  \caption{(a) The classical density $\mu_\mathrm{cl}$ using $q_1=1$, corresponding to molecular dynamics on the canonical ground state at $T=1.9947$ and $M=1000$ compared to the classical density with $q_1=0.8$, from \eqref{cl_density}.
   (b) Molecular dynamics density \eqref{cl_density} for different temperatures $T$ and $M=1000$ 
    }
  \label{fig:eigenvalues-1d-3s}
\end{figure}
%--------------------------------------------

\begin{figure}
\begin{subfigure}{0.49\textwidth}
    \includegraphics[width=0.9\linewidth]{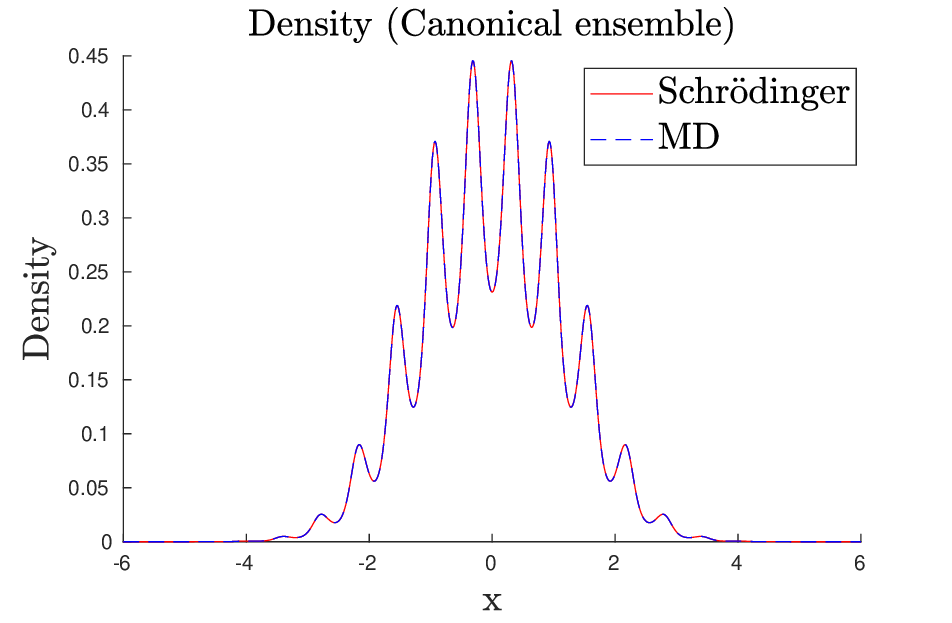}
    \caption{}
    \label{image:density}
  \end{subfigure}
  \begin{subfigure}{0.49\textwidth}
     \includegraphics[width=0.9\linewidth]{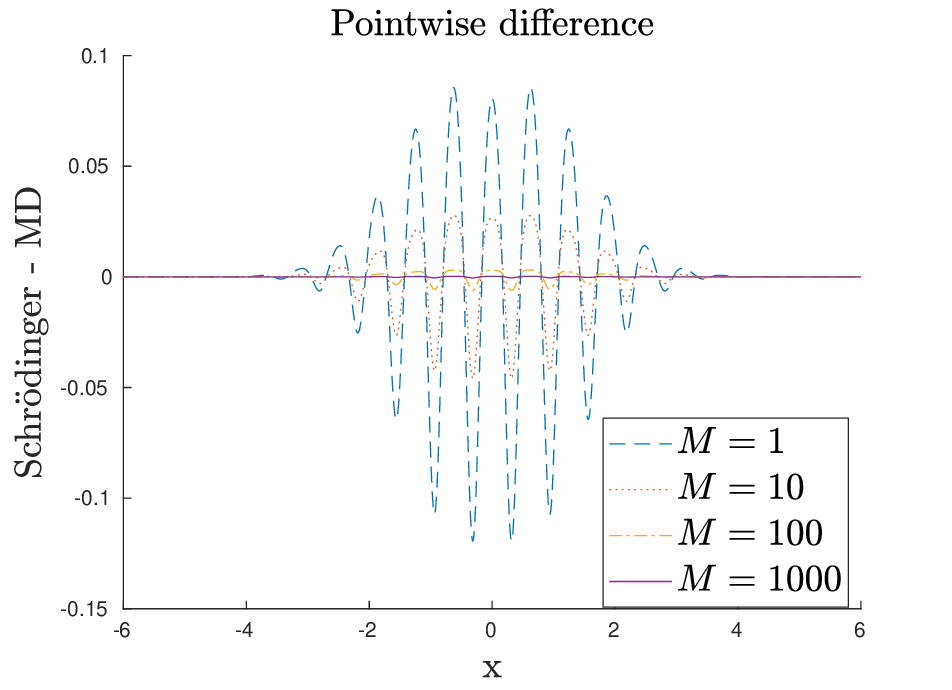}
     \caption{}
     \label{fig:25b}
  \end{subfigure}
\caption{(a) The quantum and classical densities in \eqref{qc_density} and \eqref{cl_density} plotted for the same parameters $M=1000$ and $T=1.9947$. 
(b) Difference of the classical and quantum pointwise densities, given in
\eqref{cl_density} and \eqref{qc_density},  for different mass ratio $M$ at $T=1.9947$ (right).}\label{fig:25}
\end{figure}

% We show in a numerical simulation that the analog of the quantum observable 
% \begin{equation}\label{quantum_observable}%
% \langle \OPER{A}\rangle = \frac{\int_\rset\sum_n  g(x) |\Phi_n(x)|^2  \EXP{-\beta E_n}\Rd x}{\int_{\mathbb{R}}\sum_n |\Phi_n( x')|^2 \EXP{-\beta E_n}\Rd x'} \COMMA
% \end{equation}
% and the weighted average of the classical observables 
% \begin{equation}\label{q_k_exp}
% \langle g\rangle = \sum_{k = 1}^{2}q_k\int_\rset g(x) \frac{\EXP{-\beta \lambda_{k}(x)} \Rd x}{\int_{\mathbb{R}} \EXP{-\beta \lambda_{k}(x')} \Rd x'}
% \end{equation}
% are close for large mass $M$.

% \begin{equation*}
% %q_k = \frac{\int_{\mathbb{R}} \EXP{-\beta \lambda_{k}(x)} \Rd x}{\sum_{j = 1}^{2}\int_{\mathbb{R}} \EXP{-\beta \lambda_{j}(x)} \Rd x}\PERIOD
% q_k = \frac{\int_{\mathbb{R}} \EXP{-\beta \lambda_{k}(x)} \Rd x}{\int_{\mathbb{R}} \EXP{-\beta \lambda_{1}(x)} \Rd x + \int_{\mathbb{R}} \EXP{-\beta \lambda_{2}(x)} \Rd x}\COMMA \;\;\; k=1,2\PERIOD
% \end{equation*}

Results from numerical simulations demonstrate the error analysis of Section~\ref{sec_corr}, in particular that the $L^1$ as well as $L^\infty$ difference between $\mu_\mathrm{qc}$ and $\mu_\mathrm{cl}$ decreases as $\BIGO(M^{-1})$. 
We note the striking agreement between the Schr\"odinger and molecular dynamics equilibrium densities for $M=1000$ in Figure~\ref{image:density}.
The pointwise difference for $M=1,10,100,1000$ is given in Figure~\ref{fig:25b}. 
Figure~\ref{fig:mddenT} shows the increasing variation of the molecular dynamics density as the temperature decreases. 
We conclude that although, the densities vary substantially for different temperature, the molecular dynamics density approximates the canonical Schr\"odinger density well, with the error $\mathcal O(M^{-1})$ shown also in Figure~\ref{fig:1D-delta-pE}. Molecular dynamics approximation of quantum observables in the micro canonical ensemble typically have larger errors, see \cite{bhkpss}.

Figure~\ref{fig:1D-delta-pE} depicts the $\BIGO{(M^{-1})}$ decrease of the error for different values of the spectral gap between the ground state $\lambda_1(x)$ and the next state $\lambda_2(x)$ at a fixed temperature.
 Both errors are inverse proportional to the mass ratio $M$. The temperature $T$ and the spectral gap (controlled by the parameter $\delta$) are chosen so that the weight $q_1 = 0.8$, in other words the excited state contributes non-trivially to the average. In neither of the norms, 
 in Figure~\ref{fig:1D-delta-pE}, can we see a $\delta$ dependency in the error while in our derivation the constant in $\BIGO(M^{-1})$ of \eqref{eqn:analyticalCorrelation}  depends on the norm of the derivatives up to order five 
 of the eigenvectors and eigenvalues of $V$. 
 In this example the eigenvector derivative of the order $n$ is $\BIGO\left(\delta^{-n}\right)$.

\begin{figure}%[h!]
\begin{subfigure}{0.49\textwidth}
  \includegraphics[width=0.9\linewidth]{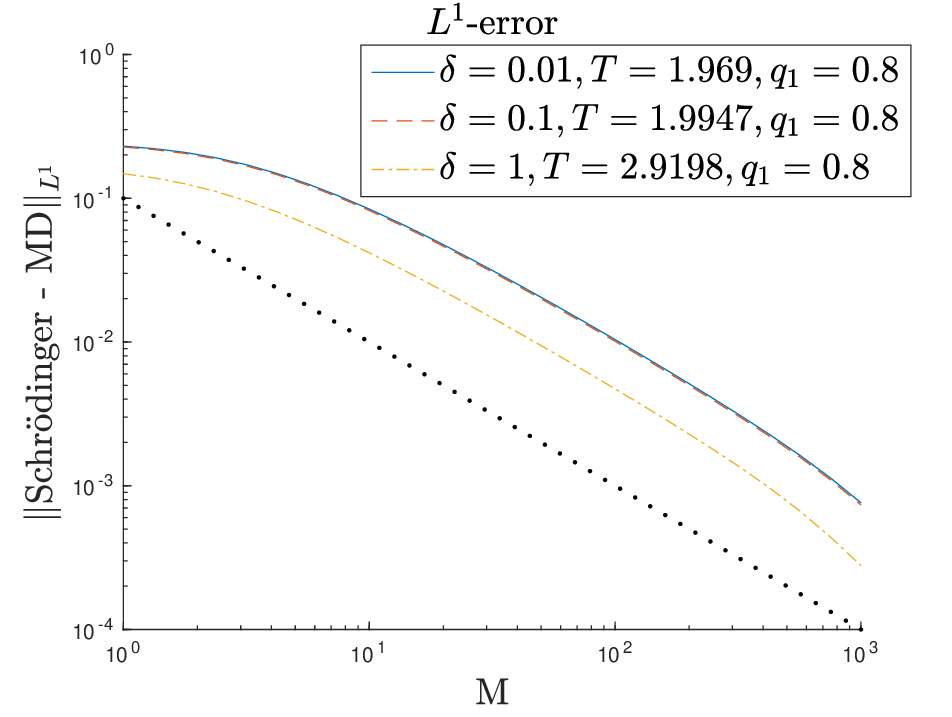}
  \caption{$L^1$ error}
\end{subfigure}
\begin{subfigure}{0.49\textwidth}
  \includegraphics[width=0.9\linewidth]{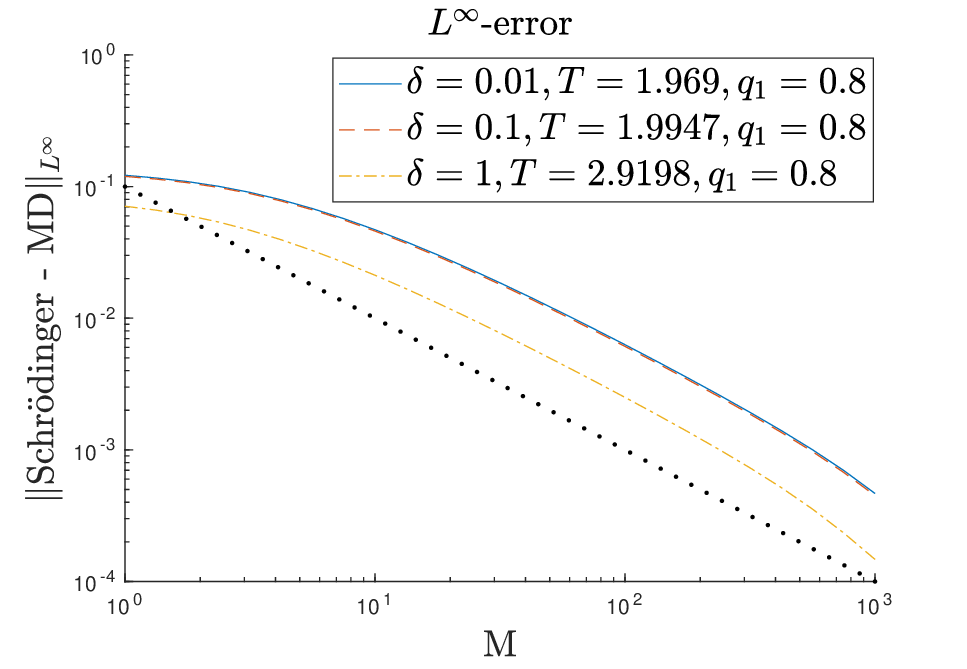}
  \caption{$L^\infty$ error}
\end{subfigure}
\caption{Dependence on the mass $M$ of the error between quantum and molecular dynamics densities $\mu_\mathrm{qc}$ and $\mu_{\mathrm{cl}}$ respectively, shown in log-log scale.  
%The error in the molecular dynamics density approximation, compared to the Schr\"odinger density approximation, is proportional to $M^{-1}$ %inverse proportional to the mass ratio $M$ 
%for all $\delta$. 
The dotted lines show the reference slope $-1$.
See Appendix~\ref{numerical_tests} for details on the choice of the parameter $\delta$ and the temperature $T$.
%The quantities $\delta$ and $T$ are chosen so that $q_1 = 0.8$.
}\label{fig:1D-delta-pE}
\end{figure}

%**
%**
%**
%\subsection{A model for time correlated observables}\label{corrMod}
%\medskip
%\noindent{\it Time-correlated observables.}
\subsubsection{Time-correlated observables}
Next we demonstrate that the derived method and analysis is also applicable to observables depending on time correlations. In particular, we test the position observable $x_0$ at time $t=0$ with the position observable $x_\tau$ at time $t=\tau$. The time evolution of the position observable operator is given, in Heisenberg representation, as 
%defined as
\begin{equation*}
\OPER{x}_{\tau} := \EXP{\IU\tau\sqrt{M}\OPER{H}}\OPER{x}_0 \EXP{-\IU\tau\sqrt{M}\OPER{H}}.
\end{equation*}
Applied to the special case in this example, with $\hat H$ as in Section \ref{sec_equilib1}, 
Theorems \ref{gibbs_corr_thm_analytic}, \ref{gibbs_corr_thm} and \ref{gibbs_corr_thm_unif} show that
%The error estimate to be tested is 
%(see  \eqref{G_corr_unif}) %, which in this model is given by 
\begin{equation}\label{eqn:analyticalCorrelation}
\begin{aligned}
&\frac{\frac{1}{2}\TR\big(\OPER{x}_{\tau}(\OPER{x}_{0}\EXP{-\beta\OPER{H}}+\EXP{-\beta\OPER{H}}\OPER{x}_{0})\big)}{\TR(\EXP{-\beta\OPER{H}})}\\ &= \sum_{j = 1}^{2}\int_{\mathbb{R}^2}q_j x_{\tau}^{j}(z_0)x_{0}^{j}(z_0)\frac{\EXP{-\beta (\frac{|p_{0}|^2}{2} + \lambda_j(x_0))}}{\int_{\mathbb{R}^2}\EXP{-\beta (\frac{|p|^2}{2} + \lambda_j(x))}\Rd z}\Rd z_0 + \BIGO(M^{-1}),
\end{aligned}
\end{equation}
where $z_{\tau}^{j} = (x_{\tau}^{j}, p_{\tau}^{j})$, $j=1,2$ solve the Hamiltonian dynamics
\begin{equation}\label{eqn:HamiltonDynamics}
\begin{cases}
\dot{x}_{\tau}^{j} = p_{\tau}^{j} \\
\dot{p}_{\tau}^{j} = -\frac{\Rd}{\Rd x_{\tau}^{j}}\lambda_j(x_{\tau}^{j}),\;\;\;\; \tau > 0
\end{cases}
\end{equation}
with the initial condition $z_{0}^{j}=(x_0,p_0)=z_{0}$, and $\lambda_j$ as in Figure \ref{image:examplePotentialMatrix}, precisely defined in Appendix~\ref{sec:equilibriumposobs}. 

\begin{figure}[htbp]
 \begin{subfigure}{0.49\textwidth}
    \includegraphics[width=0.9\linewidth]{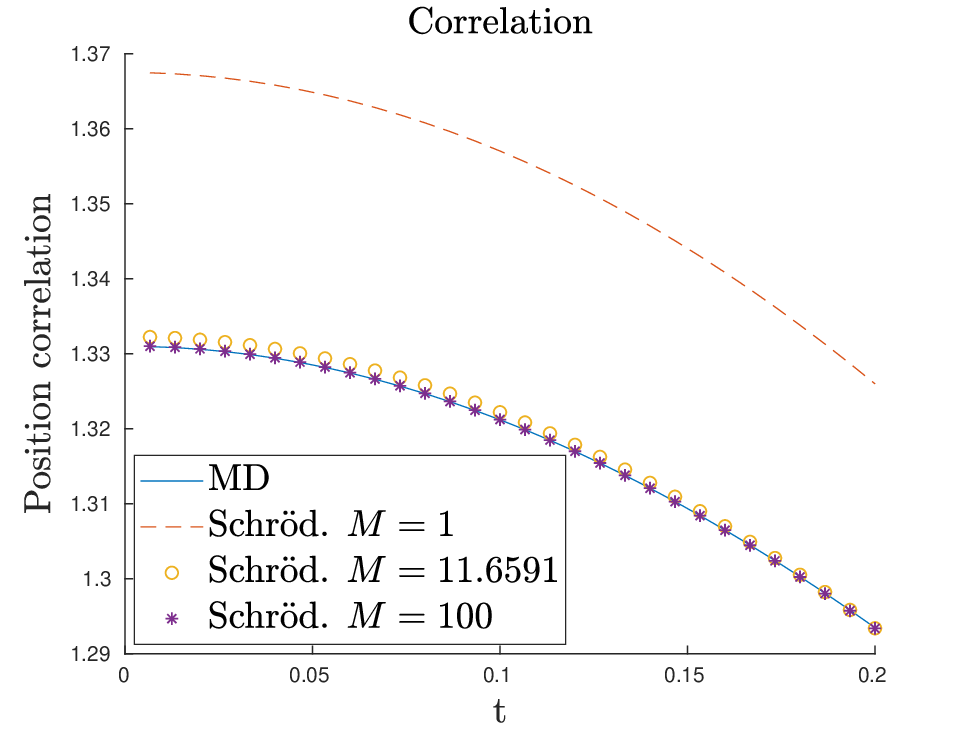}
    \caption{Correlation observable}
    \label{fig:corr3m}
 \end{subfigure}
  %\qquad
  \begin{subfigure}{0.49\textwidth}
    \includegraphics[width=0.9\linewidth]{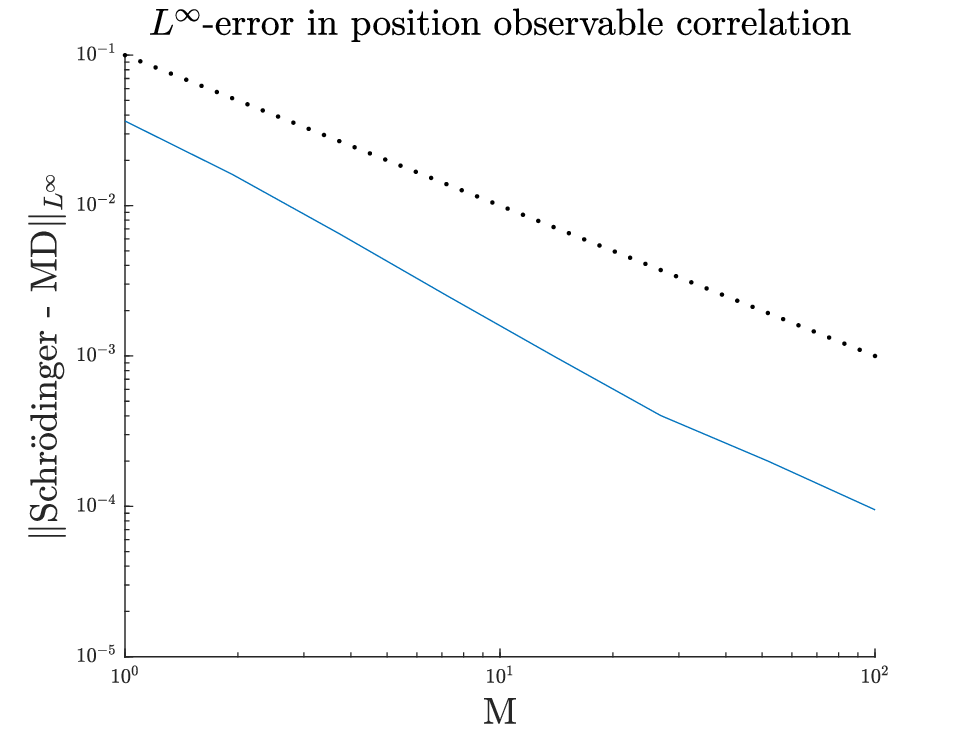}
   \caption{Error in correlation observable}
   \label{fig:corrError}
  \end{subfigure}
  \caption{(a) Molecular dynamics position correlation observable shown together with its Schr\"odinger counterpart. for three values of the mass ratio $M$. (b) The $L^{\infty}$-error in the molecular dynamics position correlation observable approximation for $\tau=0.2$, shown in log-log scale. The dotted line shows the slope $-1$ for reference. %compared to the Schr\"odinger position observable correlation approximation, is inverse proportional to the mass ratio $M$.
  }
  \label{correlation_figure}
\end{figure}
With the same values for $T$ and $\delta$ as in the case of equilibrium observables we show the position correlation observable for 
different correlation times $\tau$, and three different mass ratios $M$, in
Figure~\ref{fig:corr3m}. 
%For larger mass ratios $M$ the molecular dynamics position  correlation gives a better approximation of the Schr\"odinger position correlation observable.

%The $L^{\infty}$-error of the molecular dynamics position  correlation observable compared to the Schr\"odinger position  correlation observable, with respect to the mass ratio $M$ for correlation time $\tau = 0.2$ is illustrated in Figure \ref{fig:corrError}. The largest mass ratio we compute is $M = 100$. 
Figure~\ref{fig:corrError} demonstrates that the $L^{\infty}$-error is inverse proportional to the mass ratio $M$ in agreement with the error analysis of Section~\ref{sec_corr}.

% \begin{remark}\label{ny_md_remark}
% The leading term in right hand side in \eqref{G_corr_unif}, which is used for molecular dynamics simulations,
% can be written
% \begin{equation}\label{ny_md_form}
%  \frac{\sum_{j=1}^d \int_{\rset^{2N}} \bA_{jj}\big(0,z^j_\tau (z_0)\big)\bB_{jj}(z_0) \EXP{-\beta \bar{H}_{jj}(z_0)} \Rd z_0}{
% \sum_{j=1}^d\int_{\rset^{2N}}\EXP{-\beta \bar{H}_{jj}(z_0)} \Rd z_0 }\PERIOD
% \end{equation}
% Numerical approximations of \eqref{ny_md_form} with $d\gg 1$ may seem much more demanding
% than  to determine canonical molecular dynamics 
% observables using only the ground state with $d=1$, which often already have high computational cost. 
% However preliminary test with approximations based
% on \eqref{ny_md_form}  and the "double" Metropolis method with Monte Carlo sampling of both the integral over $z_0$ in phase space and the sum over the electron states $j$ show  promising results.
% \end{remark}

%%%%%%%%%%%%%%%%%%%%%%%%%%%%%%%%%%%%%%%%%%%%%%%%%%%%%%%%%%%%%%%%%%%%%%%%%%%%
%
% SECTION
%
%%%%%%%%%%%%%%%%%%%%%%%%%%%%%%%%%%%%%%%%%%%%%%%%%%%%%%%%%%%%%%%%%%%%%%%%%%%%

\section{Time correlated observables}\label{sec_corr}
In this section we study canonical quantum observables for correlations in time, namely
\[
\TR( \OPER{A}_\tau\OPER{B}_0 {\EXP{-\beta \OPER{H}}})= 
\sum_{n=1}^\infty \langle \Phi_n, \OPER{A}_\tau\OPER{B}_0 {\EXP{-\beta \OPER{H}}}\Phi_n\rangle\COMMA
\]
and the  related variant 
$\TR\big( \OPER{A}_\tau(\OPER{B}_0 {\EXP{-\beta \OPER{H}}}+ {\EXP{-\beta \OPER{H}}}\OPER{B}_0)\big)$,
based on the time dependent operator $\OPER{A}_\tau$, which for $\tau\in \rset$ is defined by
\begin{equation}\label{a_t_def}
\OPER{A}_\tau:=\EXP{\IU\tau M^{1/2}\hat H}\OPER{ A}_0\EXP{-\IU\tau M^{1/2}\hat H}\COMMA
\end{equation}
with a  matrix valued symbol $A_0:\rset^{2N} \rightarrow \CSP$ in the Schwartz class.

\noindent{\it Example.}
For instance, the observable for the diffusion constant 
\[
\frac{1}{6\tau}\frac{3}{N}\sum_{k=1}^{N/3} |x_k(\tau)-x_k(0)|^2=\frac{1}{2N\tau}\big(
|x({\tau})|^2 +|x(0)|^2-2x(\tau)\cdot x(0)\big)
\]
uses the time-correlation $\hat x(\tau)\cdot \hat x(0)$ where  $\OPER{A}_\tau=\hat x_\tau
\rm I$ and $\OPER{B}_0=\hat x_0\rm I$ and
\[
\hat x_\tau\cdot \hat x_0=\sum_{n=1}^{N/3}\sum_{j=1}^3\EXP{\IU\tau M^{1/2}\hat H}\hat x_{n_j} \EXP{-\IU\tau M^{1/2}\hat H}  \hat x_{n_j}\PERIOD
\]
\medskip

To analyze the time evolution of $\OPER{A}_\tau$ we use transformed variables:
assume 
that $\Psi:\rset^{N}\rightarrow \CSP$ and $\Psi(x)$ is an orthogonal matrix  with the Hermitian transpose
 $\Psi^*(x)$ 
  and define
$ {\bA}:[0,\infty)\times\rset^{2N}\rightarrow \CSP$ 
by
\begin{equation}\label{tilde_a_t}
\OPERW{\bA(\tau,z)} = \hat \Psi^*(x) \OPER{ A}_\tau\hat\Psi(x)\COMMA\quad \tau\ge 0\PERIOD
\end{equation}
The matrix $\Psi(x)$ we will use is defined precisely below and it approximates the matrix $\tPsi(x)$ that diagonalizes the potential matrix $V(x)$ in the sense that $\|\Psi(x)-\tPsi(x)\|=\BIGO(M^{-1})$.
We also assume that %
\begin{equation*}\label{tildeAB}
\begin{split}
\OPER{B_0}&=\hat \Psi \OPERW{\bB_0}\hat\Psi^*\COMMA\\
\end{split}
\end{equation*}
and we restrict our study to the case where the $d\times d$ matrix  symbols $\bA(0,\cdot)=\bA_0$ and $\bB_0$ are diagonal, as motivated in \eqref{diag_A}.
Let $\alpha$ be any complex number and define for $t\in\mathbb R$ the exponential 
\[
\OPERW{\by_t}:=\hat \Psi^* \EXP{t\alpha\hat H}\hat\Psi.
\]
Differentiation shows that
\begin{equation}\label{eq:y_t_diffeq}
\partial_t \OPERW{\by_t} = \alpha \hat \Psi^*\hat H\hat\Psi \hat \Psi^* \EXP{t\alpha\hat H}\hat\Psi
= \alpha \hat \Psi^*\hat H\hat\Psi\OPERW{\by_t}
\end{equation}
and we conclude that 
\begin{equation}\label{y_t_exp}
\OPERW{\by_t} =\EXP{t\alpha \hat \Psi^*\hat H\hat\Psi}\PERIOD
\end{equation}
The composition rule 
\begin{equation}\label{comp_rule}
\begin{split}
{A\MP B}(x,p) &:= \EXP{\frac{\IU}{2M^{1/2}} (\nabla_{x'}\cdot \nabla_{p} -\nabla_{x}\cdot \nabla_{p'})}A(x,p)B(x',p')\Big|_{(x,p)=(x',p')}\COMMA\\
\OPER{A}\OPER{B} &=\OPERW{A\MP B} 
\end{split}
\end{equation}
 see Theorems 4.11 and 4.12 in \cite{zworski}, implies
\begin{equation}\label{psi_H_psi}
\hat \Psi^*\hat H\hat\Psi= \WEYL{\Psi^*\MP  H\MP \Psi}\PERIOD
\end{equation}
A power expansion of the exponential  in \eqref{comp_rule}, see \cite{zworski}, yields the semiclassical expansion
\[\sum_{n=0}^m\frac{1}{n!} 
(\IU\frac{\nabla_{x'}\cdot \nabla_{p} -\nabla_{x}\cdot \nabla_{p'}}{2M^{1/2}})^n A(x,p)B(x',p')\Big|_{(x,p)=(x',p')} + \mathcal O(M^{-(m+1)})\, .\] This expansion is defined for symbols in the Schwartz class and Lemma \ref{moyal_lemma} provides an extension to a larger set of symbols. 
Due to the special from of the Hamiltonian studied here, namely $H(x,p)=\frac{|p|^2}{2}\Id + V(x)$, all 
terms of the semiclassical expansion for \eqref{psi_H_psi}  with higher powers than $M^{-1}$ drop out and we obtain a simple sum of only two terms:
\begin{lemma}\label{H_bar} Any  orthogonal two times differentiable matrix valued mapping 
$\Psi:\mathbb R^N\to \CSP$ satisfies
%There holds
\[
 \Psi^*\MP  H\MP \Psi(x,p)
=\Psi^*(x)H(x,p)\Psi(x) + \frac{1}{4M}\nabla\Psi^*(x)\cdot\nabla\Psi(x)\PERIOD
\]
\end{lemma}
\begin{proof}
The composition in \eqref{comp_rule} shows that
 \[
 \begin{split}
& \Psi^*\MP  H\MP \Psi(x,p)\\
 &=\Psi^*(x)\MP  \big(H(x,p)\Psi(x) +\frac{\IU M^{-1/2}}{2} p\cdot \nabla\Psi(x) -\frac{M^{-1}}{4}\Delta\Psi(x)\big)\\
 &=\Psi^*(x)H(x,p)\Psi(x)+\frac{\IU M^{-1/2}}{2}p\cdot \nabla\Psi^*(x)\Psi(x)-\frac{M^{-1}}{4}\Delta\Psi^*(x)\Psi(x)\\
&\quad +\frac{\IU}{2M^{1/2}} \Psi^*(x) p\cdot \nabla\Psi(x)
 -\frac{1}{4M}\nabla\Psi^*(x)\cdot\nabla\Psi(x)-\frac{1}{4M}\Psi^*(x)\Delta\Psi(x)\\
  \end{split}
 \]
 which by the property $\Psi^*\Psi=\Id$ can be written
 \[
 \begin{split}
 \Psi^*\MP  H\MP \Psi(x,p)&=\Psi^*(x)H(x,p)\Psi(x)+\frac{\IU}{2M^{1/2}}p\cdot\nabla\big(\Psi^*(x)\Psi(x)\big)\\
&\qquad  -\frac{1}{4M}\Big(\Delta \big(\Psi^*(x)\Psi(x)\big) -\nabla\Psi^*(x)\cdot\nabla\Psi(x)\Big)\\
  &=\Psi^*(x)H(x,p)\Psi(x) + \frac{1}{4M}\nabla\Psi^*(x)\cdot\nabla\Psi(x)\PERIOD
 \end{split}
 \]
\end{proof}

Define
\[
\bH(x,p):= \Psi^*\MP  H\MP \Psi=\Psi^*(x)H(x,p)\Psi(x) + \frac{1}{4M}\nabla\Psi^*(x)\cdot\nabla\Psi(x)\COMMA
\]
which by \eqref{eq:y_t_diffeq} and \eqref{y_t_exp} implies $\hat\Psi^*\EXP{t\alpha\hat H}\hat\Psi=\EXP{t\alpha\OPER{\bH}}$
and we obtain by \eqref{a_t_def} and \eqref{tilde_a_t} that
\[
\partial_t\OPER{\bA}_t=\IU  M^{1/2}[\OPER{\bH},\OPER{\bA}_t]\PERIOD
\]
The next step is to determine $\Psi$ so that
$
\bH=\Psi^*\MP H\MP \Psi
$
is diagonal (or approximately diagonal in \eqref{diag_comp} ), in order to make $\bH\bA_t- \bA_t\bH$ small since it appears 
in the expansion of the compositions in 
\[
\partial_t {\bA}_t=\IU M^{1/2}( \bH\MP  \bA_t- {\bA}_t\MP  \bar  H)\PERIOD
\]
To have $\bH\bA_t- \bA_t\bH$ small then also requires $\bA_t$ to be diagonal (or almost diagonal).
In the case when $\bH$ is diagonal, the quantization  $\OPER{\bH}$ is diagonal
and then $\OPER{\bA}$ remains diagonal if it initially is diagonal, since then
\begin{equation}\label{diag_A}
\begin{split}
\frac{{\mathrm d}}{{\mathrm d}t} \OPER{\bA}_{jk}(t)
& = \IU M^{1/2}\big(\bH_{jj}\OPER{\bA}_{jk}(t) -
\OPER{\bA}_{jk}(t)\bH_{kk}\big)=0\COMMA \mbox{ for } j\ne k\PERIOD\\
\end{split}
\end{equation}
Consequently we restrict our study to observables where $\bA_0=\bA(0)$ are diagonal.

We have
\[
\bH(x,p)= \Psi^*(x)\Big(\frac{|p|^2}{2} \Id + V(x) + \frac{1}{4M} \Psi(x)\nabla\Psi^*(x)
\cdot \nabla\Psi(x)\Psi^*(x)\Big)\Psi(x)\PERIOD
\]
Therefore the aim is to choose the orthogonal matrix $\Psi$  so that it is a solution or an approximate solution to the non linear eigenvalue problem  %\todo{solution or an approximate? only approximate?}
\begin{equation}\label{eq:nonlineareigenvalueproblem}
\big(V + \frac{1}{4M} \Psi\nabla\Psi^*\cdot \nabla\Psi\Psi^*\big)\Psi= \Psi\bLambda
\end{equation}
where $\bLambda$ is diagonal with $\bLambda_{jj}=:\lambda_j$. Such a transformation $\Psi$ is an $\BIGO(M^{-1})$ perturbation of the eigenvectors to $V(x)$,
provided the eigenvalues $\tlambda_j(x)$ of $V(x)$ do not cross and $M$ is sufficiently large. Section \ref{perturb}  shows that \eqref{eq:nonlineareigenvalueproblem} 
%\todo{remove "has a unique solution, if the potential $V$ is real analytic,
 %and that the solution can be approximated by an asymptotic expansion." and add}
has an approximate solution to any accuracy
\begin{equation}\label{eigen_nonlin2}
\big(V + \frac{1}{4M} \Psi\nabla\Psi^*\cdot \nabla\Psi\Psi^*\big)\Psi= \Psi\Lambda +\mathcal O(M^{-\kappa})\COMMA
\end{equation}
with $\Lambda$ diagonal and any $\kappa>0$.   
% based on the following iteration. Let $\Psi_0=[\psi_1 \ \psi_2\ \ldots\ \psi_d]$ be the matrix of eigenvectors to $V$. The approximate eigenvectors $\Psi$ are the normalized eigenvectors of
%\[V+ \frac{1}{4M} \Psi_0\nabla\Psi_0^*\cdot\nabla\Psi_0\Psi_0^*\]
%and $\bar\Lambda$ are the corresponding eigenvalues. Since regular perturbation theory shows $\|\Psi-\Psi_0\|_{C^1(\rset^{3N})}=\mathcal O(M^{-1})$ 
%we obtain \[\bar{\bar H}(x,p)=\bar H(x,p)+r_0(x)\] where the remainder $d\times d$ matrix $r_0(x)$ is small 
%\begin{equation}\label{r_0_est}
%\|r_0\|_{L^\infty(\rset^{N})}=\mathcal O(M^{-2})\COMMA
%\end{equation}
%with the diagonal matrix
%\begin{equation}\label{bar_H}
%\bar H(x,p):= \Big(\frac{|p|^2}{2}+V_b(x)\Big) {\rm I} +\bar\Lambda(x)\, .
%\end{equation}
%**
%**
%**
%\todo{Added in title: Approximate solution, instead of Solution}
\subsection{Approximate solution of the nonlinear eigenvalue problem}\label{perturb} 

The following fixed point iterations yields an approximate solution \eqref{eigen_nonlin2} to the nonlinear eigenvalue problem \eqref{eq:nonlineareigenvalueproblem}.
 Let $\mathcal S(C)$ denote an orthogonal matrix of eigenvectors to a $d\times d$ Hermitian matrix $C$, with the columns in the order of the eigenvalues, so that e.g. $\mathcal S(V(x))=\left[\tPsi_1(x) \ \tPsi_2(x) \ \ldots \ \tPsi_d(x)\right]$ as in \eqref{v_lambda_psi}. Let $\Psi[1]=\mathcal S(V(x))$ and define
\begin{equation}\label{psi_kappa}
\Psi[j+1]=\mathcal S\big((V + \frac{1}{4M} \Psi[j]\nabla\Psi^*[j]\cdot \nabla\Psi[j]\Psi^*[j])(x)\big)\PERIOD
\end{equation}
Assume that the eigenvalues $\tlambda_j$ of $V$ are distinct and $V\in [\mathcal C^m(\rset^{N})]^{d^2}$. 
The regular perturbation theory of real symmetric matrices in \cite{kato}
%\todo{remove: "in  .. " add in \cite{kato}}
shows that for sufficiently large $M$  and any $k\le m$ %
\begin{equation}\label{induct1}
\max_{|\gamma|\le k-1} \|\partial^\gamma(\Psi[2]-\Psi[1])\|_{L^\infty}=\BIGO_k(M^{-1})\COMMA\end{equation}
where $\BIGO_k$ denotes an order relation that is allowed to depend on $k$.
Then induction in $j$, for $j\le k$, shows that
\begin{equation}\label{induction}
\max_{|\gamma|\le k-j} \|\partial^\gamma(\Psi[j+1]-\Psi[j])\|_{L^\infty}\le C_{k,j} M^{-j}\COMMA
\end{equation}
as follows: we have 
\[
\begin{split}
&V + \frac{1}{4M} \Psi[j+1]\nabla\Psi^*[j+1]\cdot \nabla\Psi[j+1]\Psi^*[j+1] \\
&= 
V + \frac{1}{4M} \Psi[j]\nabla\Psi^*[j]\cdot \nabla\Psi[j]\Psi^*[j] + \BIGO_{1}(M^{-(j+1)}) 
\end{split}
\]
so that regular perturbation theory implies that the left hand side is diagonalized by
\begin{equation}\label{induct2}
\Psi[j+2]=\Psi[j+1] +\BIGO(M^{-(j+1)})
\end{equation}
and there is a constant $K_{k,j}$ such that
\[C_{k,j+1}\le K_{k,j} C_{k,j}\PERIOD\]

The choice $\Psi=\Psi[\kappa]$  implies
\begin{equation}\label{diag_comp}
\begin{split}
&\bH(x,p) =  \Psi^*[\kappa]\MP  H\MP  \Psi[\kappa]\\
&=
\underbrace{\Psi^*[\kappa]\big(\frac{|p|^2}{2} \Id + V + \frac{1}{4M} \Psi[\kappa-1]\nabla\Psi^*[\kappa-1]\cdot \nabla\Psi[\kappa-1]\Psi^*[\kappa-1]\big)\Psi[\kappa]}_{=:\frac{|p|^2}{2} \Id  +  \bLambda(x)}\\
&\quad +
\frac{1}{4M}\Psi^*[\kappa]\Big(
  \Psi[\kappa]\nabla\Psi^*[\kappa]\cdot \nabla\Psi[\kappa]\Psi^*[\kappa]\\
&\qquad  -\Psi[\kappa-1]\nabla\Psi^*[  \kappa-1]\cdot \nabla\Psi[\kappa-1]\Psi^*[\kappa-1]
\Big)\Psi[\kappa]\\
&=:\underbrace{\frac{|p|^2}{2} \Id  +  \bLambda(x)}_{=:\bH_0} + \underbrace{r_0}_{\BIGO(M^{-\kappa})}\COMMA
\end{split}
\end{equation}
 where \[
 \bLambda(x)=\tLambda(x)+\BIGO(M^{-1})\] is diagonal with $\tLambda_{jj}=\tlambda_j$
 as in \eqref{v_lambda_psi} and $r_0$ is the term with the factor $\frac{1}{4M}\Psi^*[\kappa](\ldots )\Psi[\kappa]$,
 which only depends on the $x$-coordinate.
 Here $\kappa\le m$, where $V\in [\mathcal C^m(\rset^{N})]^{d^2}$ and we remind that
 \[
 \bH_0 = \frac{|p|^2}{2} \Id  +  \bLambda(x)=\underbrace{\frac{|p|^2}{2} \Id  +  \tLambda(x)}_{=\tH} + \BIGO(M^{-1}) .
 \]
 We see that
\[
\partial_t\OPER{\bA}_t  = \IU M^{1/2}[\OPER{\bH}, \OPER{\bA}_t]
= \IU M^{1/2}[\OPER{\bH}_0, \OPER{\bA}_t] + \IU M^{1/2}[\OPER{ r}_0, \OPER{\bA}_t]
\]
consists of a diagonal part and a small coupling $\BIGO(M^{-\kappa+1/2})$ part.
This asymptotic recursion for $\Psi[j]$ is typically not convergent,
therefore the error term $r_0=\BIGO(M^{-\kappa})$ may be large if $\kappa$ is large,
unless $M$ is very large, which is a reason to avoid  large values of $\kappa$.
%**
%**
%**
\subsection{Approximation in time of observables}\label{classical_observ}
 The transformation to diagonal $\bH$ and $\bA_0$ yields restrictions to the set of observables $A_0$
 that we can analyse. 
 The aim of this section is to describe  these restrictions and 
 present the time dependent molecular dynamics observable that will be used to approximate $\bA_t$.
 
 \subsubsection{Time dependent molecular dynamics} The symbol $\brA:[0,\infty)\times\rset^{2N}\rightarrow \CSP$
that satisfies
 \begin{equation}\label{poissonbracket}
 \partial_t \brA_t=\{\bH_0,\brA_t\}\COMMA\quad t>0,\ \brA_0=\bA_0\COMMA
 \end{equation}
 approximates $\bA_t$, as we shall see in Lemma \ref{lemma12} below.
By writing the Poisson bracket in the right hand side explicitly,  we see that 
equation \eqref{poissonbracket} is a scalar linear hyperbolic partial differential equation for each component:
 \begin{equation}\label{hyp_pde}
 \partial_t \brA_{jj}(t,x,p) = (\nabla_p\cdot \nabla_{x'}-\nabla_x\cdot \nabla_{p'})\big((\bH_0)_{jj}(x,p)\brA_{jj}(t,x',p')\big)\big|_{(x',p')=(x,p)} \PERIOD
 \end{equation}
 This partial differential equation can be solved by the method of characteristics,
 which generates molecular dynamics paths as follows.
Let $\brA_{jj}$ be constant along  the characteristic
\begin{equation}\label{a_char}
\brA_{jj}(t,z_0) := \brA_{jj}\big(0,z^j_t(z_0)\big)
\end{equation}
where the characteristic path $z^j_t=(x_t,p_t)$ solves the Hamiltonian system
\begin{equation}\label{HS}
\begin{split}
\dot x_t &= p_t\\
\dot p_t &= -\nabla \lambda_j(x_t), \quad t>0
\end{split}
\end{equation}
with initial data $(x_0,p_0)=z_0$ and the Hamiltonian
$(\bH_0(x,p))_{jj}=(\frac{|p|^2}{2}\Id + \bLambda(x))_{jj}=\frac{|p|^2}{2}+\lambda_j(x)$. For each $j$  we have
\begin{equation}\label{A_brav_char}
\underbrace{\partial_t\brA_{jj}\big(0,z^j_t(z_0)\big)}_{=\partial_s\brA_{jj}(0,z^j_t(z_s))\big|_{s=0}}  =\{(\bH_0)_{jj}(z_0),\brA_{jj}(0,z^j_t(z_0))\} \COMMA
\end{equation}
where the equality in the left hand side holds
because the Hamiltonian system is autonomous
and the Poisson bracket in the right hand side is obtained from the chain rule differentiation at $s=0$. We conclude that \eqref{poissonbracket} holds for $\brA$ constructed by \eqref{a_char}. %

%The set of  observables $A_0$ that can be obtained  is not easy to characterize, see Section \ref{A_set},
%but includes e.g. observables where $A_0(x)= \bA_0(x)=a(x)\Id$ 
%for any scalar function $a:\rset^N\to \rset$
%and similarly for $B_0$.
 
\subsubsection{The set of allowed observables}\label{A_set}

Our approximation of canonical quantum observables  becomes implicit in the following sense. Given a Hamiltonian symbol $H$ we can determine electron states $\Psi$ so that $\bH=\Psi^*\MP  H\MP  \Psi$ is diagonal.
 For any diagonal initial symbols $\bA_0$ and $\bB_0$, we will for instance show that 
 the molecular dynamics observable 
 \[\int_{\rset^{2N}} \TR(\brA_\tau(z) \bB_0(z) \EXP{-\beta \bH(z)}) \Rd z\]
 approximates the quantum observable $\TR (\OPER{A}_\tau \OPER{B}_0 \EXP{-\beta \OPER{H}})$, where 
 \[A_0=\Psi\MP  \bA_0\MP \Psi^* \ \mbox{ and } \ B_0=\Psi\MP  \bB_0\MP \Psi^*\, .\]  To approximate a given quantum observable
  \[\TR( \OPER{A}_\tau \OPER{B}_0 \EXP{-\beta\OPER{H}})\] is in this formulation possible only if
%  \[
%  \begin{split}
%  \bA_0 &= \Psi^*\MP  A_0\MP \Psi\COMMA\\
%  \bB_0 &= \Psi^*\MP  B_0\MP \Psi\COMMA\\
%\end{split}
%\]
$ \bA_0 = \Psi^*\MP  A_0\MP \Psi$ and $\bB_0 = \Psi^*\MP  B_0\MP \Psi$
are diagonal (or almost diagonal). 
From a given diagonal $\bA_0$ it is therefore direct to determine $A_0$
but the opposite to first choose $A_0$  then requires to verify if $\bA_0$ is diagonal.
The set of such $A_0$ is in the special case where $\bA_0$ only depends on $x$
given by $A_0= \Psi \bA_0 \Psi^*$, with $\bA_0$ any diagonal matrix. If in addition $\bA_0(x)=a(x)\Id$, for any scalar function $a:\rset^N\to \rset$, we obtain $A_0(x)= \bA_0(x)=a(x)\Id$ and similarly for $B_0$.

We note that if eigenvalue surfaces cross, i.e. if 
$\lambda_j(x)=\lambda_{j+1}(x)$ for some $j$ and $x$,
then $\nabla\Psi(x)$ may not be in $L^2(\rset^{N})$.
We have assumed that the observable symbols $\bA_0$ and $\bB_0$ are diagonal
in the same coordinate transformation $\Psi[\kappa]$ that approximately
diagonalizes  the Hamiltonian, in the composition way \eqref{diag_comp}.
Example of observables that cannot be diagonalized by the same transformation as
the Hamiltonian are $A_0(z)=x\Psi_{\cdot 1}(x)(\Psi_{\cdot 2})^*(x)$ and $B_0(z)=x\Psi_{\cdot 1}(x)(\Psi_{\cdot 2})^*(x)$
and the correlation based on these observables are then not applicable in Theorems \ref{gibbs_corr_thm_analytic}, \ref{gibbs_corr_thm} and \ref{gibbs_corr_thm_unif}, in contrast to the projections $\Psi_{\cdot j}(\Psi_{\cdot j})^*$ in Remark \ref{projections}.
 
%**
%**
%**
\subsection{Assumptions and theorems}
This section states our main results on molecular dynamics approximation of quantum observables in the canonical ensemble. The molecular dynamics is based on the Hamiltonian system formulated in (\ref{a_char}--\ref{A_brav_char}).
%namely
%the characteristic path $z^j_t=(x_t,p_t)$ solves the Hamiltonian system
%\begin{equation}\label{HS11}
%\begin{split}
%\dot x_t &= p_t\\
%\dot p_t &= -\nabla \lambda_j(x_t), \quad t>0
%\end{split}
%\end{equation}
%with initial data $(x_0,p_0)=z_0$ and the Hamiltonian
%$(\bH_0(x,p))_{jj}=(\frac{|p|^2}{2}\Id + \bLambda(x))_{jj}=\frac{|p|^2}{2}+\lambda_j(x)$. 
%Therefore, 
The observable $\bA_{jj}(0,z^j_t (z_0))$, 
for each $j\in\{1,\ldots, d\}$, 
is constant along the molecular dynamics path and provides the classical approximation of the corresponding quantum observable as we shall see
in Theorems \ref{gibbs_corr_thm_analytic}, \ref{gibbs_corr_thm} and \ref{gibbs_corr_thm_unif}. %\ref{gibbs_corr_thm_analytic}-- \ref{gibbs_corr_thm_unif}.
%For each $j\in\{1,\ldots, d\}$  we have
%\begin{equation*}\label{A_brav_char}
%\underbrace{\partial_t\brA_{jj}\big(0,z^j_t(z_0)\big)}_{=\partial_s\brA_{jj}(0,z^j%_t(z_s))\big|_{s=0}}  =\{(\bH_0)_{jj}(z_0),\brA_{jj}(0,z^j_t(z_0))\} \COMMA
%\end{equation*}
%which defines the observable
%%Let $\brA_{jj}$ be constant along  the characteristic
%\begin{equation*}\label{a_char}
%\brA_{jj}(t,z_0) := \brA_{jj}\big(0,z^j_t(z_0)\big)\PERIOD
%\end{equation*}
%\todo{ rewrote}
By assuming that the potential $V\in [\mathcal C^3(\rset^N)]^{d^2}$, the approximate nonlinear eigenvalue problem \eqref{eigen_nonlin2} has an approximate  solution
matrix $\Psi$ of eigenvectors with eigenvalues $\Lambda_{jj}=\lambda_j$
that yields an almost  diagonal  Hamiltonian $\bH=\bH_0+ r_0 $ with small remainder $r_0$, as defined in \eqref{diag_comp}. 
The observables $\OPER{A}_\tau=\OPER{\Psi}\OPER{\bA}_\tau\OPER{\Psi}^*$ and $\OPER{B}=\OPER{\Psi}\OPER{\bB}_0
\OPER{\Psi}^*$ 
satisfy

%{\bf Rewrite the theorem so that there are constants $c,C$ such that eigenvalues differ %$c$ and $\BIGO{(1)}$ replaced by $\le  C$ then 
%... $\BIGO{(1/M)}$ replaced by $C'$, where $C'$ depends only on $C$ and $c$.}
\begin{theorem}\label{gibbs_corr_thm_analytic}
 Assume that $V\in \mathcal [C^{3}(\rset^N)]^{d^2}$,  $\bA_0$ and $\bB_0$ are diagonal, the $d\times d$ matrix valued Hamiltonian $H$ has distinct eigenvalues, and that there is a constant $C$ such that
\begin{equation*}\label{R77}
\begin{split}
\sum_{|\alpha|\le 2}\|\partial^\alpha_x \psi_k\|_{L^\infty(\rset^{3N})} &\le C\COMMA\quad k=1,\ldots,d\COMMA\\
\max_i\sum_{|\alpha|\le 3}\|\partial^\alpha_x \partial_{x_i}\lambda_j\|_{L^\infty(\rset^{N})} &\le C\COMMA\\
\sum_{|\alpha|\le 3}
\|\partial_z^\alpha\bA_{jj}(0,\cdot)\|_{L^2(\rset^{2N})}  &\le C\COMMA\\
\|\EXP{-\beta \bH(z)}\|_{L^2(\rset^{2N})} +\|\bB_0(z)\EXP{-\beta \bH(z)}\|_{L^2(\rset^{2N})} &\le C\COMMA\\
\end{split}
\end{equation*} 
hold, then there is a constant $c$, depending on $C$, such that the canonical ensemble average satisfies
\begin{equation*}\label{G_corr_unif1}
\begin{split}
%&\frac{\TR\big(\OPER{A}_\tau \hat\Psi{ \OPERW{({\bB}_0 \EXP{-\beta \bH})}} %\hat\Psi^*\big)}{
%\TR(\hat\Psi\OPERW{\EXP{-\beta \bH}}\hat\Psi^*)}
%=\frac{\TR\big(\OPER{\bA}_\tau { \OPERW{({\bB}_0\EXP{-\beta \bH})}} \big)}{
%\TR(\OPERW{\EXP{-\beta \bH})}}\, ,\\
&\Big|\frac{\TR\big(\OPER{A}_\tau \hat\Psi{ \WEYL{{\bB}_0 \EXP{-\beta \bH}}} \hat\Psi^*\big)}{
\TR(\hat\Psi\OPERW{\EXP{-\beta \bH}}\hat\Psi^*)}
 - \sum_{j=1}^d \int_{\rset^{2N}} 
\frac{\bA_{jj}(0,z^j_\tau (z_0))\bB_{jj}(z_0) \EXP{-\beta \bar{H}_{jj}(z_0)}}{\sum_{k=1}^d\int_{\rset^{2N}}\EXP{-\beta \bar{H}_{kk}(z)} \Rd z} \Rd z_0\Big|\\
&\le  cM^{-1}\COMMA
%&|\frac{\TR\big(\OPER{\bA}_\tau { \OPERW{({\bB}_0\EXP{-\beta \bH})}} \big)}{
%\TR(\OPERW{\EXP{-\beta \bH})}}-  
%\sum_{j=1}^d \int_{\rset^{2N}} q_j\bA_{jj}(0,z^j_\tau (z_0))\bB_{jj}(z_0) 
%\frac{\EXP{-\beta \bar{H}_{jj}(z_0)}}{\int_{\rset^{2N}}\EXP{-\beta \bar{H}_{jj}(z)} \Rd z} %\Rd z_0| 
%+\BIGO(M^{-1})\COMMA
\end{split}
\end{equation*}
where $z^j_\tau$ solves \eqref{HS}.
%and $q_j=q_j(\bH)$ is defined by \eqref{qh_def}.
%$q_j$ is defined in Lemma \ref{thm_stat}.
%\[
%q_j:=\frac{\int_{\mathbb R^{2N}}\EXP{-\bH_{jj}(z)/T} \Rd z}{
%\sum_{k=1}^d\int_{\mathbb R^{2N}}\EXP{-\bH_{kk}(z')/T} \Rd z'}\COMMA\quad j=1,\ldots, d
%\]
%is the probability to be in state $j$. 
%
\end{theorem}

We note that
\[
\frac{\TR\big(\OPER{A}_\tau \hat\Psi{ \WEYL{{\bB}_0 \EXP{-\beta \bH}}} \hat\Psi^*\big)}{
\TR(\hat\Psi\OPERW{\EXP{-\beta \bH}}\hat\Psi^*)}
=\frac{\TR\big(\OPER{\bA}_\tau { \WEYL{{\bB}_0\EXP{-\beta \bH}}} \big)}{
\TR(\OPERW{\EXP{-\beta \bH}})}\, .
\]
%and that although $V$ is assumed to be analytic, this assumption is used only
%to construct $\Psi$ and $\lambda$. For instance, the constant $c$ does not depend on
%the size of the fourth order derivatives of $\lambda$.
Here we have used the quantization
of the classical density, $\OPERW{\EXP{-\beta\bar H}}$, as discussed in Section \ref{sec:gibbs_intro}. 
Our related results
for the density $\EXP{-\beta\hat H}$
require similar but more assumptions
on the regularity of the data given below.
We note that all assumptions are based on a fixed number of derivatives not depending on $N$.

\begin{assumption}\label{lem3.4} Assume $\kappa=1$ or $\kappa=2$ in \eqref{diag_comp} defines the remainder $r_0$, the eigenvalues $\Lambda$, the Hamiltonian $\bar H_0$ and that
there is a constant $C$ such that
\begin{equation}\label{R0a}
\|r_0\|_{L^\infty(\rset^{N})} \le C M^{-\kappa}\COMMA\\
\end{equation}
\begin{equation}\label{R1a}
\begin{split}
&\||\nabla\bLambda|^2\|_{L^\infty(\rset^{N})}+\|\Delta\bLambda\|_{L^\infty(\rset^{N})}+\max_{j,k}\|\partial_{x_j}\partial_{x_k}\bLambda\|_{L^\infty(\rset^{N})} \le C\COMMA\\
& \sum_{jk}\sup_{t\in [0,1/T]}
\|\EXP{-t\bH_0}{p_j}\partial_{p_k}\bB_0\|_{L^2(\rset^{2N})} \\
&\quad +\sum_{jk}\sup_{t\in [0,1/T]}
\|\EXP{-t\bH_0}\partial_{p_j}\partial_{p_k}\bB_0\|_{L^2(\rset^{2N})} 
\le C\COMMA\\
& \sum_{jk}\sup_{t\in [0,1/T]}
\|\EXP{-t\bH_0}{p_j}{p_k}\bB_0\|_{L^2(\rset^{2N})} 
\le C\COMMA \\
&\sup_{t\in [0,1/T]}
\|\EXP{-t\bH_0}\Delta_x\bB_0\|_{L^2(\rset^{2N})}
+\sup_{t\in [0,1/T]}
\|\EXP{-t\bH_0}\nabla\bLambda\cdot\nabla_x\bB_0\|_{L^2(\rset^{2N})}
\le C\COMMA\\
&\sup_{t\in [0,1/T]}
\|\EXP{-t\bH_0}\bB_0\|_{L^2(\rset^{2N})}
\le C\COMMA\\
&\sup_{t\in [0,1/T]}\|\EXP{-t\bH_0}(p\cdot\nabla_x\bB_0
-\nabla\bLambda\cdot \nabla_p\bB_0)\|_{L^2(\rset^{2N})} \le C \COMMA\\
\end{split}
\end{equation}
\begin{equation}\label{R2a}
 \|\bA_0\|_{L^2(\rset^{2N})}\le C\COMMA
 \end{equation}
 \begin{equation*}\label{R100}
\sup_{t\in [0,1/T]}
\|\EXP{-t\bH_0}\|_{L^2(\rset^{2N})}\le C\PERIOD
\end{equation*}
\end{assumption}

\begin{assumption}\label{lem3.5}
Assume there is a constant $C$ such that
 \begin{equation*}\label{R1000}
\sup_{t\in [0,1/T]}
\|\EXP{-t\bH_0}\|_{L^2(\rset^{2N})}\le C\COMMA
\end{equation*}

\begin{equation}\label{R7a}
\begin{split}
\max_i\sum_{|\alpha|\le 3}\|\partial^\alpha_x \partial_{x_i}\lambda_j\|_{L^\infty(\rset^{N})} &\le C\COMMA\\
\sum_{|\alpha|\le 3}
\|\partial_z^\alpha\bA_{jj}(0,\cdot)\|_{L^2(\rset^{2N})}  &\le C\COMMA\\
\end{split}
\end{equation}

\begin{equation*}\label{R60}
\begin{split}
\|\bB_0(z)\EXP{-\beta \bH_0(z)}\|_{L^2(\rset^{2N})} &\le C\PERIOD\\
\end{split}
\end{equation*}
\end{assumption}

\begin{assumption}\label{lem3.6}
Assume $\kappa=2$ 
 in \eqref{diag_comp} defines the remainder $r_0$, the eigenvalues $\Lambda$, the Hamiltonian $\bar H_0$ and that there is a constant $C$ such that
\begin{equation*}\label{R00}
\|r_0\|_{L^\infty(\rset^{N})} \le CM^{-2}\COMMA\\
\end{equation*}
\begin{equation}\label{R8a}
\begin{split}
\frac{1}{(2\pi)^{2N}}\|\FT (\partial_{z_n} \EXP{-\bH_0})\|_{L^1(\rset^{2N})}&\le C\COMMA\\
\sum_n\|\partial_{z_n}(p\cdot\bB_0-\nabla\bLambda\cdot \bB_0)\|_{L^2(\rset^{2N})}&\le C\COMMA\\
\end{split}
\end{equation}
%which replaces $\|R\|_{L^2(\rset^{2N})$ in \eqref{cr}

\begin{equation}\label{R9a}
\begin{split}
\frac{1}{(2\pi)^{2N}}\|\FT (p\cdot\bB_0-\nabla\bLambda\cdot \bB_0)\|_{L^1(\rset^{2N})}\le C\, .\\
\end{split}
\end{equation}
\end{assumption}

Based on these assumptions we can formulate
two additional  theorems
for the observables $\OPER{A}_\tau=\OPER{\Psi}\OPER{\bA}_\tau\OPER{\Psi}^*$ and $\OPER{B}=\OPER{\Psi}\OPER{\bB}_0
\OPER{\Psi}^*$, with $\Psi=\Psi[\kappa], \ \kappa\in\{1,2\}$, defined in \eqref{psi_kappa} and the diagonal Hamiltonian $\tH=\bH_0 + \BIGO(M^{-1})$ in \eqref{tilde_H}. %
\begin{theorem}\label{gibbs_corr_thm}
 Assume that $\bA_0$ and $\bB_0$ are diagonal, the $d\times d$ matrix valued Hamiltonian $H$ has distinct eigenvalues, 
 %that %
 %
 %$ \Psi=\Psi[2]$ 
 and that 
 Assumptions \ref{lem3.4} and \ref{lem3.5} hold,
then the canonical ensemble average satisfies
\begin{equation}\label{G_corr}
\begin{split}
&\frac{\frac{1}{2}\TR\big(\OPER{ A}_\tau ({\OPER{B}_0 {\EXP{-\beta\OPER{H}}} 
+{\EXP{-\beta \OPER{H}}}\OPER{B}_0})\big)}{\TR({\EXP{-\beta \OPER{H}}})}\\
&=  \sum_{j=1}^d \int_{\rset^{2N}} q_j\bA_{jj}(0,z^j_\tau (z_0))\bB_{jj}(z_0) 
\frac{\EXP{-\beta \tH_{jj}(z_0)}}{\int_{\rset^{2N}}\EXP{-\beta \tH_{jj}(z)} \Rd z} \Rd z_0
+\BIGO(M^{-1/2})\COMMA
\end{split}
\end{equation}
where $z^j_\tau=(x_\tau,p_\tau)$ solves %\eqref{HS} 
\begin{equation*}\label{HS*}
\begin{split}
\dot x_t &= p_t\\
\dot p_t &= -\nabla\tilde {\lambda}_j(x_t), \quad t>0,
\end{split}
\end{equation*}
and $q_j=q_j(\tH)$ is defined by \eqref{qh_def}.
If in addition Assumption \ref{lem3.6} holds then the estimate in \eqref{G_corr} holds with the more accurate bound $\BIGO(M^{-1})$ replacing $\BIGO(M^{-1/2})$.
\end{theorem}

By comparing instead to the Weyl quantized classical Gibbs density
$\OPERW{\EXP{-\beta \bH_0}}$ we have the following  more accurate error estimate, that only requires Assumption \ref{lem3.5}.
 % A motivation for this density operator is on the next section. 
% We use
% the probability $q_j(h)$ to be in state $j$, based on a diagonal Hamiltonian $h:\rset^{2N}\to \rset^{d\times d}$,
% \begin{equation}\label{qh_def}
%q_j(h):=\frac{\int_{\mathbb R^{2N}}\EXP{-h_{jj}(z)/T} \Rd z}{
%\sum_{k=1}^d\int_{\mathbb R^{2N}}\EXP{-h_{kk}(z')/T} \Rd z'}\COMMA\quad j=1,\ldots, d.
%\end{equation}
%is the probability to be in state $j$. 

\begin{theorem}\label{gibbs_corr_thm_unif}
 Assume that $\bA_0$ and $\bB_0$ are diagonal, the $d\times d$ matrix valued Hamiltonian $H$ has distinct eigenvalues, that %
 $ \Psi=\Psi[2]$ and that the Assumption \ref{lem3.5},\eqref{R0a} and \eqref{R2a}
 %\eqref{R10},\eqref{R7} and \eqref{R6}
 hold,
then the canonical ensemble average satisfies
\begin{equation}\label{G_corr_unif}
\begin{split}
&\frac{\TR\big(\OPER{ A}_\tau \hat\Psi \WEYL{\bB_0 \EXP{-\beta \bH_0}}  \hat\Psi^*\big)}{
\TR(\hat\Psi\OPERW{\EXP{-\beta {\bH}_0}}\hat\Psi^*)}
=\frac{\TR\big(\OPER{ \bA}_\tau { \WEYL{{\bB}_0\EXP{-\beta {\bH}_0}}} \big)}{
\TR(\OPERW{\EXP{-\beta {\bH}_0}})}\\
&=  \sum_{j=1}^d \int_{\rset^{2N}} q_j\bA_{jj}(0,z^j_\tau (z_0))\bB_{jj}(z_0) 
\frac{\EXP{-\beta ({\bar{H}_0})_{jj}(z_0)}}{\int_{\rset^{2N}}\EXP{-\beta ({\bar{H}_0})_{jj}(z)} \Rd z} \Rd z_0
+\BIGO(M^{-1})\COMMA
\end{split}
\end{equation}
where $z^j_\tau$ solves \eqref{HS} and  $q_j=q_j(\bH_0)$ is defined by \eqref{qh_def}.
%\[
%q_j:=\frac{\int_{\mathbb R^{2N}}\EXP{-(\bH_0)_{jj}(z)/T} \Rd z}{
%\sum_{k=1}^d\int_{\mathbb R^{2N}}\EXP{-(\bH_0){kk}(z')/T} \Rd z'}\COMMA\quad j=1,\ldots, d
%\]
%is the probability to be in state $j$. 
%$q_j$ is defined in Lemma \ref{thm_stat}.
%
\end{theorem}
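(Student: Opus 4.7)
The plan is to reduce the statement to Step 2 of the proof of Lemma~\ref{lemma1}, thereby avoiding the Gibbs-density approximation of Step 1 (which gave only $\mathcal{O}(M^{-1/2})$) and with it the assumptions \eqref{R1}, \eqref{R8}, \eqref{R9}. Three moves accomplish this: first remove the outer $\hat\Psi$, $\hat\Psi^*$ by cyclicity of the trace; second replace the quantum observable $\hat{\bA}_\tau$ by the classical evolution $\widehat{\brA(\tau,\cdot)}$ along the flow \eqref{HS}; and third reduce the remaining trace of a product of Weyl operators to a phase-space integral via Lemma~\ref{composition}.

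The first move is immediate: the definition \eqref{tilde_a_t} gives $\hat{\bA}_\tau = \hat\Psi^* \hat A_\tau \hat\Psi$, and cyclicity of the trace yields
\[
{\rm trace}\big(\hat A_\tau \hat\Psi \widehat{\bB_0 e^{-\bar H_0/T}} \hat\Psi^*\big)
= {\rm trace}\big(\hat{\bA}_\tau \widehat{\bB_0 e^{-\bar H_0/T}}\big),
\]
with the analogous identity for the denominator (taking $\hat A_\tau = \mathrm{I}$); this is exactly the first equality in \eqref{G_corr_unif}. For the second move I would invoke Step 2 of the proof of Lemma~\ref{lemma1} essentially verbatim. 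With the choice $\Psi = \Psi[2]$, so $\kappa = 2$, the remainder $R_s$ in \eqref{R_dyn_est} is $\mathcal{O}(M^{-\min(1,\kappa-1/2)}) = \mathcal{O}(M^{-1})$ in $L^2(\rset^{2N})$, using \eqref{R0} to bound $r_0$ and the chain \eqref{z_tz_0}--\eqref{Az2} (powered by \eqref{R7}) to control $\partial_z^\alpha \brA_{jj}(s,\cdot)$ up to third order uniformly in $s \in [0,\tau]$. Combined with \eqref{R6} to bound $\|\bB_0 e^{-\bar H_0/T}\|_{L^2(\rset^{2N})}$ and the Hilbert--Schmidt estimate \eqref{trace_hs}, Duhamel's formula produces
\[
\big|{\rm trace}\big((\hat{\bA}_\tau - \widehat{\brA(\tau,\cdot)})\, \widehat{\bB_0 e^{-\bar H_0/T}}\big)\big|
= \big(\tfrac{M^{1/2}}{2\pi}\big)^{\!N} \mathcal{O}(M^{-1}).
\]

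The third move uses Lemma~\ref{composition}: since $\brA(\tau,\cdot)$, $\bB_0$ and $e^{-\bar H_0/T}$ are simultaneously diagonal by \eqref{a_char} and the diagonal structure of $\bar H_0$,
\[
{\rm trace}\big(\widehat{\brA(\tau,\cdot)}\, \widehat{\bB_0 e^{-\bar H_0/T}}\big)
= \big(\tfrac{M^{1/2}}{2\pi}\big)^{\!N}\sum_{j=1}^d \int_{\rset^{2N}} \brA_{jj}(\tau,z)\, \bB_{jj}(z)\, e^{-(\bar H_0)_{jj}(z)/T}\, {\rm d}z.
\]
Substituting $\brA_{jj}(\tau,z_0) = \bA_{jj}(0, z^j_\tau(z_0))$ from \eqref{a_char}, dividing by the phase-space representation ${\rm trace}(\widehat{e^{-\bar H_0/T}}) = \big(\tfrac{M^{1/2}}{2\pi}\big)^{N}\sum_k \int e^{-(\bar H_0)_{kk}/T}\,{\rm d}z$, and identifying the resulting weights with $q_j$ (using $\bar H_0 = \tilde H + \mathcal{O}(M^{-1})$, which is absorbed in the error) yields \eqref{G_corr_unif}; the prefactors $(M^{1/2}/(2\pi))^N$ cancel in the quotient, so the $\mathcal{O}(M^{-1})$ error from the second move is preserved.

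The hard part is controlling the three derivatives of $\brA(\tau,\cdot)$ required in \eqref{Az2}: these are driven by the variational equations for the flow $z^j_\tau$ and typically grow like $e^{c\tau}$ by Gronwall. Consequently the $\mathcal{O}(M^{-1})$ constant in \eqref{G_corr_unif} is not uniform in the correlation time; it is, however, uniform in $T^{-1}$ and in $N$, and this uniformity -- impossible if one were to insist on comparing to the true quantum Gibbs operator $e^{-\hat H/T}$ via Step~1 -- is precisely what motivates replacing that operator by the Weyl quantization of the classical symbol $e^{-\bar H_0/T}$ in the first place.
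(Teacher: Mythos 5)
Your proposal is correct and follows essentially the paper's own route: the paper proves this theorem precisely by the trace--cyclicity reduction to $\hat{\bA}_\tau$, Step 2 of the proof of Lemma \ref{lemma1} (Duhamel plus the Hilbert--Schmidt bound \eqref{trace_hs} with $\|R_s\|_{L^2}=\mathcal O(M^{-\min(1,\kappa-1/2)})=\mathcal O(M^{-1})$ for $\Psi=\Psi[2]$), and Lemma \ref{composition} to pass to phase-space integrals and identify the weights $q_j$. Only two small caveats: the $\mathcal O(M^{-1})$ control of the term $M^{1/2}(r_0\# y-y\# r_0)$ really comes from $r_0=\mathcal O(M^{-\kappa})=\mathcal O(M^{-2})$ supplied by the choice $\Psi=\Psi[2]$ rather than from \eqref{R0} alone, and your closing claim of uniformity in $T^{-1}$ is neither asserted in the paper nor needed for the statement.
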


\subsection{Structure of the proofs}
It is useful to  split our estimation into two parts
 \begin{equation}\label{steps}
 \begin{split}
&\TR\big(\OPER{\bA}_\tau \frac{1}{2}(\OPER{\bB}_0\EXP{-\beta\OPER{\bH}} +  \EXP{-\beta\OPER{\bH}}\OPER{\bB}_0) 
- \OPERW{\brA(\tau,z)} \OPERW{\bB_0\EXP{-\beta H_0}} \big)\\
&= 
 \frac{1}{2}\TR\Big(\OPER{\bA}_\tau\big(\OPER{\bB}_0\EXP{-\beta\OPER{\bH}} +  \EXP{-\beta\OPER{\bH}}\OPER{\bB}_0 
-(\bB_0\EXP{-\beta H_0})^{\OPERW{ }}  - (\EXP{-\beta H_0}{\bB}_0)^{\OPERW{ }} \ \big)\Big)\\
&\quad +
\TR\big((\OPER{\bA}_\tau- \OPERW{\brA(\tau,z)})(\bB_0\EXP{-\beta H_0})^{\OPERW{ }} \ \big)\\
\end{split}
\end{equation}
where the first part is the approximation error of the Gibbs density operator, which is estimated  in Lemma \ref{lemma1}, 
and  the second part is the approximation error of the dynamics of the observable, which is estimated in Lemma \ref{lemma12}. 

Theorems \ref{gibbs_corr_thm_unif}  and \ref{gibbs_corr_thm_analytic} use only Lemma \ref{lemma12}, while
Theorem \ref{gibbs_corr_thm} uses both Lemma \ref{lemma1} and Lemma \ref{lemma12}.  A third Lemma \ref{lemma13} improves the $\BIGO(M^{-1/2})$ bound in Lemma \ref{lemma1} to $\BIGO(M^{-1})$ under additional assumptions.
% is divided into Step 1a, deriving an $\BIGO(M^{-1/2})$ bound, and Step 1b, providing an $\BIGO(M^{-1})$ bound under additional assumptions.

\begin{lemma}\label{lemma1}
Assume that  the $d\times d$ matrix symbols $\bA_0(z)$ and $\bB_0(z)$ are diagonal
and
%\begin{itemize}
%\item[1.]
the bounds in Assumption \ref{lem3.4}
%\eqref{R0},\eqref{R1},\eqref{R2} and \eqref{R10} 
hold, then
\begin{equation*}\label{lemma1_eq}
\begin{split}
&\frac{\TR\Big(\OPER{\bA}_\tau\big(\OPER{\bB}_0\EXP{-\beta\OPER{\bH}} 
-(\bB_0\EXP{-\beta H_0})^{\OPERW{ }}  \ \big)\Big)}{
\TR (\EXP{-\beta \OPER{\bH}})}=\BIGO(M^{-\min(1/2,\kappa-1/2)})\PERIOD
\end{split}
%
%\frac{\TR\big(\OPER{\bA}_\tau \frac{\OPER{\bB}_0\EXP{-\beta\OPER{\bH}} +  \EXP{-\beta\OPER{\bH}}\OPER{\bB}_0}{2}
%- \frac{1}{2}\TR\Big(\OPER{\bA}_\tau\big(\OPER{\bB}_0\EXP{-\beta\OPER{\bH}} +  \EXP{-\beta\OPER{\bH}}\OPER{\bB}_0 
%-(\bB_0\EXP{-\beta H_0})^{\OPERW{ }}  - (\EXP{-\beta H_0}{\bB}_0)^{\OPERW{ }} \ \big)\Big)\\
%=\BIGO(M^{-\min(1,\kappa-1/2)})\COMMA
\end{equation*}
\end{lemma}
\begin{lemma}\label{lemma12}
Assume that  the $d\times d$ matrix symbols $\bA_0(z)$ and $\bB_0(z)$ are diagonal
and
%\item[2.]
the bounds in Assumption \ref{lem3.5} %\eqref{R10},\eqref{R7} and \eqref{R6} 
hold, then
\[
\frac{\TR\big((\OPER{\bA}_\tau- \OPERW{\brA(\tau,z)})
(\bB_0\EXP{-\beta H_0})^{\OPERW{ }} \ \big)}{\TR (\EXP{-\beta{\OPER{\bH}}_0})}
=\BIGO(M^{-1})\PERIOD
\]
\end{lemma}
\begin{lemma}\label{lemma13}
Assume that  the $d\times d$ matrix symbols $\bA_0(z)$ and $\bB_0(z)$ are diagonal
and
the bounds in Assumtion \ref{lem3.6} %\eqref{R0},\eqref{R1},\eqref{R2},\eqref{R10},\eqref{R8} and \eqref{R9} 
hold, then
\begin{equation}\label{lemma13_eq11}
\begin{split}
&\frac{\TR\Big(\OPER{\bA}_\tau\big(\OPER{\bB}_0\EXP{-\beta\OPER{\bH}} +  \EXP{-\beta\OPER{\bH}}\OPER{\bB}_0 
-(\bB_0\EXP{-\beta H_0})^{\OPERW{ }}  - (\EXP{-\beta H_0}{\bB}_0)^{\OPERW{ }} \ \big)\Big)}{
\TR (\EXP{-\beta \OPER{\bH}})}\\
&=\BIGO(M^{-\min(1,\kappa-1/2)})\PERIOD
\end{split}
\end{equation}
%\begin{equation}\label{lemma1_eq}
%\frac{\TR\big(\OPER{\bA}_\tau \frac{\OPER{\bB}_0\EXP{-\beta\OPER{\bH}} +  \EXP{-\beta\OPER{\bH}}\OPER{\bB}_0}{2}
%- \frac{1}{2}\TR\Big(\OPER{\bA}_\tau\big(\OPER{\bB}_0\EXP{-\beta\OPER{\bH}} +  \EXP{-\beta\OPER{\bH}}\OPER{\bB}_0 
%-(\bB_0\EXP{-\beta H_0})^{\OPERW{ }}  - (\EXP{-\beta H_0}{\bB}_0)^{\OPERW{ }} \ \big)\Big)\\
%=\BIGO(M^{-\min(1,\kappa-1/2)})\COMMA
%\end{equation}
%\end{itemize}
%if the bounds \eqref{R0},\eqref{R1},\eqref{R2},\eqref{R8},\eqref{R9},\eqref{R7} and \eqref{R6}
%hold
%then there holds
%\begin{equation}\label{lemma1_eq}
%\frac{\TR\big(\OPER{\bA}_\tau \frac{\OPER{\bB}_0\EXP{-\beta\OPER{\bH}} +  \EXP{-\beta\OPER{\bH}}\OPER{\bB}_0}{2}
%- \OPERW{\brA(\tau,z)} \OPERW{\bB_0\EXP{-\beta H_0}}\big)}{\TR (\EXP{-\beta \OPER{\bH}})}
%=\BIGO(M^{-\min(1,\kappa-1/2)})\COMMA
%\end{equation}
\end{lemma}

The results in Lemmas \ref{lemma1} -- \ref{lemma13} have clear  limitations since the error estimate
of the approximation of the Gibbs density operator in Lemmas \ref{lemma1} and \ref{lemma13} is not uniform in $N$ and $T^{-1}$
and the approximation error of the observable dynamics in Lemma \ref{lemma12} is not uniform in $\tau$.
This means that many particles and low temperature yields a large approximation error of the density operator.
The approximation error of the observable dynamics depends exponentially on time, $\EXP{c\tau}$, but $c$ is
uniform in $N$ provided the assumptions in Theorem \ref{gibbs_corr_thm_analytic} hold uniformly in $N$.
%**
%**
%**
% \subsection{Three estimates for canonical quantum correlation observables}
 In conclusion by combining \eqref{S_def_q}, \eqref{steps} and Lemmas \ref{lemma1} -- \ref{lemma13},
% using the observables $\OPER{A}_\tau=\OPER{\Psi}\OPER{\bA}_\tau\OPER{\Psi}^*$ and $\OPER{B}=\OPER{\Psi}\OPER{\bB}_0
% \OPER{\Psi}^*$ and the diagonal Hamiltonian $\tH=\bH_0 + \BIGO(M^{-1})$ in \eqref{tilde_H}, %
 we  obtain the theorems.
 
 The three proofs of the lemmas, in Section \ref{sec:proof}, have the same structure with three steps -
 find an error representation, estimate remainder terms in Moyal expansions and
 evaluate the trace - described roughly as follows.
 \subsubsection{Error representation}
 In the case of Lemma \ref{lemma12}, we  compare the classical dynamics
 \[
\partial_t y(t,z)=\{\bH_0(z),y(t,z)\} ,\quad t>0\COMMA\quad  y(0,\cdot)= {\bA}_0\COMMA
\]
%$\OPERW{y(t,z_0)}:=\OPERW{{\brA}(t,z_0)}$ 
with the quantum dynamics \[
\OPERW{\bA_t}= \EXP{\IU t M^{1/2} \OPER{\bH}} 
{\bA}_0\EXP{-\IU t M^{1/2} \OPER{\bH}}=:\OPERW{\by(t,z)}\COMMA\]
that satisfies 
\[\partial_t\OPERW{\by(t,z)} = \IU M^{1/2}[\OPER{\bH}, \OPERW{\by(t,z)}]\COMMA \quad t>0\COMMA\quad  \OPERW{\by(0,\cdot)}
=\OPER{ \bA}_0\PERIOD
\]
The definition of the composition rule \eqref{comp_rule} yields
\[
\partial_t \by(t,z)= \IU M^{1/2} \big(\bH(z) \MP  \by(t,z) - \by(t,z)\MP  \bH(z)\big)
\COMMA \quad t>0\COMMA\quad  {\by(0,\cdot)}
={\bA}_0\COMMA
\]
and Duhamel's principle applied to $\OPERW{ y(t,z)}-\OPERW{\by(t,z)}$
implies the error representation
\begin{equation}\label{error_rep23}
\begin{split}
&\OPERW{ y(t,z)}-\OPERW{\by(t,z)} \\
&=\int_0^t \EXP{\IU(t-s)M^{1/2}\OPER{\bH}}
\Big(\{\bH_0(z),y(s,z)\} - \IU M^{1/2} \big(\bH \MP  y(s,z) - y(s,z)\MP  \bH\big) \Big)^{\OPERW{}} \\
&\qquad \times \EXP{-\IU(t-s)M^{1/2}\OPER{\bH}} \Rd  s
%&=:\int_0^t 
% \EXP{\IU(t-s)M^{1/2}\OPER{\bH}}\hat R_s\EXP{-\IU(t-s)M^{1/2}\OPER{\bH}} \
%
%\Rd  s 
\PERIOD\\
\end{split}
\end{equation}

For the other two lemmas, related representations are obtained using $\EXP{-t\OPER{\bar H}}$ instead of $\EXP{\IU t M^{1/2}\OPER{\bar H}}$.

% The classical dynamics for the symbol $y:[0,\infty)\times \mathbb R^{2N}\rightarrow \CSP$
% satisfies by \eqref{poissonbracket}  and \eqref{a_char} the linear partial differential equation 
% \[
% \partial_t y(t,z)=\{\bH_0(z),y(t,z)\} ,\quad t>0\COMMA\quad  y(0,\cdot)= {\bA}_0\COMMA
% \]
% that is, we have the diagonal matrix
% \[
% y_{jk}(t,z_0)=\left\{\begin{array}{cc}
% (\bA_0)_{jj}(z_t^j(z_0)) & \mbox{ if } k=j\COMMA\\
% 0 & \mbox{ if } k\ne j\COMMA\\
% \end{array}\right.
% \quad \mbox{ for } z_0\in \rset^{2N} \mbox{ and } t\ge 0\PERIOD
% \]
% The evolution of $\by:[0,\infty)\times \mathbb R^{2N}\rightarrow \CSP$ 
% is defined by the quantum dynamics 

%
%
%
%
%
%

\subsubsection{Estimation of remainder terms and evaluation of the trace}
We will use the composition rule \eqref{comp_rule} to estimate remainder terms in the error representation \eqref{error_rep23}. 
%\begin{equation*}\label{comp_rule1}
%\begin{split}
%{A\MP  B}(x,p) &:= \EXP{\IU(2M^{1/2})^{-1} (\nabla_{x'}\cdot \nabla_{p} -\nabla_{x}\cdot %\nabla_{p'})}A(x,p)B(x',p')\Big|_{(x,p)=(x',p')}\COMMA\\
%\OPER{A}\OPER{B} &=\OPERW{A\MP B}\COMMA 
%\end{split}
%\end{equation*}
% see \cite{zworski}, 
%to determine the symbol for compositions of Weyl operators. 
Expansion of the exponential in the
 composition rule leads to the so called Moyal expansion. The usual estimates of the remainder terms in Moyal expansions determine the $L^2(\rset^N)$ operator norm from $L^\infty$ norm estimates of order $N$ derivatives of the remainder symbol, using the  Calderon-Vaillancourt theorem, see 
 %Theorem 4.23 in 
 \cite[Theorem 4.23]{zworski}. To avoid derivatives of high order, if $N$ is large,
we instead estimate the remainder terms %
in the form $\TR(\OPER{R}\OPER{C})$,
for Hermitian operators $\OPER{R}$ and $\OPER{C}$ on $L^2(\rset^N)$,
by the $L^2$ norms of their symbols. We use the Hilbert-Schmidt inner product, $\TR(\OPER{R}^*\OPER{C})$, and the corresponding  Hilbert-Schmidt norm, 
$\|\OPER{R}\|_{\mathcal{HS}}^2=\TR(\OPER{R}^*\OPER{ R})=\TR(\OPER{R}^2)$, and Lemma~\ref{composition} as follows
\begin{equation}\label{trace_hs}
\begin{split}
|\TR(\hat R^*\hat C) |^2 &\le \TR(\hat R^*\hat R)\TR(\hat C^*\hat C)\\
&=\NORMFAC^{2N}\int_{\rset^{2N}} \TR\, R^2(z) \Rd z \int_{\rset^{2N}} \TR \, C^2(z) \Rd z\PERIOD
\end{split}
\end{equation}

Lemmas~\ref{moyal_lemma} and \ref{comp_lemma} estimate the $L^2(\rset^{2N})$ norm of the remainder terms  in Moyal compositions
by integration by parts, roughly as follows
\begin{equation}\label{normhs}
\begin{split}
&\int_{\rset^{2N}} |\EXP{\frac{\IU}{2M^{1/2}} ( \nabla_{x'}\cdot\nabla_{p} -  \nabla_{x}\cdot\nabla_{p'})} 
\underbrace{A(x,p)B(x',p')}_{=:r(x,p,x',p')}|^2_{z'=z} \Rd z \\
&= \int_{\rset^{4N}} \big(\EXP{\frac{\IU}{2M^{1/2}} ( \nabla_{x'}\cdot\nabla_{p} -  \nabla_{x}\cdot\nabla_{p'}) }r(z,z') \big)^*\\
&\qquad\times \big(\EXP{\frac{\IU}{2M^{1/2}} ( \nabla_{x'}\cdot\nabla_{p} -  \nabla_{x}\cdot\nabla_{p'})} r(z,z')\big)\delta(z-z')\Rd z\Rd z' \\
&= \int_{\rset^{4N}} r^*(z,z')\\
&\quad\times {\EXP{\frac{-\IU}{2M^{1/2}}( \nabla_{x'}\cdot\nabla_{p} -  \nabla_{x}\cdot\nabla_{p'}) }\Big(\big(
\EXP{\frac{\IU}{2M^{1/2}} ( \nabla_{x'}\cdot\nabla_{p} -  \nabla_{x}\cdot\nabla_{p'})}} r(z,z')\big)\delta(z-z')\Big) \Rd z\Rd z' \\
&=\ldots\le \frac{1}{(2\pi)^{4N}} \|\FT  A\|^2_{L^1(\rset^{2N})}\|B\|^2_{L^2(\rset^{2N})} ,
\end{split}
\end{equation}
where the last steps indicated by $"=\ldots\le "$ 
are explained in the proof of Lemmas~\ref{moyal_lemma} and \ref{comp_lemma} using the Fourier transform of the Dirac measure $\delta(z-z')$,  
that $\IU\nabla_{x}\cdot\nabla_{p}$ is anti-Hermitian (so that $(\EXP{\frac{\IU}{2M^{1/2}} \nabla_{x}\cdot\nabla_{p} })^*=
\EXP{-\frac{\IU}{2M^{1/2}} \nabla_{x}\cdot\nabla_{p} }$  is unitary)  and applying Young's inequality to convolutions of Fourier transforms. Here $\FT $ denotes the Fourier transform \eqref{FT:definition}. 
In our proof of Theorems~\ref{gibbs_corr_thm_unif} and \ref{gibbs_corr_thm_analytic} and to prove \eqref{G_corr} in Theorem~\ref{gibbs_corr_thm} 
we need this estimate only in the
special case where one function depends only on the 
$x$ coordinate, i.e. $A(x)$, and then the right hand side becomes $ \| A\|^2_{L^\infty(\rset^{N})}\|B\|^2_{L^2(\rset^{2N})} $. It is a substantial difference using 
$ \| A\|^2_{L^\infty(\rset^{N})}$ since  a bound on $\|\FT  A\|^2_{L^1(\rset^{2N})}$  is related to $2(N+1)$ derivatives  of $A$ in $L^2(\rset^{2N})$, see \eqref{2N_der}.

The special form of the Hamiltonian, namely $H(x,p)=\frac{|p|^2}{2}{\rm I} + V(x)$, is essential to only obtain
%It is essential that only 
the case $\|A(x)\MP  C(x,p)\|_{L^2(\rset^{2N})}$ 
%appear 
in our analysis for the $\BIGO(M^{-1/2})$ bound in \eqref{G_corr},  which by %\eqref{CD} 
Lemma \ref{comp_lemma} is bounded by $\|A\|_{L^\infty(\rset^N)}\|C\|_{L^2(\rset^{2N})}$,
and not \[\|C(x,p)\MP  D(x,p)\|_{L^2(\rset^{2N})}\le  \frac{1}{(2\pi)^{2N}}\|\FT  C\|_{L^1(\rset^{2N})}\|D\|_{L^2(\rset^{2N})}\, \]
 from \eqref{CD_fourier}.
An $L^1$-bound on the Fourier transform of a function, which is required in \eqref{R8a} and \eqref{R9a}  
to obtain the accuracy $\BIGO(M^{-1})$ in \eqref{G_corr}, is more demanding on regularity than the $L^\infty$-norm
of the function. For instance, we have
\begin{equation}\label{2N_der}
\begin{split}
\int_{\rset^{2N}} |\FT  g(\xi)|\Rd \xi 
&= \int_{\rset^{2N}}(1+|\xi|^2)^{(N+1)} |\FT  g(\xi)|(1+|\xi|^2)^{-(N+1)}\Rd \xi\\
&\le \|(1+|\xi|^2)^{(N+1)} \FT  g(\xi)\|_{L^2(\rset^{2N})}\|(1+|\xi|^2)^{-(N+1)}\|_{L^2(\rset^{2N})}\\
&\le \mathrm{const.}\,\|(1+\Delta)^{(N+1)}g\|_{L^2(\rset^{2N})}\PERIOD
\end{split}
\end{equation}

The eigenvalue functions $\tlambda_j(x)$ and their Laplacian  are typically proportional  to the number of 
particles, since the Hamiltonian is the energy of the system.
Therefore, the corresponding estimates in the first row of \eqref{R1a} are bounded by
a constant proportional to $N$, %
while \eqref{R7a} can be uniform with respect to $N$. Also the remainder term $r_0$, related to $4M^{-1}\nabla\Psi^*\cdot\nabla\Psi$ in \eqref{diag_comp}, may be proportional to $N$. Therefore
also the estimate $\|\lambda_j-\tilde\lambda_j\|_{L^\infty(\rset^N)}=\BIGO(M^{-1})$,
obtained from \eqref{diag_comp} with $\kappa=1$, may have a constant proportional to $N$.

\subsection{Remainder terms in the Moyal composition}
The following two lemmas estimate remainder terms in the Moyal expansions of the compositions
that we will use below. Here $\FT [C]$ denotes the standard Fourier transform of $C$, see \ref{FT:definition}. %\eqref{fourier_def}.
%i.e.,
%\begin{equation}
%\FT[C]\}(\xi_x,\xi_p) = \int_{\rset^{2N}} C(x',p') \EXP{-\IU (x'\cdot\xi_x + p'\cdot \xi_p)}\Rd x' \Rd p'
%\PERIOD
%\end{equation}
We also use the abbreviation $z \equiv (x,p)\in\rset^{2N}$ and $\zeta\equiv (\xi_x,\xi_p)\in\rset^{2N}$ 
in which case we write
$\FT[C](\zeta) = \int_{\rset^{2N}} C(z') \EXP{-\IU z'\cdot \zeta}\Rd z'$.
%as defined in \eqref{fourier_def}.
%
\begin{lemma}\label{moyal_lemma}
Assume $C:\rset^{2N}\rightarrow \cset^{d\times d}$ and $D:\rset^{2N}\rightarrow \cset^{d\times d}$
and that there exist constants $M_\gamma$, $N_\gamma$ such that for integer multi-indices $\gamma = (\gamma_1,\dots,\gamma_{2N})$
\[
\begin{split}
 \|\partial^\gamma_z C\|_{L^2(\rset^{2N})} +\|\partial^\gamma_z D\|_{L^2(\rset^{2N})} &\leq M_\gamma\COMMA\quad \mbox{for all $|\gamma|\leq m+1$}\, , \\
 \| \FT[\partial^\gamma_z C]\|_{L^1(\rset^{2N})} &\leq N_\gamma\COMMA\quad \mbox{for all $|\gamma|\leq m+1$} \COMMA 
 \end{split}
\]
%and similarly for $D$,
%have up to $m+1$ derivatives bounded in $L^2(\rset^{2N})$
%and in addition $m+1$ derivatives of one of these functions has bounded Fourier transform in $L^1(\rset^{2N})$,
then the composition has the expansion
\[
\begin{split}
&C\MP D(x,p)\\
&= 
\sum_{n=0}^m\frac{1}{n!} 
\left(\IU\frac{\nabla_{x'}\cdot \nabla_{p} -\nabla_{x}\cdot \nabla_{p'}}{2M^{1/2}}\right)^n C(x,p)D(x',p')\Big|_{(x,p)=(x',p')} + r(z)\COMMA\\
\end{split}
\]
where the remainder $r\in L^2(\rset^{2N})$ satisfies
\[
\begin{split}
&\|r\|_{L^2(\rset^{2N})}^2
\le 
\left(\frac{1}{4M}\right)^{m+1}\frac{1}{(m!)^2(2m+1)}\\
&\qquad\times \|\int_0^1 \EXP{\frac{\IU s}{M^{1/2}}(\nabla_{x'}\cdot \nabla_{p} -\nabla_{x}\cdot \nabla_{p'})}\\
&\qquad\times (\nabla_{x'}\cdot \nabla_{p} -\nabla_{x}\cdot \nabla_{p'})^{m+1} C(x,p)D(x',p')\Rd s\Big|_{z=z'}
\|^2_{L^2(\rset^{2N})} \\
&=\BIGO(M^{-(m+1)})\PERIOD
\end{split}
\]
If $C(x,p)=A(x)$ %
depends only on the $x$-coordinate and 
\[
\|\partial^\gamma_x A\|_{L^\infty(\rset^{N})} \leq M_\gamma\COMMA\;\;\mbox{for all $|\gamma|\leq m+1$}
\]
then
%has $m+1$
%derivatives bounded in $L^\infty(\rset^{N})$ 
%
%
%
\begin{equation}\label{1rxp}
\begin{split}
&r(x,p) = \left(\frac{1}{2M^{1/2}}\right)^{m+1}\\
&\quad \times \int_0^1  \EXP{-\frac{\IU s}{2}M^{-1/2}\nabla_{x}\cdot\nabla_{p}}
(-\IU\nabla_{x}\cdot\nabla_{p})^{m+1}{A(x)D(x',p)}  \frac{(1-s)^m}{ m! }\Rd s\Big|_{x'=x}\COMMA\\
\end{split}
\end{equation}
or if $C(x,p)=A(p)$ %
depends only on the $p$-coordinate and 
$
\|\partial^\gamma_p A\|_{L^\infty(\rset^{N})} \leq M_\gamma\COMMA\;\;\mbox{for all $|\gamma|\leq m+1$}$
%has $m+1$
%derivatives bounded in $L^\infty(\rset^{N})$ 
then
\begin{equation}\label{1rxp2}
\begin{split}
&r(x,p) 
= \left(\frac{1}{2M^{1/2}}\right)^{m+1}\\
&\quad\times\int_0^1  \EXP{\frac{\IU s}{2}M^{-1/2}\nabla_{x}\cdot\nabla_{p}}
(\IU\nabla_{x}\cdot\nabla_{p})^{m+1}{A(p) D(x,p')}  \frac{(1-s)^m}{ m! }\Rd s\Big|_{p'=p}\PERIOD\\
\end{split}
\end{equation}
\end{lemma}

\begin{lemma}\label{comp_lemma}
Assume  $C:\rset^{2N}\rightarrow \cset^{d\times d}$ and $D:\rset^{2N}\rightarrow \cset^{d\times d}$
belong to $L^2(\rset^{2N})$
and in addition one of these functions has  its Fourier transform in $L^1(\rset^{2N})$, then
\begin{equation}\label{CD_fourier}
\begin{split}
&\|C\MP D\|_{L^2(\rset^{2N})}\\
&\le  \frac{1}{(2\pi)^{2N}}\min(\|C\|_{L^2(\rset^{2N})} \|\FT  D\|_{L^1(\rset^{2N})},
\|\FT  C\|_{L^1(\rset^{2N})} \| D\|_{L^2(\rset^{2N})})\COMMA
\end{split}
\end{equation}
and if $A:\rset^{N}\rightarrow \cset^{d\times d}$ depends only on the $x$-coordinate (or only on the $p$-coordinate) and is 
bounded in $L^\infty(\rset^{N})$ then
\begin{equation}\label{CD}
\begin{split}
\|A\MP D\|_{L^2(\rset^{2N})}&\le \|A\|_{L^\infty(\rset^N)}\|D\|_{L^2(\rset^N)}\COMMA\\
\|D\MP  A\|_{L^2(\rset^{2N})}&\le \|A\|_{L^\infty(\rset^N)}\|D\|_{L^2(\rset^N)}\PERIOD\\
\end{split}
\end{equation}
\end{lemma}
The proofs of Lemmas \ref{moyal_lemma} and \ref{comp_lemma} are in Section~\ref{sec_weyl}.

\section{Proofs}\label{sec:proof} 
 \subsection{Proof of Lemma \ref{lemma1}}
 \begin{proof}%[Proof of Lemma \ref{lemma1}.] 
 The proof has three steps: error representations, estimation of remainder terms and evaluation of the trace.

{\it Step 1. Error representations.} %
Let $y:[0,\infty)\times \mathbb R^{2N}\rightarrow \CSP$
and $\by_i:[0,\infty)\times \mathbb R^{2N}\rightarrow \CSP,\ i=0,1,2$ 
be defined by 
\[
\begin{split}
y(t,z)&=\EXP{-t\bH_0(z)}\bB_0\COMMA\\
\OPERW{\by_0(t,z)}& =\EXP{-t\OPER{\bH}}\COMMA\\
\OPERW{\by_1(t,z)}& =\OPER{\bB}_0\EXP{-t\OPER{\bH}}\COMMA\\
\OPERW{\by_2(t,z)}& =\EXP{-t\OPER{\bH}}\OPER{\bB}_0\PERIOD
\end{split}
\] 
Differentiation yields the linear ordinary differential equation, with $z$ as a parameter,
\[
\partial_t y(t,z)=-\bH_0(z)y(t,z),\quad t>0\COMMA\quad  y(0,\cdot)=\bB_0.
\]
The dynamics of $\by_1$ satisfies
\[
\partial_t \OPERW{\by_1(t,z)}=\OPER{\bB}_0\frac{d}{dt}\EXP{-t\OPER{\bH}}
=-\OPER{\bB}_0\EXP{-t\OPER{\bH}}\OPER{\bH}
=-\OPERW{\by_1(t,z)}\OPER{\bH}
\]
 and the corresponding evolution equation for
$\by_1$ is the linear partial differential equation
\[
\partial_t \by_1(t,z)=- \by_1(t,z)\MP  \bH(z) \COMMA \quad t>0,\quad  \by_1(0,\cdot)=\bB_0,
\]
with time-independent generator. Analogously  we obtain the equations
\[
\begin{split}
\partial_t \by_2(t,z) &=-  \bH(z)\MP \by_2(t,z) \COMMA \quad t>0,\quad  \by_2(0,\cdot)=\bB_0\COMMA\\
\partial_t \by_0(t,z) &=-  \bH(z)\MP \by_0(t,z) \COMMA \quad t>0,\quad  \by_0(0,\cdot)=\Id\PERIOD\\
\end{split}
\]

We have the two linear equations
\[
\begin{split}
\partial_t \OPERW{\by_1(t,z)} + {\OPERW{ \by_1(t,z)\MP  {\bH}(z)}} &=0\COMMA\quad t>0\COMMA\\
\partial_t \OPERW{ y(t,z)} + \OPERW{  y(t,z)\MP  \bH(z)} &= \OPERW{  y(t,z)\MP  \bH(z)} - \OPERW{ y(t,z) \bH_0(z)} \COMMA\quad t>0\COMMA\\
\end{split}
\]
and Duhamel's principle implies
\begin{equation}\label{duhamel}
\begin{split}
\hat y(t,z)-\OPERW{\by_1(t,z)} 
&=\int_0^t \big(y(s,z)\MP  \bH(z)-y(s,z)\bH_0(z)\big)^{\OPERW{ }}\, \OPERW{\by_0(t-s,z)} \Rd  s\\
&= \int_0^t \underbrace{\big(y(s,z)\MP  \bH(z)-y(s,z)\bH_0(z)\big)^{\OPERW{ }}}_{=:\hat R_{1s}}\, 
\EXP{-(t-s)\OPER{\bH}} \Rd  s\PERIOD\\
\end{split}
\end{equation}
Similarly we have
\[
\begin{split}
\partial_t \OPERW{\by_2(t,z)} +  \OPERW{ {\bH}(z) \MP  \by_2(t,z) } &=0\COMMA\quad t>0\COMMA\\
\partial_t \OPERW{ y(t,z)} + \OPERW{ {\bH}(z)\MP  y(t,z) }&=  \OPERW{  {\bH}(z)\MP  y(t,z)} - \OPERW{ y(t,z) \bH_0(z)} \COMMA\quad t>0\COMMA\\
\end{split}
\]
and 
\begin{equation}\label{duhamel_22}
\begin{split}
\hat y(t,z)-\OPERW{\by_2(t,z)} 
&=\int_0^t \, \OPERW{\by_0(t-s,z)}\big( \bH(z) \MP y(s,z)-y(s,z)\bH_0(z)\big)^{\OPERW{ }} \Rd  s\\
&= \int_0^t \EXP{-(t-s)\OPER{\bH}}\underbrace{\big(\bH(z)\MP y(s,z) -y(s,z)\bH_0(z)\big)^{\OPERW{ }}}_{=:\hat R_{2s}}\,  \Rd  s\PERIOD\\
\end{split}
\end{equation}

The remainder terms satisfy by \eqref{comp_rule} 
\begin{equation}\label{R_est}
\begin{split}
R_{1s}&=(\EXP{-s\bH_0}\bB_0)\MP  \bH-\EXP{-s\bH_0}\bB_0 \bH_0\\
&= 
\big((\EXP{-s\bH_0}\bB_0)\MP  \bH_0  -\EXP{-s\bH_0}\bB_0 \bH_0\big)+ (\EXP{-s\bH_0}\bB_0)\MP  r_0\\
&=\big( -\frac{\IU M^{-1/2}}{2}\{\EXP{-s\bH_0}\bB_0 ,\bH_0\} + r_{2}\big) + (\EXP{-s\bH_0}\bB_0)\MP  r_0\\
&=\big(-\frac{\IU}{2M^{1/2}}\bB_0\underbrace{ {\{\EXP{-s\bH_0} ,\bH_0\}}}_{=0} 
-\frac{\IU}{2M^{1/2}}\EXP{-s\bH_0} {\{\bB_0 ,\bH_0\}}  + r_{2}\big)\\
&\quad + (\EXP{-s\bH_0}\bB_0)\MP  r_0\\
&=\big(-\frac{\IU}{2M^{1/2}}\EXP{-s\bH_0} {\{\bB_0 ,\bH_0\}} + r_{2}\big)+ (\EXP{-s\bH_0}\bB_0)\MP  r_0\\
\end{split}
\end{equation}
and analogously
\begin{equation}\label{R_est2}
\begin{split}
R_{2s}&=(\bH\MP  \EXP{-s\bH_0}\bB_0)-\EXP{-s\bH_0}\bB_0 \bH_0\\
&=(\bH_0\MP  \EXP{-s\bH_0}\bB_0)-\EXP{-s\bH_0}\bB_0 \bH_0+r_0\MP (\EXP{-s\bH_0}\bB_0)\\
&=\big(-\frac{\IU}{2M^{1/2}} {\{\bH_0,\EXP{-s\bH_0}\bB_0 \}}+ r'_{2}\big)+ r_0\MP (\EXP{-s\bH_0}\bB_0)\\
&=\big(-\frac{\IU}{2M^{1/2}}\EXP{-s\bH_0} {\{\bH_0,\bB_0 \}}+ r'_{2}\big)+ r_0\MP (\EXP{-s\bH_0}\bB_0)\PERIOD\\
\end{split}
\end{equation}
We have $r_{2}=(r_2')^*$ and the next step shows that 
\[
\|r_2\|_{L^2(\rset^{2N})}=\BIGO(M^{-1})\PERIOD\]

{\it Step 2. Estimation of remainder terms.}
The remainder representations \eqref{1rxp} and \eqref{1rxp2} applied to the $x$ and $p$ dependent terms in $\bH_0$ separately implies %
\[
\begin{split}
&r_2(x,p,s)\\
&=\frac{-1}{4M} \int_0^1 \EXP{\frac{\IU\sigma}{2M^{1/2}} (\nabla_x\cdot\nabla_p)}
\Big((\nabla_x\cdot\nabla_p)^2\big(\frac{|p|^2}{2}\Id \EXP{-s \bH_0(x,p')}\bB_0(x,p')\big)\\
&\qquad 
+(\nabla_x\cdot\nabla_p)^2\big(\bLambda(x) \EXP{-s \bH_0(x',p)}\bB_0(x',p)\big)\Big)\Big|_{
{\tiny \begin{array}{c}
 x=x'\\
p=p'
\end{array}}}
(1-\sigma)\Rd  \sigma\\
\end{split}
\]
which can be written
\begin{equation}\label{r_2_uppskattn}
\begin{split}
&r_2(x,p,s)\\
%=\frac{-1}{4M} \int_0^1 \EXP{\frac{\IU\sigma}{2M^{1/2}} (\nabla_x\cdot\nabla_p)}
%\Big((\nabla_x\cdot\nabla_p)^2\big(\frac{|p|^2}{2}\Id \EXP{-s \bH_0(x,p')}\bB_0(x,p')\big)\\
%&\qquad 
%+(\nabla_x\cdot\nabla_p)^2\big(\bLambda(x) \EXP{-s \bH_0(x',p)}\bB_0(x',p)\big)\Big)\Big|_{
%{\tiny \begin{array}{c}
% x=x'\\
%p=p'
%\end{array}}}
%(1-\sigma)\Rd  \sigma\\
&=\frac{-1}{4M} \int_0^1 \EXP{\frac{\IU\sigma}{2M^{1/2}} (\nabla_x\cdot\nabla_p)}
\Big(\EXP{-s \bH_0(x,p')}\bB_0(x,p')(s^2 \nabla\bLambda(x)\cdot\nabla\bLambda(x) 
-s \Delta\bLambda(x)) \\
&\qquad -2s \EXP{-s \bH_0(x,p')}\nabla\bLambda(x)\cdot \nabla_x\bB_0(x,p')
+\EXP{-s \bH_0(x,p')}\Delta_x\bB_0(x,p') \\
&\qquad 
-s \EXP{-s \bH_0(x',p)}\bB_0(x',p)\Delta\bLambda(x)\\
&\qquad +s^2 \EXP{-s \bH_0(x',p)}\bB_0(x',p)\sum_{jk}p_jp_k\partial_{x_j}\partial_{x_k}\bLambda(x)\\
&\qquad -2s\EXP{-s \bH_0(x',p)}\sum_{jk}p_j\partial_{x_j}\partial_{x_k}\bLambda(x)\partial_{p_k}\bB_0(x',p)\\
&\qquad +\EXP{-s \bH_0(x',p)}\sum_{jk} \partial_{x_j}\partial_{x_k}\bLambda(x)\partial_{p_k}\partial_{p_k}\bB_0(x',p)
\Big)_{x=x',p=p'}
(1-\sigma)\Rd  \sigma\PERIOD\\
\end{split}
\end{equation}
We have by \eqref{CD} in Lemma \ref{comp_lemma}, where $A$ is the function of $x$ and $D$ is the function of $(x',p')$ in the estimates of $r_0$ and $r_2$ in \eqref{diag_comp} and \eqref{r_2_uppskattn},
\[
\begin{split}
&\|r_0\MP  \EXP{-s \bH_0}\bB_0\|_{L^2(\rset^{2N})}
\le \|r_0\|_{L^\infty(\rset^{N})}\|\EXP{-s \bH_0}\bB_0\|_{L^2(\rset^{2N})}\COMMA \\
&\frac{\|\EXP{-s \bH_0} \{\bB_0,\bH_0\}\|_{L^2(\rset^{2N})}}{M^{1/2}}
=\frac{\|\EXP{-s \bH_0}(p\cdot\nabla_x\bB_0
-\nabla\bLambda\cdot \nabla_p\bB_0)\|_{L^2(\rset^{2N})}}{M^{1/2}}\COMMA\\
\end{split}
\]
\[
\begin{split}
&4M\|r_{2}\|_{L^2(\rset^{2N})}
\le s \|\EXP{-s \bH_0}\bB_0(x,p)\|_{L^2(\rset^{2N})}\|\Delta\bLambda\|_{L^\infty(\rset^{2N})}\\
&\qquad +s^2\sum_{jk} \|p_jp_k\EXP{-s \bH_0}\bB_0(x,p)\|_{L^2(\rset^{2N})}\|\partial_{x_jx_k}\bLambda\|_{L^\infty(\rset^{2N})}\\
&\qquad +2s\sum_{jk} \|p_j\EXP{-s \bH_0}\partial_{p_k}\bB_0(x,p)\|_{L^2(\rset^{2N})}\|\partial_{x_jx_k}\bLambda\|_{L^\infty(\rset^{2N})}\\
&\qquad +\sum_{jk} \|\EXP{-s \bH_0}\partial_{p_jp_k}\bB_0(x,p)\|_{L^2(\rset^{2N})}\|\partial_{x_jx_k}\bLambda\|_{L^\infty(\rset^{2N})}\\
&\qquad +\|\EXP{-s \bH_0}\bB_0(x,p)(s^2 \nabla\bLambda\cdot\nabla\bLambda-s \Delta\bLambda)\|_{L^2(\rset^{2N})}\\
&\qquad +2s \|\EXP{-s \bH_0}\nabla\bLambda\cdot\nabla_x\bB_0(x,p)\|_{L^2(\rset^{2N})}\\
&\qquad +\|\EXP{-s \bH_0}\Delta_x\bB_0(x,p)\|_{L^2(\rset^{2N})}\PERIOD\\
\end{split}
\]
Here we see that $r_0$ depends only on the $x$-coordinate 
and the composition in $r_2$ has one factor that also depends only on the $x$-coordinate.
Therefore, by Lemma \ref{moyal_lemma} and \eqref{CD} we obtain
\begin{equation}\label{RR_est}
\|R_{1s}\|_{L^2(\rset^{2N})} + \|R_{2s}\|_{L^2(\rset^{2N})} = \BIGO(M^{-1/2})\COMMA
\end{equation}
provided  there holds
\begin{equation}\label{R0}
\|r_0\|_{L^\infty(\rset^{N})} =\BIGO(M^{-1})\COMMA\\
\end{equation}
and
\begin{equation}\label{R1}
\begin{split}
&\||\nabla\bLambda|^2\|_{L^\infty(\rset^{N})}+\|\Delta\bLambda\|_{L^\infty(\rset^{N})}+\max_{j,k}\|\partial_{x_j}\partial_{x_k}\bLambda\|_{L^\infty(\rset^{N})} =\BIGO(1)\COMMA\\
& \sum_{jk}\sup_{t\in [0,1/T]}
\|\EXP{-t\bH_0}{p_j}\partial_{p_k}\bB_0\|_{L^2(\rset^{2N})} \\
&\quad +\sum_{jk}\sup_{t\in [0,1/T]}
\|\EXP{-t\bH_0}\partial_{p_j}\partial_{p_k}\bB_0\|_{L^2(\rset^{2N})} 
=\BIGO(1)\COMMA\\
& \sum_{jk}\sup_{t\in [0,1/T]}
\|\EXP{-t\bH_0}{p_j}{p_k}\bB_0\|_{L^2(\rset^{2N})} 
=\BIGO(1)\COMMA \\
&\sup_{t\in [0,1/T]}
\|\EXP{-t\bH_0}\Delta_x\bB_0\|_{L^2(\rset^{2N})}
+\sup_{t\in [0,1/T]}
\|\EXP{-t\bH_0}\nabla\bLambda\cdot\nabla_x\bB_0\|_{L^2(\rset^{2N})}
=\BIGO(1)\COMMA\\
&\sup_{t\in [0,1/T]}
\|\EXP{-t\bH_0}\bB_0\|_{L^2(\rset^{2N})}
=\BIGO(1)\COMMA\\
&\sup_{t\in [0,1/T]}\|\EXP{-t\bH_0}(p\cdot\nabla_x\bB_0
-\nabla\bLambda\cdot \nabla_p\bB_0)\|_{L^2(\rset^{2N})} =\BIGO(1)\PERIOD\\
\end{split}
\end{equation}

{\it Step 3. Evaluation of the trace.}
The Hilbert-Schmidt inner product, $\TR(\OPER{B}^*\hat C)$,  for symmetric operators on
$L^2(\rset^{N})$, and its Cauchy's inequality imply together with  \eqref{duhamel} 
\begin{equation}\label{A_trace_new}
\begin{split}
&\Big[\TR\Big(\OPER{\bA}_\tau\big(\OPER{\by}(t)-\hat y(t)\big)\Big)\Big]^2 \\
&=\big(\TR(\OPER{\bA}_\tau\int_0^t\hat R_{1s} \EXP{-(t-s)\OPER{\bH}} \Rd s)\big)^2 \\
&=\big( \int_0^t \TR(\OPER{\bA}_\tau\hat R_{1s} \EXP{-(t-s)\OPER{\bH}})\Rd s\big)^2\\
&\le  \Big( \int_0^t \big(\TR(\EXP{\IU\tau M^{1/2}\OPER{\bH}}\OPER{\bA}_0^2 \EXP{-\IU\tau M^{1/2}\OPER{\bH}})
\TR(\EXP{-(t-s)\OPER{\bH}} \hat R_{1s}^2   \EXP{-(t-s)\OPER{\bH}} )\big)^{1/2}\Rd s\Big)^2\\
&=  \Big( \int_0^t \big(\TR(\OPER{\bA}_0^2 )
\TR( \hat R_{1s}^2   \EXP{-2(t-s)\OPER{\bH}} )\big)^{1/2}\Rd s\Big)^2\\
&\le  \int_0^t \TR(\OPER{\bA}_0^2) \Rd  s \int_0^t 
\TR((\underbrace{\hat\Psi\hat R_{1s}\hat\Psi^*}_{=:\OPER{\bar R}_s})^2 \EXP{-2(t-s)\hat { H}} )\Rd s\PERIOD\\
\end{split}
\end{equation}
Lemma \ref{composition} establishes
\[
 \int_0^t \TR(\OPER{\bA}_0^2) \Rd  s 
 =t\NORMFAC^{N}\int_{\rset^{2N}} \TR \big(\bA^2(0,z)\big)\Rd  z
 \]
 and we assume that the initial data satisfies
 \begin{equation}\label{R2}
 \|\bA_0\|_{L^2(\rset^{2N})}=\BIGO(1)\PERIOD
 \end{equation}
 The basis $\{ \Phi_n\}_{n=1}^\infty$ of solutions to the Schr\"odinger equation \eqref{schrod_eq} implies  
\[
\begin{split}
\TR( \OPER{\bar R}_s^2 \EXP{-2(t-s)\hat { H}} ) &=
\sum_{n=1}^\infty\langle \Phi_n,  \OPER{\bar R}_s^2\EXP{-2(t-s)\hat { H}} \Phi_n\rangle\\
&=\sum_{n=1}^\infty\langle \Phi_n,  \OPER{\bar R}_s^2 \Phi_n\rangle \EXP{-2(t-s)E_n/T}\\
&\le \EXP{-2(t-s)E_1/T}\sum_{n=1}^\infty\langle \Phi_n,  \OPER{\bar R}_s^2 \Phi_n\rangle\\
&= \EXP{-2(t-s)E_1/T} \TR\big(\hat R_s^*\hat R_s\big)
\, \\
\end{split}
\]
and Lemmas \ref{composition}, \ref{moyal_lemma} and \eqref{CD} combined with \eqref{RR_est} show that
\begin{equation}\label{cr}
\begin{split}
\TR\big( \hat R_s^*\hat R_s\big) 
&= \NORMFAC^{N} \int_{\rset^{2N}} \TR \big(R_s^*R_s\big)\Rd z\, \\
&=\BIGO(M^{-1+N/2})\COMMA\\
\end{split}
\end{equation}
so that
\[
\TR\Big(\OPER{\bA}_\tau\big(\OPER{\by}(t)-\hat y(t)\big)\Big)=\BIGO(M^{-1/2+N/2})\PERIOD
\]
In the special case where $\bA_0=\bB_0=\Id$, we similarly obtain
\[
\TR(\EXP{-\beta\OPER{\bH}}-\OPERW{ \EXP{-\beta \bH_0}})=\BIGO(M^{-1/2+N/2})\, 
\]
provided
\begin{equation}\label{R10}
\sup_{t\in [0,1/T]}
\|\EXP{-t\bH_0}\|_{L^2(\rset^{2N})}= \BIGO(1)\COMMA
\end{equation}
and by \eqref{trace1}  %
\begin{equation}\label{trace_normalize}
\begin{split}
\TR(\widehat {\EXP{-\beta \bH_0}})&= \NORMFAC^{N} \int_{\rset^{2N}} \TR (\EXP{-\beta \bH_0(z)}
)\Rd  z\COMMA\\
\TR(\EXP{-\beta\OPER{\bH}}) &= \NORMFAC^{N} \big(\int_{\rset^{2N}} \TR (\EXP{-\beta \bH_0(z)}
)\Rd  z + \BIGO(M^{-1/2})\big)\PERIOD\\
\end{split}
\end{equation}
In conclusion we have for $M$ sufficiently large
\[
\begin{split}
&|\frac{\TR\Big(\OPER{\bA}_\tau \big((\bB_0{\EXP{-\beta \bH_0}})^{\OPERW{}} 
-\OPER{\bB}_0{\EXP{-\beta \OPER{\bH}}}\big)\Big)}{
\TR ({\EXP{-\beta\OPER{\bH}}})}| 
=\BIGO(M^{-1/2})\PERIOD
\end{split}
\]
%We conclude that the first term  in the right hand side of \eqref{steps} has the bound $\BIGO(M^{-1/2})$, by assuming \eqref{R1},\eqref{R2}, \eqref{R7}, and \eqref{R6}.
%
\end{proof}

\subsection{Proof of Lemma \ref{lemma12}}
\begin{proof}%[Proof of Lemma \ref{lemma12}]
This proof is analogous to the proof of Lemma \ref{lemma1} and has three similar steps: error representation, estimation of remainder terms, and evaluation of the trace. 

{\it Step 1. Error representation.} We will compare the classical dynamics
$\OPERW{y(t,z_0)}:=\OPERW{{\brA}(t,z_0)}$ 
with the quantum dynamics \[
\OPERW{\bA_t}= \EXP{\IU t M^{1/2} \OPER{\bH}} 
{\bA}_0\EXP{-\IU t M^{1/2} \OPER{\bH}}=:\OPERW{\by(t,z)}\PERIOD\]
The classical dynamics for the symbol $y:[0,\infty)\times \mathbb R^{2N}\rightarrow \CSP$
satisfies by \eqref{poissonbracket}  and \eqref{a_char} the linear partial differential equation 
\[
\partial_t y(t,z)=\{\bH_0(z),y(t,z)\} ,\quad t>0\COMMA\quad  y(0,\cdot)= {\bA}_0\COMMA
\]
that is, we have the diagonal matrix
\[
y_{jk}(t,z_0)=\left\{\begin{array}{cc}
(\bA_0)_{jj}(z_t^j(z_0)) & \mbox{ if } k=j\COMMA\\
0 & \mbox{ if } k\ne j\COMMA\\
\end{array}\right.
\quad \mbox{ for } z_0\in \rset^{2N} \mbox{ and } t\ge 0\PERIOD
\]
The evolution of $\by:[0,\infty)\times \mathbb R^{2N}\rightarrow \CSP$ 
is defined by the quantum dynamics 
\[\partial_t\OPERW{\by(t,z)} = \IU M^{1/2}[\OPER{\bH}, \OPERW{\by(t,z)}]\COMMA \quad t>0\COMMA\quad  \OPERW{\by(0,\cdot)}
=\OPER{ \bA}_0,
\]
which implies
\[
\partial_t \by(t,z)= \IU M^{1/2} \big(\bH(z) \MP  \by(t,z) - \by(t,z)\MP  \bH(z)\big)
\COMMA \quad t>0\COMMA\quad  {\by(0,\cdot)}
={\bA}_0\PERIOD
\]
Duhamel's principle yields
\[
\begin{split}
&\OPERW{ y(t,z)}-\OPERW{\by(t,z)} \\
&=\int_0^t \EXP{\IU(t-s)M^{1/2}\OPER{\bH}}
\Big(\{\bH_0(z),y(s,z)\} - \IU M^{1/2} \big(\bH \MP  y(s,z) - y(s,z)\MP  \bH\big) \Big)^{\OPERW{}} \\
&\qquad \times \EXP{-\IU(t-s)M^{1/2}\OPER{\bH}} \Rd  s\\
&=:\int_0^t 
 \EXP{\IU(t-s)M^{1/2}\OPER{\bH}}\hat R_s\EXP{-\IU(t-s)M^{1/2}\OPER{\bH}} \
\Rd  s \PERIOD\\
\end{split}
\]

{\it Step 2. Estimation of remainder terms.}
Since $y(s,z)$ and $\bH_0(z)$ are diagonal the expansion of the composition in Lemma \ref{moyal_lemma}
and \eqref{CD}
imply %
\begin{equation}\label{R_dyn_est}
\begin{split}
R_s &= \{\bH_0, y(s,\cdot)\} -\IU M^{1/2}\big(\bH\MP  y(s,\cdot)-y(s,\cdot)\MP  \bH\big)\\
&= \underbrace{\{\bH_0, y\} -\IU M^{1/2}(\bH_0\MP  y-y\MP  \bH_0)}_{=r_2=\BIGO(M^{-1})}
-\IU M^{1/2}(r_0\MP  y-y\MP  r_0)\\
&=r_2 -\IU M^{1/2}(r_0\MP  y-y\MP  r_0)= \BIGO(M^{-\min(1,\kappa-1/2)})\COMMA\\
\end{split}
\end{equation}
where by Lemma \ref{moyal_lemma}
\[
r_2=\frac{\IU}{16M} \int_0^1 \EXP{\IU\frac{s}{2M^{1/2}} \nabla_x\cdot\nabla_p} \sum_{|\alpha|=3} \partial^\alpha_x\bLambda(x)\partial^\alpha_p y(x',p)\big|_{x'=x}(1-s)^2\Rd  s\COMMA
\]
and by \eqref{CD}
\begin{equation}\label{R5}
\begin{split}
&\|r_2\|_{L^2(\rset^{2N})}
\le  \frac{1}{48M}\ \sum_{|\alpha|=3} \|\partial^\alpha_x\bLambda\|_{L^\infty(\rset^{N})}
\|\partial_p^\alpha y(t,\cdot)\|_{L^2(\rset^{2N})}\COMMA\\
&\|r_0\MP  y-y\MP r_0\|_{L^2(\rset^{2N})}\le 2\|r_0\|_{L^\infty(\rset^{N})}\|y(t,\cdot)\|_{L^2(\rset^{2N})} \PERIOD
\end{split}
\end{equation}
Let $z(t)=z^j_t$. To estimate $\sum_{|\alpha|\le 3}\|\partial_p^\alpha y(t,\cdot)\|_{L^2(\rset^{2N})}$ we use the first order flow $\nabla_{z_0} z^j_t\big(z_0\big)=:z'(t)$, second order flow 
$z''_{,km}(t)=\partial_{z_k(0) z_m(0)} z(t)$ and third order flow $z'''(t)$, which
are solutions to the system
\[
\begin{split}
\dot z_i(t) &= (J\nabla \bH_0(z_t))_i=:f_i(z_t)\COMMA \\
z'_{i,k}(t) &= \Id_{ik}+ \int_0^t\sum_{k'} f'_{i,k'}(z_s)z'_{k',k}(s)\Rd s\COMMA \quad f'_{i,k'}(z):=\partial_{z_{k'}}f_i(z) \COMMA\\
z''_{i,km}(t) &=  \int_0^t \Big(\sum_{k'}  f'_{i,k'}(z_s)z''_{k',k m}(s) +  \sum_{k'm'} f''_{i,k'm'}(z_s)z'_{k',k}(s)z'_{m',m}(s)\Big) \Rd s\COMMA\\
&\qquad f''_{i,k'm'}(z):=\partial_{z_{k'}z_{m'}}f_i(z)\COMMA\\
 z'''_{i,kmn}(t) &=  \int_0^t \Big(\sum_{k'} f'_{i,k'}(z_s)z'''_{k',k mn}(s) 
+  \sum_{k'm'} f''_{i,k'm'}(z_s)z'_{k',k}(s)z''_{m',mn}(s) \\
&\qquad +\sum_{k'm'} f''_{i,k'm'}(z_s)z''_{k',kn}(s)z'_{m',m}(s)\\
&\qquad +\sum_{k'n'} f''_{i,k'n'}(z_s)z''_{k',k m}(s)z'_{n',n}(s)\\
&\qquad+\sum_{k'm'n'} f'''_{i,k'm'n'}(z_s)z'_{k',k}(s)z'_{m',m}(s)z'_{n',n}(s)\Big)\Rd s
\COMMA\\
\end{split}
\]
where $J$ is the $2N\times 2N$ matrix
\begin{equation}\label{J_def}
J=\left[\begin{array}{cc}
0  & \Id\\
-\Id & 0
\end{array}
\right]
\PERIOD
\end{equation}
By summation and maximization over indices we obtain the integral inequalities 
\begin{equation}\label{z_flow_est}
\begin{split}
\max_{ik}
|z'_{i,k}(t)| &\le 1+ \int_0^t \sum_{k'} |f'_{i,k'}(z_s)| \max_{ik}|z'_{i,k}(s)|\Rd  s\COMMA\\
\max_{ik}\sum_{m}
|z''_{i,km}(t)| &\le \int_0^t \sum_{k'} |f'_{i,k'}(z_s)| \max_{ik}\sum_{m}
|z''_{i,km}(s)|\Rd  s\\
&\qquad + \int_0^t \sum_{k'm'} |f''_{i,k'm'}(z_s)| (\max_{ik}|z'_{i,k}(s)|)^2\Rd  s\COMMA\\
\max_{ik}\sum_{mn}
|z'''_{i,kmn}(t)| &\le \int_0^t \sum_{k'} |f'_{i,k'}(z_s)| \max_{ik}\sum_{mn}
|z'''_{i,kmn}(s)|\Rd  s\\
&\quad + \int_0^t \sum_{k'm'} |f''_{i,k'm'}(z_s)| \max_{ik}|z'_{i,k}(s)| \max_{ik}\sum_m
|z''_{i,km}(s)|\Rd  s\\
&\quad + \int_0^t \sum_{k'm'n'} |f'''_{i,k'm'n'}(z_s)| (\max_{ik}|z'_{i,k}(s)|)^3\Rd  s
\PERIOD\\
\end{split}
\end{equation}
The functions $\max_{ij}\sum_{|\alpha|\le 2} \partial_{z_0}^\alpha \partial_{z_j}z_i(t,z_0)$ can therefore be estimated as in \cite{gronwall} by Gronwall's inequality, which states: if there is a positive constant $K$ and  continuous positive functions $\beta, u:[0,\infty)\rightarrow [0,\infty)$ such that  
\[u(t)\le K + \int_0^t \beta(s)u(s)\Rd  s, \quad \mbox{ for } t>0\COMMA\]
 then 
 \[ u(t)\le K\EXP{\int_0^t \beta(s)\Rd  s}, \quad \mbox{ for } t>0\PERIOD\]
Gronwall's inequality applied to \eqref{z_flow_est} implies
\begin{equation}\label{z_tz_0}
\max_{ij}\sum_{|\alpha|\le 2} \|\partial_{z_0}^\alpha \partial_{z_j}z_i(t,z_0)\|_{L^\infty(\rset^{2N})}=\BIGO(1)
\end{equation}
provided that
\begin{equation}\label{R7'}
\begin{split}
\max_i\sum_{|\alpha|\le 3}\|\partial^\alpha_x \partial_{x_i}\lambda_j\|_{L^\infty(\rset^{N})} &= \BIGO(1)\PERIOD\\
\end{split}
\end{equation}

The flows $z',z'',z'''$ determine the derivatives of the diagonal matrix $y(t,\cdot)$, using $a_j(z):=\bA_{jj}(0,z)$, by
\begin{equation}\label{A_z_t}
\begin{split}
 \partial_{z_k}y_{jj}(t) &= \sum_{k'} a'_{j,k'}(z_t)z'_{k',k}(t)\COMMA\qquad a'_{j,k'}(z):=\partial_{z_{k'}}a_j(z),\\
\partial_{z_kz_m}y_{jj}(t) &=  \sum_{k'} a'_{j,k'}(z_t)z''_{k',k m}(t) +  \sum_{k'm'} a''_{j,k'm'}(z_t)z'_{k',k}(t)z'_{m',m}(t)\COMMA\\
\partial_{z_kz_mz_n}y_{jj}(t)&=  \sum_{k'} a'_{j,k'}(z_t)z'''_{k',k mn}(t) 
+  \sum_{k'm'} a''_{j,k'm'}(z_t)z'_{k',k}(t)z''_{m',mn}(t)\\
&\qquad +\sum_{k'm'} a''_{j,k'm'}(z_t)z''_{k',kn}(t)z'_{m',m}(t)\\
&\qquad +\sum_{k'n'} a''_{j,k'n'}(z_t)z''_{k',k m}(t)z'_{n',n}(t)\\
&\qquad+\sum_{k'm'n'} a'''_{j,k'm'n'}(z_t)z'_{k',k}(t)z'_{m',m}(t)z'_{n',n}(t)
\PERIOD\\
\end{split}
\end{equation}
The constant  in the right hand side of \eqref{z_tz_0} grows typically exponentially with respect to $t$, i.e.
\[
\max_{ij}\sum_{|\alpha|\le 2} \|\partial_{z_0}^\alpha \partial_{z_j}z_i(t,z_0)\|_{L^\infty(\rset^{2N})}
\le \EXP{ct}\COMMA\]
 where $c$ is the positive constant in the right hand side of \eqref{R7'}.
The integration with respect to the initial data measure $\Rd  z_0$ can be replaced by integration with respect
to $\Rd  z_t$ since the phase-space volume is preserved, i.e. the Jacobian determinant 
\[
|{\rm det}\big(\frac{\partial z^j_0}{\partial z^j_t}\big)| =1
\]
 is constant for all time, so that
\[
\begin{split}
\int_{\rset^{2N}} |\partial_{z^j_t}^\alpha a_{j}\big(0,z_t^j(x_0,p_0)\big)|^2 \Rd  z_0
&=
\int_{\rset^{2N}} |\partial_{z^j_t}^\alpha\bA_{jj}\big(0,z_t^j(x_0,p_0)\big)|^2 \Rd  z_0\\
&=\int_{\rset^{2N}} |\partial_{z^j_t}^\alpha\bA_{jj}\big(0,z_t^j(x_0,p_0)\big)|^2
|{\rm det}\big(\frac{\partial z^j_0}{\partial z^j_t}\big)|\Rd  z^j_t\\
&=\int_{\rset^{2N}} |\partial_{z^j_t}^\alpha\bA_{jj}\big(0,z_t^j\big)|^2
\Rd  z^j_t\PERIOD\\
\end{split}
\]
Equation \eqref{A_z_t} and \eqref{z_tz_0} therefore imply
\begin{equation*}\label{Az2}
\begin{split}
&\sum_{|\alpha|=1}\sqrt{\int_{\rset^{2N}} |\partial_{p_0}^\alpha\bA_{jj}\big(0,z_t^j(x_0,p_0)\big)|^2 \Rd  z_0}\\
&\le \max_{ik}\|\partial_{z_k(0)}z_i(t)\|_{L^\infty(\rset^{2N})}\sum_{|\alpha|=1}\|\partial_z^\alpha \bA_{jj}(0,z)\|_{L^2(\rset^{2N})}\COMMA\\
\end{split}
\end{equation*}

\begin{equation*}
\begin{split}
&\sum_{|\alpha|=2}\sqrt{\int_{\rset^{2N}} |\partial_{p_0}^\alpha\bA_{jj}\big(0,z_t^j(x_0,p_0)\big)|^2 \Rd  z_0}\\
&\le \sum_n\max_{ik}\|\partial_{z_n(0)z_k(0)}z_i(t)\|_{L^\infty(\rset^{2N})}\sum_{|\alpha|=1}\|\partial_z^\alpha \bA_{jj}(0,z)\|_{L^2(\rset^{2N})}\\
&\qquad +  \max_{ik}\|\partial_{z_k(0)}z_i(t)\|^2_{L^\infty(\rset^{2N})}\sum_{|\alpha|=2}\|\partial_z^\alpha \bA_{jj}(0,z)\|_{L^2(\rset^{2N})}\COMMA\\
\end{split}
\end{equation*}
and

\begin{equation*}\label{Az2_1}
\begin{split}
&\sum_{|\alpha|=3}\sqrt{\int_{\rset^{2N}} |\partial_{p_0}^\alpha\bA_{jj}\big(0,z_t^j(x_0,p_0)\big)|^2 \Rd  z_0}\\
&\le \sum_{mn}\max_{ik}\|\partial_{z_m(0)z_n(0)z_k(0)} z_i(t)\|_{L^\infty(\rset^{2N})}\sum_{|\alpha|=1}\|\partial_z^\alpha \bA_{jj}(0,z)\|_{L^2(\rset^{2N})}\\
&\qquad +  \sum_n\max_{ik}\|\partial_{z_n(0)z_k(0)} z_i(t)\|_{L^\infty(\rset^{2N})}\max_{ik}\|\partial_{z_k(0)}z_i(t)\|^2_{L^\infty(\rset^{2N})}\\
&\qquad\times \sum_{|\alpha|=2}\|\partial_z^\alpha \bA_{jj}(0,z)\|_{L^2(\rset^{2N})}\\
&\qquad + \max_{ik}\|\partial_{z_k(0)}z_i(t)\|^3_{L^\infty(\rset^{2N})}\sum_{|\alpha|=3}\|\partial_z^\alpha \bA_{jj}(0,z)\|_{L^2(\rset^{2N})}\PERIOD\\
 \end{split}
 \end{equation*}
The estimate \eqref{z_tz_0} implies that these right hand sides are bounded provided that
\begin{equation}\label{R7}
\begin{split}
\max_i\sum_{|\alpha|\le 3}\|\partial^\alpha_x \partial_{x_i}\lambda_j\|_{L^\infty(\rset^{N})} &= \BIGO(1)\COMMA\\
\sum_{|\alpha|\le 3}
\|\partial_z^\alpha\bA_{jj}(0,\cdot)\|_{L^2(\rset^{2N})}  &= \BIGO(1)\PERIOD\\
\end{split}
\end{equation}
We note that the assumption on $\lambda_j$ is compatible with the assumption
 that $\tlambda_j(x)$ tends to infinity
as $|x|\rightarrow\infty$, which is imposed to have a discrete spectrum of $\hat H$.
The choice $\Psi=\Psi[2]$  yields $\lambda_j=\tlambda_j + \BIGO(M^{-1})$.
The eigenvalues $\lambda_j$ have four bounded derivatives if the eigenvalues $\tlambda_j$ of the potential $V$ are distinct and 
also the corresponding eigenvectors $\Psi[1]$ and $\Psi[2]$ have four bounded derivatives, which requires
the fifth order derivatives of the potential $V$ to be bounded.  

{\it Step 3. Evaluation of the trace.}
Define the diagonal matrix \[C(z)=\bB_0(z)\EXP{-\beta \bH_0(z)}\PERIOD\] Duhamel's representation implies 
\[
\begin{split}
&\Big|\TR \Big( \hat C\big({\OPERW{\by(t,z)}} - {
\OPERW{y(t)}}\big)\Big)\Big|\\
&=|\TR(\int_0^t 
 \hat C \EXP{\IU(t-s)M^{1/2}\OPER{\bH}}\hat R_s  \EXP{-\IU(t-s)M^{1/2}\OPER{\bH}}\
 \Rd  s)|\\
 &=|\int_0^t  \TR(
 \hat C \EXP{\IU(t-s)M^{1/2}\OPER{\bH}}\hat R_s  \EXP{-\IU(t-s)M^{1/2}\OPER{\bH}})
 \Rd  s|\\
 &=|\int_0^t  \TR(
  \EXP{-\IU(t-s)M^{1/2}\OPER{\bH}}\hat C \EXP{\IU(t-s)M^{1/2}\OPER{\bH}}\hat R_s )
 \Rd  s|\\
 \end{split}
 \]
 and
 Cauchy's inequality for the trace, and Lemma \ref{composition} establish
\[
\begin{split}
&\Big|\TR \Big( \hat C\big({\OPERW{\by(t,z)}} - {
\OPERW{y(t)}}\big)\Big)\Big|\\
 &\le \int_0^t  \big(\TR(\EXP{-\IU(t-s)M^{1/2}\OPER{\bH}}\hat C^*\hat C \EXP{\IU(t-s)M^{1/2}\OPER{\bH}})
  \TR(\hat R_s^*\hat R_s)\big)^{1/2}\Rd s\\
  &= \int_0^t  \big(\TR(\hat C^*\hat C )
  \TR(\hat R_s^*\hat R_s)\big)^{1/2}\Rd s\\
 &=\NORMFAC^{N}\int_0^t \big(\int_{\rset^{2N}} \TR( C^*C)\Rd z\int_{\rset^{2N}}\TR(R_s^*R_s) {\rm dz}\big)^{1/2}\Rd  s\\
&= \NORMFAC^{N} \int_0^t \|\bB_0(z)\EXP{-\beta \bH_0(z)}\|_{L^2(\rset^{2N})}\|R_s\|_{L^2(\rset^{2N})}\Rd  s\COMMA\\
 \end{split}
\]
which  together with \eqref{trace_normalize} prove  the lemma
provided also
\begin{equation}\label{R6}
\begin{split}
\|\bB_0(z)\EXP{-\beta \bH_0(z)}\|_{L^2(\rset^{2N})} &=\BIGO(1)\PERIOD\\
\end{split}
\end{equation}
\end{proof}

\subsection{Proof of Lemma \ref{lemma13}}
\begin{proof} %[Proof of Lemma \ref{lemma13}]
The improved bound \eqref{lemma13_eq11} is based on
\begin{equation}\label{r_acc}
\begin{split}
\OPERW{y(t,\cdot)} - \frac{1}{2} \big(\OPERW{\by_1(t,\cdot)} + \OPERW{\by_2(t,\cdot)}\big)
&= \frac{1}{2}\int_0^t \big(\OPERW{R_{1s}} \OPERW{\by_0(t-s,\cdot)}+ \OPERW{\by_0(t-s,\cdot)}\OPERW{R_{2s}}\big)  \Rd  s\COMMA\\
\end{split}
\end{equation}
which is obtained by  \eqref{duhamel} and \eqref{duhamel_22}.
Let $r_1:=M^{-1/2}\{\bB_0,\bH_0\}\EXP{-s\bH_0}$. The estimate \eqref{R_est}
shows that the dominating term in $R_{1s}$ is $r_1$ and similarly  by \eqref{R_est2}
the dominating term in $R_{2s}$ is $-r_1$. By \eqref{duhamel}, applied with $\bB_0= \Id$, 
we obtain \[
\OPER{\by}_0(s)=\hat y(s)+\int_0^s \hat R_1(s')\OPER{\by}_0(s-s')\Rd  s'\COMMA\]
which replaces 
the remainder term $\hat r_1\OPER{\by}_0$ included in \eqref{duhamel} by the smaller term
\[
(r_1\MP   y-y\MP  r_1)^{\OPERW{}}\COMMA
\]
present in \eqref{r_acc}
and
generates the new remainder terms
\[
\begin{split}
 \int_0^t \hat r_1\hat R_1\OPER{\by}_0\, \Rd  s&= \int_0^t (r_1\MP  R_1)^{\OPERW{}}\ 
 \OPER{\by}_0\, \Rd  s\COMMA\\
\int_0^t\OPER{\by}_0 \hat R_1 \hat r_1\, \Rd  s&= \int_0^t  \OPER{\by}_0(R_1\MP  r_1)^{\OPERW{}}
\, \Rd  s\PERIOD\\
\end{split}
\]
Lemma \ref{moyal_lemma} and \eqref{CD_fourier} imply $\|r_1\MP   y-y\MP  r_1\|_{L^2(\rset^{2N})}=\BIGO(M^{-1})$ provided
\begin{equation}\label{R8}
\begin{split}
\frac{1}{(2\pi)^{2N}}\|\FT (\partial_{z_n} \EXP{-\bH_0})\|_{L^1(\rset^{2N})}&=\BIGO(1)\COMMA\\
\sum_n\|\partial_{z_n}(p\cdot\bB_0-\nabla\bLambda\cdot \bB_0)\|_{L^2(\rset^{2N})}&=\BIGO(1)\COMMA\\
\end{split}
\end{equation}
%which replaces $\|R\|_{L^2(\rset^{2N})$ in \eqref{cr}
and there holds $\|r_1\MP  R_1\|_{L^2(\rset^{2N})}+\|R_1\MP  r_1\|_{L^2(\rset^{2N})}=\BIGO(M^{-1})$ if 
\begin{equation}\label{R9}
\begin{split}
\frac{1}{(2\pi)^{2N}}\|\FT (p\cdot\bB_0-\nabla\bLambda\cdot \bB_0)\|_{L^1(\rset^{2N})}=\BIGO(1)\, \\
\end{split}
\end{equation}
and \eqref{RR_est} are satisfied.
The new error terms 
\[
\begin{split}
&\TR\Big(\OPER{\bar A}_\tau\int_0^t \big(r_1(s)\MP y(t-s)-y(t-s)\MP r_1(s)\big)\Rd s\Big)\COMMA\\
&\TR\big(\OPER{\bar A}_\tau\int_0^t(r_1\MP  R_1)_s\OPER{\bar y}_0(t-s) \Rd s\big)\COMMA
\end{split}
\]
can then be estimates as in \eqref{A_trace_new}-\eqref{trace_normalize}.
We obtain that the first term  in the right hand side of \eqref{steps} has the bound $\BIGO(M^{-1})$, by assuming
\eqref{R0}, \eqref{R1}, \eqref{R2}, \eqref{R10}, \eqref{R8} and \eqref{R9}.
%
% \eqref{R1}, \eqref{R2}, \eqref{R10}, \eqref{R7}, \eqref{R6},  and in addition
%\eqref{R8} and \eqref{R9}.
%
% 
%

\end{proof}

\subsection{Weyl quantization estimates: proof of Lemmas \ref{moyal_lemma} and \ref{comp_lemma}}\label{sec_weyl}
The purpose of this section is to prove Lemmas \ref{moyal_lemma} and \ref{comp_lemma}  that
estimate the remainder terms $r_i, \ i=0,2$ in the $L^2(\rset^{2N})$ norm, and avoid derivatives of high order $N$, using that the Hilbert-Schmidt norm of
operators can be bounded based on the $L^2(\rset^{2N})$ norm of their symbols, as illustrated in \eqref{trace_hs} and \eqref{normhs}. The precise estimates of  remainder terms in the Moyal expansion of the composition of two
Weyl quantizations are presented here using Hermitian properties of the operator valued exponential $\EXP{\frac{\IU}{2M^{1/2}}(  \nabla_{x'}\cdot \nabla_{p} -\nabla_{x}\cdot \nabla_{p'}) }$.

The Moyal expansions, see \cite{zworski},
\begin{equation}\label{moyal}
\begin{split}
A(x)\MP D(x,p)& =\EXP{-\frac{\IU}{2M^{1/2}} \nabla_{x'}\cdot \nabla_{p'} }A(x+x')D(x,p+p')\Big|_{x'=p'=0}\COMMA\\
C(x,p)\MP D(x,p) &= \EXP{\frac{\IU}{2M^{1/2}} (\nabla_{x'}\cdot \nabla_{p} -\nabla_{x}\cdot \nabla_{p'})}C(x,p)D(x',p')\Big|_{(x,p)=(x',p')}\COMMA \\
\end{split}
\end{equation}
are well defined for the symbols $A,C$ and $D$ in the Schwartz class, viewing the exponential as a Fourier multiplicator.
We begin with the first expansion.
%**
%**
%**
\subsubsection{The case $A(x)\MP D(x,p)$.}\label{subsecAB} We study the remainder term for the expansion of the exponential using
\[f_{xp}(x',p'):=A(x+x')D(x,p+p')\]
and apply   the  Fourier transform $\FT$  %
defined for $f(x',p')$ by
\begin{equation}\label{fourier_def}
\FT \{f\}(\xi_x, \xi_p):= %
\int_{\rset^{2N}}f(x',p')\EXP{-i(x'\cdot \xi_x +p'\cdot \xi_p)} \Rd x'\Rd p'\PERIOD
\end{equation}
The remainder can be evaluated by the inverse Fourier transform
\[
\begin{split}
&\EXP{-\frac{\IU}{2M^{1/2}} \nabla_{x'}\cdot \nabla_{p'} }f_{xp}(x',p')\Big|_{x'=p'=0}\\
&= %
(\frac{1}{2\pi})^{2N} 
\int_{\rset^{2N}} \FT  f_{xp}(\xi_x,\xi_p) \EXP{\frac{\IU}{2}M^{-1/2}\xi_x\cdot\xi_p} \Rd\xi_x\Rd\xi_p\\
\end{split}
\]
and Taylor expansion of the exponential function  %as follows
\begin{equation}\label{exp_def}
\begin{split}
&\EXP{-\frac{\IU}{2M^{1/2}} \nabla_{x'}\cdot \nabla_{p'} }f_{xp}(x',p')\Big|_{x'=p'=0}\\
&= %
(\frac{1}{2\pi})^{2N} 
\int_{\rset^{2N}} \FT  f_{xp}(\xi_x,\xi_p) \EXP{\frac{\IU}{2}M^{-1/2}\xi_x\cdot\xi_p} \Rd\xi_x\Rd\xi_p\\
&= %
(\frac{1}{2\pi})^{2N}
\int_{\rset^{2N}} \FT  f_{xp}(\xi_x,\xi_p) \Big(\sum_{n=0}^m(\frac{\IU\xi_x\cdot\xi_p}{2M^{1/2}})^n \frac{1}{n!} \\
&\qquad +(\frac{\IU\xi_x\cdot\xi_p}{2M^{1/2}})^{m+1} \frac{1}{m!} \int_0^1(1-s)^m \EXP{\frac{\IU s}{2}M^{-1/2}\xi_x\cdot\xi_p} \Rd s\Big) \Rd\xi_x\Rd\xi_p\\
%\end{split}
%\end{equation}
%and
%\begin{equation}\label{exp_def}
%\begin{split}
%&\EXP{-\frac{\IU}{2M^{1/2}} \nabla_{x'}\cdot \nabla_{p'} }f_{xp}(x',p')\Big|_{x'=p'=0}\\
&= 
\sum_{n=0}^m\frac{1}{n!} (-\frac{\IU\nabla_{x'}\cdot\nabla_{p'}}{2M^{1/2}})^n  f_{xp}(x',p')\Big|_{x'=p'=0} \\
&\qquad + (\frac{1}{2M^{1/2}})^{m+1}
\int_0^1  \EXP{-\frac{\IU s}{2}M^{-1/2}\nabla_{x'}\cdot\nabla_{p'}}
(-\IU\nabla_{x'}\cdot\nabla_{p'})^{m+1} \\
&\qquad\times f_{xp}(x',p')  \frac{(1-s)^m}{ m! }\Rd s\Big|_{x'=p'=0}\PERIOD\\
\end{split}
\end{equation}
The remainder is therefore
\begin{equation}\label{rxp}
\begin{split}
&r(x,p)\\
&=
(\frac{1}{2M^{1/2}})^{m+1}
\int_0^1  \EXP{-\frac{\IU s}{2}M^{-1/2}\nabla_{x'}\cdot\nabla_{p'}}
(-\IU\nabla_{x'}\cdot\nabla_{p'})^{m+1}\\
&\qquad\times \underbrace{f_{xp}(x',p')}_{A(x+x')B(x,p+p')}  \frac{(1-s)^m}{ m! }\Rd s\Big|_{x'=p'=0}\\
&= (\frac{1}{2M^{1/2}})^{m+1}
\int_0^1  \EXP{-\frac{\IU s}{2}M^{-1/2}\nabla_{x}\cdot\nabla_{p}}
(-\IU\nabla_{x}\cdot\nabla_{p})^{m+1}\\
&\qquad\times {A(x)B(x',p)}  \frac{(1-s)^m}{ m! }\Rd s\Big|_{x'=x}\PERIOD\\
\end{split}
\end{equation}

%**
%**
%**
\subsubsection{The case $C(x,p)\MP D(x,p)$.} 
As above we obtain
\[
\begin{split}
&C(x,p)\MP D(x,p)\\
&=\EXP{\frac{\IU}{2M^{1/2}} (\nabla_{x'}\cdot \nabla_{p} -\nabla_{x}\cdot \nabla_{p'})}C(x,p)D(x',p')\Big|_{(x,p)=(x',p')}\\
&= 
\sum_{n=0}^m\frac{1}{n!} 
(\IU\frac{\nabla_{x'}\cdot \nabla_{p} -\nabla_{x}\cdot \nabla_{p'}}{2M^{1/2}})^n C(x,p)D(x',p')\Big|_{(x,p)=(x',p')} \\
&\qquad + (\frac{\IU}{2M^{1/2}})^{m+1}
\int_0^1  \EXP{\frac{\IU s}{2}M^{-1/2}(\nabla_{x'}\cdot\nabla_{p}-\nabla_{x}\cdot \nabla_{p'})}
(\nabla_{x'}\cdot \nabla_{p} -\nabla_{x}\cdot \nabla_{p'})^{m+1}   \\
&\qquad\times C(x,p)D(x',p')\Big|_{(x,p)=(x',p')} \frac{(1-s)^m}{ m! }\Rd s\PERIOD\\
\end{split}
\]
We can therefore write the remainder as
\begin{equation*}\label{rxp4}
\begin{split}
&R(x,p) \\
&=(\frac{\IU}{2M^{1/2}})^{m+1}
\int_0^1  \EXP{\frac{\IU s}{2}M^{-1/2}(\nabla_{x'}\cdot\nabla_{p}-\nabla_{x}\cdot \nabla_{p'})}
(\nabla_{x'}\cdot \nabla_{p} -\nabla_{x}\cdot \nabla_{p'})^{m+1}   \\
&\qquad\times C(x,p)D(x',p')\Big|_{(x,p)=(x',p')} \frac{(1-s)^m}{ m! }\Rd s\\
&=\int_0^1 %
 \EXP{\frac{\IU s}{2}M^{-1/2}(\nabla_{x'}\cdot\nabla_{p}-\nabla_{x}\cdot \nabla_{p'})}
 \tilde R(x,p,x',p',s)\big|_{x'=x, p'=p}\Rd s\COMMA
\end{split}
\end{equation*}
with 
\begin{equation}\label{Rx'p'}
\begin{split}
&\tilde R(x,p,x',p',s)\\
 &:= (\frac{\IU}{2M^{1/2}})^{m+1}
(\nabla_{x'}\cdot \nabla_{p} -\nabla_{x}\cdot \nabla_{p'})^{m+1}
C(x,p)D(x',p')
\frac{(1-s)^m}{ m! }\Rd s\PERIOD\\
\end{split}
\end{equation}
Cauchy's inequality implies
\begin{equation}\label{r*_est}
\begin{split}
&\|R(z)\|^2_{L^2(\rset^{2N})}\\
& = \int_{\rset^{2N}} \big| \int_0^1 
\EXP{\frac{\IU s}{2}M^{-1/2}(\nabla_{x'}\cdot\nabla_{p}-\nabla_{x}\cdot \nabla_{p'})}
 \tilde R(z,z',s)\Rd s \big|_{z=z'}^2 \Rd  z\\
&\le \int_{\rset^{2N}} \int_0^1 
|\EXP{\frac{\IU s}{2}M^{-1/2}(\nabla_{x'}\cdot\nabla_{p}-\nabla_{x}\cdot \nabla_{p'})}
 \tilde R(z,z',s)\big|_{z'=z}^2\Rd s \Rd  z\\
& = \int_0^1 \int_{\rset^{2N}} 
\big|\EXP{\frac{\IU s}{2}M^{-1/2}(\nabla_{x'}\cdot\nabla_{p}-\nabla_{x}\cdot \nabla_{p'})}
 \tilde R(z,z',s)\big|_{z'=z}^2 \Rd  z\Rd s\COMMA\\
 \end{split}
 \end{equation}
 which proves Lemma \ref{moyal_lemma}.

%**
%**
%**
\subsubsection{Estimates of $\|C\MP  D\|_{L^2}$ and $\|A\MP  D\|_{L^2}$}
To verify \eqref{CD_fourier} we insert in \eqref{r*_est} %
the Fourier representation 
in the sense of distributions
of
the Dirac delta measure on $\rset^{2N}$, 
\[\delta(z-z')=(2\pi)^{-2N}\int_{\rset^{2N}} \EXP{\IU\omega\cdot (z-z')} \Rd 
\omega\COMMA\]
which implies 
\begin{equation}\label{delta_exp}
\begin{split}
 &\EXP{\frac{\IU s}{2}M^{-1/2}(\nabla_{x'}\cdot\nabla_{p}-\nabla_{x}\cdot \nabla_{p'})}
 \big(\tilde R(z,z',s)\EXP{\IU\omega_x\cdot (x-x')+\IU\omega_p(p-p')}\big)\\
& = \EXP{\IU\omega_x\cdot (x-x')+\IU\omega_p(p-p')}\\
&\qquad\times \EXP{\frac{\IU s}{2}M^{-1/2}\left((\nabla_{x'}-\IU\omega_x)\cdot(\nabla_{p}+\IU\omega_p)-(\nabla_{x}+\IU\omega_x)\cdot 
 (\nabla_{p'}-\IU\omega_p)\right)}
 \tilde R(z,z',s)\\
 &=\EXP{\IU\omega_x\cdot (x-x')+\IU\omega_p(p-p')}\\
&\qquad\times \EXP{\frac{\IU s}{2}M^{-1/2}\left((\nabla_{x'}\cdot\nabla_{p}-\nabla_{x}\cdot \nabla_{p'})
 -\IU\omega_x\cdot (\nabla_p+\nabla_{p'})+\IU\omega_p\cdot(\nabla_x+\nabla_{x'})\right)}
  \tilde R(z,z',s) \COMMA
\end{split}
 \end{equation}
and
 \begin{equation*}\label{isometry1}
 \begin{split}
& \int_{\rset^{2N}} 
|\EXP{\frac{\IU s}{2}M^{-1/2}(\nabla_{x'}\cdot\nabla_{p}-\nabla_{x}\cdot \nabla_{p'})}
 \tilde R(z,z',s)|_{z=z'}^2 \Rd  z\\
 &=
  \int_{\rset^{4N}} \big(\EXP{\frac{-\IU s}{2}M^{-1/2}(\nabla_{x'}\cdot\nabla_{p}-\nabla_{x}\cdot \nabla_{p'})}
 \tilde R^*(z,z',s)\big)\\
&\quad\times \big(\EXP{\frac{\IU s}{2}M^{-1/2}(\nabla_{x'}\cdot\nabla_{p}-\nabla_{x}\cdot \nabla_{p'})}
 \tilde R(z,z',s)\big)
 \delta(z-z')\Rd  z\Rd z'\\
 &=\frac{1}{(2\pi)^{2N}}
  \int_{\rset^{6N}} \big(\EXP{\frac{-\IU s}{2}M^{-1/2}(\nabla_{x'}\cdot\nabla_{p}-\nabla_{x}\cdot \nabla_{p'})}
 \tilde R^*(z,z',s)\big)\\
&\quad\times\big( \EXP{\frac{\IU s}{2}M^{-1/2}(\nabla_{x'}\cdot\nabla_{p}-\nabla_{x}\cdot \nabla_{p'})}
 \tilde R(z,z',s)\big)
  \, \EXP{\IU\omega_x\cdot(x-x')} \EXP{\IU\omega_p\cdot(p-p')}
 \Rd \omega_x\Rd \omega_p \Rd  z\Rd z'. \\
 \end{split}
 \end{equation*}
 Integration by parts, property \eqref{delta_exp} and the fact that  
 the differential operators $\nabla_{x'}\cdot\nabla_{p}$ and $\nabla_{x}\cdot \nabla_{p'}$ commute
and are symmetric in $L^2(\rset^{2N})$ show that
 \begin{equation}\label{isometry}
 \begin{split}
 & \int_{\rset^{2N}} 
|\EXP{\frac{\IU s}{2}M^{-1/2}(\nabla_{x'}\cdot\nabla_{p}-\nabla_{x}\cdot \nabla_{p'})}
 \tilde R(z,z',s)|_{z=z'}^2 \Rd  z\\
  &=\frac{1}{(2\pi)^{2N}}
 \int_{\rset^{6N}}\tilde R^*(z,z',s) \bigg(\EXP{\frac{-\IU s}{2}M^{-1/2}(\nabla_{x'}\cdot\nabla_{p}-\nabla_{x}\cdot \nabla_{p'})}\\
&\quad\times \Big(\big( \EXP{\frac{\IU s}{2\sqrt M}(\nabla_{x'}\cdot\nabla_{p}-\nabla_{x}\cdot \nabla_{p'})}
 \tilde R(z,z',s)\big) \EXP{\IU\omega_x\cdot(x-x')} \EXP{\IU\omega_p\cdot(p-p')}\Big) \bigg) \Rd \omega_x\Rd \omega_p
 \Rd  z\Rd z'\\
 &=\frac{1}{(2\pi)^{2N}}
 \int_{\rset^{6N}}\tilde R^*(z,z',s)\Big(
 \EXP{\frac{-\IU s}{2\sqrt M}
 \left((\nabla_{x'}\cdot\nabla_{p}-\nabla_{x}\cdot \nabla_{p'})
 -\IU\omega_x(\nabla_p+\nabla_{p'})+\IU\omega_p(\nabla_x+\nabla_{x'})\right) }\\
&\qquad\times\big(
  \EXP{\frac{\IU s}{2}M^{-1/2}(\nabla_{x'}\cdot\nabla_{p}-\nabla_{x}\cdot \nabla_{p'})} \tilde R(z,z',s)\big) \Big)\\
  &\qquad\times \EXP{\IU\omega_x\cdot(x-x')} \EXP{\IU\omega_p\cdot(p-p')} \Rd \omega_x\Rd \omega_p
 \Rd  z\Rd z'\\
&=\frac{1}{(2\pi)^{2N}}
 \int_{\rset^{6N}}\tilde R^*(z,z',s)  \big( \EXP{\frac{-\IU s}{2}M^{-1/2}
 \left(
 -\IU\omega_x(\nabla_p+\nabla_{p'})+\IU\omega_p(\nabla_x+\nabla_{x'})\right) }  \tilde R(z,z',s) \big)
 \\
&\qquad\times \EXP{\IU\omega_x\cdot(x-x')} \EXP{\IU\omega_p\cdot(p-p')} \Rd \omega_x\Rd \omega_p
 \Rd  z\Rd z' \PERIOD\\
 \end{split}
 \end{equation}
 Let 
 \[
 \begin{split}
 u &=(z+z')/2\COMMA\\
 v &=(z-z')/2\COMMA
 \end{split}
\]
then by \eqref{Rx'p'} we can expand the derivative \[
(\nabla_{x'}\cdot \nabla_{p} -\nabla_{x}\cdot \nabla_{p'})^{m+1}\] in $\tilde R(z,z',s)$ and collect the
derivatives with respect to $z$ and $z'$ in the functions $\tilde C$ and $\tilde D$
as \[
\tilde R(z,z',s)=\sum_{n=1}^{m+1}\tilde C_n(z)\tilde D_n(z')=\sum_{n=1}^{m+1}\tilde C_n(u+v)\tilde D_n(u-v)=:r(u,v)\PERIOD
\]
The Fourier transform in the $u$-direction is the convolution 
\[
\begin{split}
\hat r(\xi,v)&:=\sum_{n=1}^{m+1}\int_{\rset^{2N}} \tilde C_n(u+v)\tilde D_n(u-v)\EXP{-\IU\xi\cdot u} \Rd  u\\
&=\frac{1}{(2\pi)^{2N}}\sum_{n=1}^{m+1} \int_{\rset^{2N}}
\EXP{\IU\zeta\cdot v}\FT  \tilde C_n(\zeta)  
\EXP{-\IU(\xi-\zeta)\cdot v}\FT  \tilde D_n(\xi-\zeta) \Rd \zeta\PERIOD
\end{split}
\]
The right hand side in \eqref{isometry} becomes
\begin{equation*}\label{hat_r2_2}
\begin{split}
&\frac{1}{(2\pi)^{2N}}
 \int_{\rset^{6N}}\tilde R^*(z,z',s)  \big(\EXP{\frac{-\IU s}{2}M^{-1/2}
 \left(
 -\IU\omega_x(\nabla_p+\nabla_{p'})+\IU\omega_p(\nabla_x+\nabla_{x'})\right) }
   \tilde R(z,z',s)\big) \\
&\qquad\times \EXP{\IU\omega_x\cdot(x-x')} \EXP{\IU\omega_p\cdot(p-p')} \Rd \omega_x\Rd \omega_p
 \Rd  z\Rd z'\\
& =
 \frac{1}{\pi^{2N}}
 \int_{\rset^{6N}}r^*(u,v) \EXP{\frac{-\IU s}{4}M^{-1/2}(
 -\IU J\omega\cdot \nabla_u) + 2\IU \omega\cdot v}\ 
   r(u,v)   \Rd \omega 
 \Rd  u\Rd v\\
 \end{split}
\end{equation*}
which can be written
 \begin{equation}\label{hat_r2}
\begin{split}
&\frac{1}{(2\pi)^{2N}}
 \int_{\rset^{6N}}\tilde R^*(z,z',s)  \big(\EXP{\frac{-\IU s}{2}M^{-1/2}
 \left(
 -\IU\omega_x(\nabla_p+\nabla_{p'})+\IU\omega_p(\nabla_x+\nabla_{x'})\right) }
   \tilde R(z,z',s)\big) \\
&\qquad\times \EXP{\IU\omega_x\cdot(x-x')} \EXP{\IU\omega_p\cdot(p-p')} \Rd \omega_x\Rd \omega_p
 \Rd  z\Rd z'\\
 &= 
  \frac{1}{\pi^{2N} (2\pi)^{2N}}
 \int_{\rset^{6N}} \hat r^*(\xi,v) \EXP{\frac{-\IU s}{4}M^{-1/2}(
  J\omega\cdot \xi) + 2\IU \omega\cdot v}\ 
   \hat r(\xi,v)  \Rd \omega 
 \Rd  \xi\Rd v\\
 &= \frac{1}{\pi^{2N}}
 \int_{\rset^{4N}} \hat r^*(\xi,v) 
   \hat r(\xi,v)
   \delta(2v- \frac{s}{4}M^{-1/2} J \xi)
 \Rd  \xi\Rd v\\
 &= \frac{1}{(2\pi)^{2N}}
 \int_{\rset^{2N}} |\hat r(\xi,\frac{s}{8}M^{-1/2}J\xi) |^2
 \Rd \xi\COMMA \\
\end{split}
\end{equation}
using definition \eqref{J_def} of the matrix $J$.

The next step is to determine 
\[
\OPER{r}(\xi,\gamma J\xi)=:\sum_{n=1}^{m+1}\OPER{r}_n(\xi,\gamma J\xi)\]
 for $\gamma:= \frac{s}{8}M^{-1/2}$ by using the convolution
\[
\begin{split}
|\hat r_n(\xi, v)| &=\frac{1}{(2\pi)^{2N}}|\int_{\rset^{2N}} \FT  C_n(\zeta)\EXP{i\zeta\cdot v}\FT  D_n(\xi -\zeta) \EXP{-i(\xi-\zeta)\cdot v}\Rd \zeta|\\
&\le \frac{1}{(2\pi)^{2N}}\int_{\rset^{2N}} |\FT  C_n(\zeta)||\FT  D_n(\xi -\zeta)| \Rd \zeta\\
\end{split}
\]
which  by Young's inequality, namely $\|f*g\|_{L^2}\le \|f\|_{L^2}\|g\|_{L^1}$, implies
\[
\begin{split}
& \frac{1}{(2\pi)^{2N}}
 \int_{\rset^{2N}} |\hat r_n(\xi,\frac{s}{8}M^{-1/2}J\xi) |^2\Rd \xi\\
 &=   \frac{1}{(2\pi)^{6N}}
 \min( \|\FT \tilde C_n\|_{L^2(\rset^{2N})}^2\|\FT \tilde D_n\|_{L^1(\rset^{2N})}^2,
  \|\FT \tilde C_n\|_{L^1(\rset^{2N})}^2\|\FT \tilde D_n\|_{L^2(\rset^{2N})}^2)\\
  &= \frac{1}{(2\pi)^{4N}}\min( \|\tilde C_n\|_{L^2(\rset^{2N})}^2\|\FT \tilde D_n\|_{L^1(\rset^{2N})}^2,
  \|\FT \tilde C_n\|_{L^1(\rset^{2N})}^2\|\tilde D_n\|_{L^2(\rset^{2N})}^2)\\
\end{split}
\]
that proves \eqref{CD_fourier} in Lemma \ref{comp_lemma} and combined with \eqref{r*_est}, \eqref{isometry} and 
\eqref{hat_r2} it also establishes  Lemma \ref{moyal_lemma}.

The estimate \eqref{CD} of $\|A(x)\MP  D(x,p)\|_{L^2(\rset^{2N})}$ follows similarly from \eqref{rxp} by 
integration by parts 
\[
\begin{split}
&\int_{\rset^{2N}} \big|\EXP{\frac{\IU}{M^{1/2}}\nabla_{x}\cdot\nabla_{p}}
{A(x)D(x',p)} \big|_{x'=x}^2 \Rd  x\Rd  p\\
&=\int_{\rset^{3N}} \big(\EXP{\frac{-\IU}{M^{1/2}}\nabla_{x}\cdot\nabla_{p}}
{A^*(x)D^*(x',p)}\big) \big(\EXP{\frac{\IU}{M^{1/2}}\nabla_{x}\cdot\nabla_{p}}
{A(x)D(x',p)}\big) \\
&\qquad\times \delta(x-x')\Rd  x\Rd x'\Rd  p\\
&=\frac{1}{(2\pi)^N}\int_{\rset^{4N}} 
{A^*(x)D^*(x',p)} 
\\
&\qquad\times
\EXP{\frac{-\IU}{M^{1/2}}\nabla_{x}\cdot\nabla_{p}}\Big(\big(\EXP{\frac{\IU}{M^{1/2}}\nabla_{x}\cdot\nabla_{p}}
{A(x)D(x',p)} \big)\EXP{\IU\omega(x-x')}\Big)
\Rd  x\Rd x'\Rd  p\Rd \omega\\
\end{split}
\]
and the Fouriertransform, $\FT_p$, in
the $p$-direction
\[
\begin{split}
&\int_{\rset^{2N}} \big|\EXP{\frac{\IU}{M^{1/2}}\nabla_{x}\cdot\nabla_{p}}
{A(x)D(x',p)} \big|_{x'=x}^2 \Rd  x\Rd  p\\
&=\frac{1}{(2\pi)^N}\int_{\rset^{4N}} 
{A^*(x)D^*(x',p)} \Big(\EXP{\frac{-\IU}{M^{1/2}}(\nabla_{x}+\IU\omega)\cdot\nabla_{p}}\big(\EXP{\frac{\IU}{M^{1/2}}\nabla_{x}\cdot\nabla_{p}}
{A(x)D(x',p)} \big)\Big)\\
&\qquad\times \EXP{\IU\omega(x-x')}\Rd  x\Rd x'\Rd  p\Rd \omega\\
&=\frac{1}{(2\pi)^{2N}}\int_{\rset^{4N}} 
{A^*(x)(\mathcal F_p D)^*(x',\xi)} \EXP{\frac{\IU}{M^{1/2}}\omega\cdot \xi }
{A(x)\FT_p D(x',\xi)} \\
&\qquad\times \EXP{\IU\omega(x-x')} \Rd  x\Rd x'\Rd  \xi\Rd \omega\\
&=\frac{1}{(2\pi)^{N}}\int_{\rset^{3N}} 
{A^*(x)(\mathcal F_p D)^*(x',\xi)} 
{A(x)\FT_p D(x',\xi)} \delta(x-x'+\frac{\xi}{\sqrt M}) \Rd  x\Rd x'\Rd  \xi\\
&=\frac{1}{(2\pi)^{N}}\int_{\rset^{2N}} 
{A^*(x'-\frac{\xi}{\sqrt M})(\mathcal F_p D)^*(x',\xi)} 
{A(x'-\frac{\xi}{\sqrt M})\FT_p D(x',\xi)}  \Rd  x'\Rd  \xi\\
&= \frac{1}{(2\pi)^{N}}\int_{\rset^{2N}} 
{|A(x'-\frac{\xi}{\sqrt M})|^2 |\FT_p D(x',\xi)|^2} 
 \Rd  x'\Rd  \xi\\
 &\le \|A\|^2_{L^\infty(\rset^{N})} \|D\|^2_{L^2(\rset^{2N})}\PERIOD
\end{split}
\]
The case $A(p)\MP  D(x,p)$ is analogous to $A(x)\MP  D(x,p)$.

In conclusion, these  %
estimates bound the remainder symbols
$r_0$ and $r_2$ used in Lemmas \ref{lemma1}--\ref{lemma13}.
Finally, using that Schwartz functions are dense in  $L^p(\rset^{2N})$, for any $p\ge 1$, we have proved
Lemmas \ref{moyal_lemma} and \ref{comp_lemma}.

%The next section compares the density operator $\EXP{-\beta\OPER{\bH}}$, used in %Theorem \ref{gibbs_corr_thm}, to  the density operator $\OPERW{\EXP{-\beta %\bH_0}}$ in Theorems \ref{gibbs_corr_thm_unif} and \ref{gibbs_corr_thm_analytic}.

\appendix
%%%%%%%%%%%%%%%%%%%%%%%%%%%%%%%%%%%%%%%%%%%%%%%%%%%%%%%%%%%%%%%%%%%%%%%%%%%%
%
% SECTION
%
%%%%%%%%%%%%%%%%%%%%%%%%%%%%%%%%%%%%%%%%%%%%%%%%%%%%%%%%%%%%%%%%%%%%%%%%%%%%
\section{Which density operator?}\label{Gibbs_section}
If the density operators $\EXP{-\beta\OPER{\bH}}$ and $\OPERW{\EXP{-\beta \bH_0}}$ would differ only little
it would not matter which one we use as a reference. The proof of Theorem~\ref{gibbs_corr_thm} shows that
observables based on these two operators differ by the small amount of order $\BIGO(M^{-1/2})$ when the number of 
particles, $N$, is small compared to $M$. Since we do not know if this difference is small for larger 
number of particles, we may ask which density operator to use.  The density operator $\OPER{\rho}_q:=\EXP{-\beta\OPER{\bH}}$
is a time-independent solution to the quantum Liouville-von Neumann equation
\begin{equation}\label{liouvilleVN}
\partial_t \hat\rho_t= \IU M^{1/2}[\hat\rho_t, \OPER{\bH}]
\end{equation}
while the classical Gibbs density $\EXP{-\beta \bH_0}$ is a time-independent solution to the classical Liouville  equation
\[
\partial_t \rho_t= -\{\rho_t, \bH_0\}\PERIOD
\]
The corresponding density matrix symbol $\rho_q$ is not a time-independent solution to the classical Liouville equation, since $0=\IU M^{1/2}(\rho_q\MP \bH-\bH\MP \rho_q)\ne \{\rho_q,\bH\}$, and the classical Gibbs density
is not a time-independent solution to the quantum Liouville-von Neumann equation, since
$\IU M^{1/2}(\EXP{-\beta \bH_0}\MP \bH_0-\bH_0\MP \EXP{-\beta \bH_0})\ne \{ \EXP{-\beta \bH_0},\bH_0\}=0$.
We are lead to the question which Gibbs density to use and why use any Gibbs measure. This question is analyzed regarding the time-dependence and the classical behavior in the following two subsections.

%**
%**
%**

%**
%**
%**
\subsection{Why should we use the Gibbs density?}\label{gibbs_why} 
In Statistical Mechanics books
the Gibbs density is often derived as the marginal distribution of 
a subsystem weakly coupled to a heat bath, where the
composite system is assumed to have the microcanonical distribution, see \cite{feynman}.
Here we give a variant of this derivation, assuming instead that the marginal distribution of the subsystem
is determined by the subsystem Hamiltonian.

 In molecular dynamics simulations one often wants to determine properties of a large macroscopic system with many particles, say $N\sim 10^{23}$. Such large particle systems cannot yet be simulated in a computer and one may then ask for a setting where a smaller system has similar properties as the large. Therefore, we seek an equilibrium density matrix that has the property that the marginal distribution for a subsystem has the same density as the whole system. We will below motivate how this assumption together with an assumption of weak coupling between the subsystem and the whole system leads to the Gibbs measure;
in fact it is enough to assume that the marginal distribution for the subsystem has an equilibrium density which is a function of the Hamiltonian for the subsystem.

%
%and we assume that $F$ is continuously differentiable. 
%
%

%The classical limit of the density matrix becomes the non %normalized density $F(H)$ in phase-space $\rset^{2N}$, 

Assume that the Hamiltonian symbol has been diagonalized and consider one component so that the Hamiltonians are scalar valued, with coordinates $z_s\in \rset^{2n}$ and Hamiltonian $H_s(z_s)$ in the subsystem and coordinates $z_b\in \rset^{2(N-n)}$ and Hamiltonian $H_b(z_b,z_s)$ for a large heat bath environment system, i.e. $N\gg n$. The whole system has the Hamiltonian $H(z)=H_s(z_s)+H_b(z_b,z_s)$ with $z=(z_s,z_b)\in\rset^{2N}$.
On the classical side, $F(H)$ for any
differentiable function $F:\rset\to(0,\infty)$ 
yields a non normalized solution to the time independent Liouville equation.
We assume that 
\begin{assumption}\label{Gassump1}
\begin{equation}\label{gibbs_assump_1}
\begin{split}
&\mbox{$F:\rset\to (0,\infty)$ is continuously differentiable,}\\
&\mbox{$\int_{\rset^{2N}} F(H(z)){\rm d} z$ and 
$\int_{\rset^{2(N-n)}} F(H_b(z_b,z_s)){\rm d} z_b$ are finite.}
\end{split}
\end{equation}
\end{assumption}
The non normalized marginal distribution for the subsystem is then
\[
\int_{\rset^{2(N-n)}} \frac{F\big(H_b(z_b,z_s)+H_s(z_s)\big)}{\frac{F(H_b(z_b,z_s))}{\int_{\rset^{2(N-n)}}F(H_b(\bar z_b,z_s))\Rd  \bar z_b}}\, \frac{F\big(H_b(z_b,z_s)\big)}{\int_{\rset^{2(N-n)}}F\big(H_b(\bar z_b,z_s)\big)\Rd  \bar z_b}\, \Rd  z_b
\]
which by the mean value theorem is equal to 
\[
\frac{F\big(H_b(\ZBSTAR ,z_s)+H_s(z_s)\big)}{\frac{F(H_b(\ZBSTAR ,z_s))}{\int_{\rset^{2(N-n)}}F(H_b(\bar z_b,z_s))\Rd  \bar z_b}}
\]
for some $\ZBSTAR \in \rset^{2(N-n)}$,
that depends on $H_s(z_s)$. We note that $\ZBSTAR $
%{\bf needs more thinking:  The value $H_b(\ZBSTAR ,z)$ is %unique while $\ZBSTAR $ may not be unique. Different system %Hamiltonians $H_s$ and for given system coordinate $z$ %will give different values of $H_b(\ZBSTAR ,z)$. We assume %that this value can be expressed as a function of the %system Hamiltonian $H_s(z)$ only. Thus we use the %following notation for $H_b(z^*,z)$:} 
 may be non unique. 
For given $F$ and $H_b$ we introduce the notation 
\[
\begin{split}
\ZBSTAR  & := \ZBSTAR \big(H_s(z_s)\big)\, ,\\
 \HBSTAR \big(z_s,H_s(z_s)\big)&:= H_b(\ZBSTAR \big(H_s(z_s)\big),z_s)\PERIOD
\end{split}
\] 
If the heat bath is uncoupled to the system, the value $\ZBSTAR $ does not depend on $H_s$, i.e. $\HBSTAR \big(z_s,H_s(z_s)\big) = \HBSTAR (z_s,0)$. We make two additional
assumptions.
\begin{assumption}\label{Gassump2}
The coupling between the heat bath and the system is weak
in the sense that
\begin{equation}\label{z_assump}
\HBSTAR \big(z_s,H_s(z_s)\big) = \HBSTAR (z_s,0) + o\big(H_s(z_s)\big)\COMMA
\mbox{ as $H_s(z_s)\to 0$.}
\end{equation}
\end{assumption}
This assumption expresses that the coupling energy between the subsystem
and the heat bath is  much smaller than the subsystem energy $H_s(z_s)$.
The second assumption is that the marginal distribution for the subsystem
is a function of the subsystem Hamiltonian.
\begin{assumption}\label{Gassump3}
For any continuous $H_b$
there is a function $f$, a constant $C$ and a point $z_s\in\mathbb{R}^{2n}$ such that
\begin{equation}\label{F_assump}
f\big(H_s(z_s)\big)=C\frac{F\Big(\HBSTAR \big(z_s,H_s(z_s)\big)+H_s(z_s)\Big)}{F\Big(\HBSTAR \big(z_s,H_s(z_s)\big)\Big)}
\end{equation}
%for a point $z_s\in\mathbb{R}^{2n}$ and 
for a set of continuous subsystem Hamiltonians $H_s$, containing a sequence $H_s^{(n)}$ such that  $H^{(n)}_s(z_s)\to 0$, not identically zero, 
%constrained by the condition that 
with $\int_{\rset^{2n}} f(H^{(n)}_s(\bar z_s)){\rm d} \bar z_s$ finite.
\end{assumption}
We obtain with the definitions
\begin{equation}\label{LF_def}
\begin{split}
L(H) &:=\log F(H)\COMMA\\
\ell(H) &:= \log f(H)\COMMA
\end{split}
\end{equation}
and $\HSSTAR :=H_s(z_s)$ that
\[
\ell(\HSSTAR )= L\big(\HBSTAR (z_s,\HSSTAR )+ \HSSTAR \big) -  L\big(\HBSTAR (z_s,\HSSTAR )\big) +\log C\COMMA
\]
and note that the assumption that $F> 0$ is differentiable shows that $L$ and $\ell$ also are differentiable. 
The continuity of $F$ and $L$ yields \[\log C=\ell(0)\COMMA\] by choosing $\HSSTAR =0$, 
 so that
\[
\begin{split}
&\frac{\ell(\HSSTAR )-\ell(0)}{\HSSTAR } \\
&= \frac{L\big(\HBSTAR (z_s,\HSSTAR ) + \HSSTAR \big)-L\big(\HBSTAR (z_s,\HSSTAR )\big)}{\HSSTAR }\\
&=\int_0^1 L'\big(\HBSTAR (z_s,\HSSTAR ) + t\HSSTAR \big)\Rd t\COMMA
\end{split}
\]
which as $\HSSTAR \to 0$ 
combined with \eqref{z_assump} and the differentiability of $\ell$  establishes 
\[
\ell'(0)=L'\big(\HBSTAR (z_s,0)\big)\COMMA\quad\mbox{for all $\HBSTAR (z_s,0)$}\,.
\]
Consequently the function $L'$ is constant, since $\HBSTAR (z_s,0)=H_b(\ZBSTAR (0),z_s)$ can take any value in $\rset$
by varying $H_b$.
Let the constant be $-1/T$ so that
\[
L(H)= -\frac{H}{T} + \mbox{ constant,} \quad \mbox{ for any $H\in \rset$,}
\]
and we have by the definition \eqref{LF_def} obtained the Gibbs density
\[
F(H)= \EXP{-\beta H}\COMMA %\frac{\EXP{-\beta H}}{\int_{\rset^{2N}}\EXP{-H(z)}\Rd  z}\COMMA
\]
which by \eqref{F_assump} shows that also the marginal is the Gibbs distribution
\begin{proposition}\label{gibbs_deriv11}
Assumptions \ref{Gassump1}, \ref{Gassump2} and \ref{Gassump3} imply
\[
f(H_s)= \frac{\EXP{-\beta H_s}}{\int_{\rset^{2n}} \EXP{-H_s(z_s)}\Rd  z_s}\PERIOD
\]
\end{proposition}

In conclusion, molecular dynamics simulations often seek
the property that the classical equilibrium for the subsystem is the same as for the larger environment, since the subsystem is supposed to model a larger system. We have shown that the classical Gibbs density is the only differentiable function with this property,
in the sense of the derivation based on Assumption~\ref{Gassump1}, \ref{Gassump2} and \ref{Gassump3}, while we do not know if the symbol for the quantum density matrix $\rho_q$ has the same property for a large number of particles. Therefore, we prefer to use the Weyl quantized classical Gibbs density $\OPERW{\EXP{-\beta \bH_0}}$
as our reference density, as in Theorem~\ref{gibbs_corr_thm_unif} and \ref{gibbs_corr_thm_analytic}, since its drawback of  being non time-independent solution to the quantum Liouville-von Neumann equation is mild, in the sense that the time dependent perturbation is small for long time,
%that the time derivative of an observable is small, 
as shown in Section \ref{time_dep_sec}.
%\eqref{obs_time} is $\BIGO(M^{-1})$ small.

%
\begin{remark}
Let us finally informally motivate why it seems difficult to
find time independent solutions to the quantum Liouville equation that are not  functions of the Hamiltonian.
An equilibrium density operator must commute with the Hamiltonian operator, by the Liouville-von Neumann equation, and  consequently it is diagonalized by the same transformation as the Hamiltonian. The diagonalized density operator is then a function of the eigenvalues of the Hamiltonian operator if they are distinct, by mapping the eigenvalues of the Hamiltonian to the eigenvalues of the density operator. 
Assume that this mapping can be extended to a continuous mapping $F:\rset\to\rset_+$.
We can then write the density operator as a function of the Hamiltonian, namely $\hat\rho=F(\hat H)$. 
\end{remark}

\subsection{Time-dependent density operators}\label{time_dep_sec}
One criterion for a density operator is  that it is a time independent or approximately time--independent 
solution to the quantum Liouville-von Neumann equation, so that measurements of the observable at different times  vary little.
Here we show that the time dependent perturbation of observables based on the initial density matrix $\OPERW{\EXP{-\beta \bar H}}$, namely the quantization of the classical Gibbs density, is small up to time $t\ll M$,
which  in some sense justifies to use the density matrix $\OPERW{\EXP{-\beta \bar H}}$.
 We include a motivation for an estimate which is uniform in the total number of particles $N$, %which is uniform in $N$.
 %for many particles 
 based on a small system weakly coupled to a large heat bath.

Let $\hat\rho_t$ be the solution to the quantum Liouville-von Neumann equation \eqref{liouvilleVN}, with initial data $\rho_0=\EXP{-\beta \bar H}$. Introduce the change of variables $\hat \rho_t=\hat\rho_0 + \hat v_t$,
then the perturbation $v:[0,\infty)\times \rset^{2N}\to \mathbb C^{d\times d}$ satisfies
\[
\partial_t \hat v_t = \IU M^{1/2} [\hat v_t,\OPER{\bar H}] + \IU M^{1/2}[\hat \rho_0,\OPER{\bar H}]\, , t>0,\quad \hat v_0=0\, .
\]
As in \eqref{diag_A} we see that $v$ is diagonal since $\rho_0$ is diagonal.
By Duhamel's principle we have
\[
\hat v_t= \IU  M^{1/2} \int_0^t \EXP{-\IU  M^{1/2}\sigma\OPER{\bar H}} [\hat \rho_0,\OPER{\bar H}]
\EXP{\IU  M^{1/2}\sigma\OPER{\bar H}}\Rd  \sigma\, .
\]
An observable based on this density matrix $\hat \rho_t$ can therefore be written
\[
\TR (\OPER{\bar A}_\tau\OPER{\bar B}_0\hat \rho_t)
= \TR (\OPER{\bar A}_\tau\OPER{\bar B}_0\hat \rho_0)
+\TR (\OPER{\bar A}_\tau\OPER{\bar B}_0\hat v_t)
\]
where the time dependent perturbation takes the form
\[
\begin{split}
&\TR (\OPER{\bar A}_\tau\OPER{\bar B}_0\hat v_t)
=\IU  M^{1/2} \int_0^t \TR \Big(\OPER{\bar A}_\tau\OPER{\bar B}_0 \EXP{-\IU  M^{1/2}\sigma\OPER{\bar H}} [\hat \rho_0,\OPER{\bar H}]
\EXP{\IU  M^{1/2}\sigma\OPER{\bar H}}\Big)\Rd  \sigma\\
&=\IU  M^{1/2} \int_0^t \TR \Big(\EXP{\IU  M^{1/2}\sigma\OPER{\bar H}}\OPER{\bar A}_\tau \EXP{-\IU  M^{1/2}\sigma\OPER{\bar H}}\EXP{\IU  M^{1/2}\sigma\OPER{\bar H}}\OPER{\bar B}_0 \EXP{-\IU  M^{1/2}\sigma\OPER{\bar H}} [\hat \rho_0,\OPER{\bar H}]
\Big)\Rd  \sigma\\
&=\IU  M^{1/2} \int_0^t \TR \Big(\OPER{\bar A}_{\tau+\sigma}\OPER{\bar B}_\sigma  [\hat \rho_0,\OPER{\bar H}]
\Big)\Rd  \sigma\, .\\
\end{split}
\]
Let
\[
 \hat h:=[\hat \rho_0,\OPER{\bar H}]\COMMA
\]
then the Moyal expansion for composition  and 
$\{\bar H,\EXP{-\beta \bar H}\}=0$ imply
\[
\begin{split}
h&= \EXP{-\beta \bar H}\MP\bar H- \bar H\MP \EXP{-\beta \bar H}\\
&= \EXP{-\beta \bar H}\MP \frac{|p|^2}{2} {\rm I}- \frac{|p|^2}{2} {\rm I} \MP \EXP{-\beta \bar H}
+ \EXP{-\beta \bar H}\MP\Lambda(x) - \Lambda(x)\MP \EXP{-\beta \bar H}\\
&= 2\sum_{k=1}^\infty \EXP{-\beta\Lambda(x)}
\frac{{\rm i}^{2k+1}}{(2k+1)! (4M)^{(2k+1)/2}}
(\nabla_x\cdot \nabla_p)^{2k+1} \EXP{-\beta|p|^2}\Lambda(x)\, ,
\end{split}
\]
and we obtain by Lemma \ref{moyal_lemma}
\[
\begin{split}
h &=\EXP{-\beta\Lambda(x)}
\frac{{\rm i}^{3}}{3 (4M)^{3/2}}(\nabla_x\cdot \nabla_p)^{3}\EXP{-\beta|p|^2}\Lambda(x)
+\mathcal O(M^{-5/2})\\
&=\EXP{-\beta\bar H(x)}
\frac{{\rm i}^{3}}{3 (4M)^{3/2}}\Big(  (p\cdot\nabla_x)^3\Lambda(x) + 3(p\cdot\nabla_x)\Delta\Lambda(x)\Big) +\mathcal O(M^{-5/2})
\end{split}
\]
which by Lemma \ref{moyal_lemma} yields
$\|h\|_{L^2(\rset^{2N})}=\mathcal O(M^{-3/2})$.
The symbol $h$ is diagonal, since $\rho_0$ is diagonal.
% The estimate in \eqref{R1}  includes terms  as $\Delta \Lambda$, which typically are large for large $N$. 
% The bound for $h$ is usually bounded as a function of $N$,
% since the derivative $p\cdot\nabla_x$ of
% $\Lambda$ and  $\Delta\Lambda$ measures the gradient of the potential energy in the direction $p$.
% The large term 
% $\Delta\Lambda$ in \eqref{R1} comes from the remainder 
% $\frac{1}{2}(\EXP{-\beta \bar H}\MP\bar H+ \bar H\MP \EXP{-\beta \bar H})-\EXP{-\beta \bar H}\bar H$, based on second order derivatives and obtained from the dynamics for the symbol of $\EXP{-\sigma\hat{\bar H}}$, while the smaller remainder $h$ is related to $
% \EXP{-{\rm i}M^{1/2}s\bar H} \widehat{\EXP{-\beta\bar H}}\EXP{{\rm i}M^{1/2}s\bar H}$ with smaller third order derivatives.
%In this sense, the quantum time evolution of $\widehat{\EXP{-\beta\bar H}}$ yields different error terms (for $t\ll M$) than for the quantum Gibbs density $\EXP{-s\hat{\bar H}}$ 
We conclude that on the time scale $t\ll M$ the time dependent perturbation is
small and we obtain
 \begin{equation*}\label{trace_perturb}
 \begin{split}
\left|\frac{\TR (\OPER{\bar A}_\tau\OPER{\bar B}_0\hat v_t)}{
 \TR (\hat \rho_0)}\right| &=
\left|{\rm i}M^{1/2}
\frac{\int_0^t\int_{\rset^{2N}}
\TR(\bar A_{\tau+\sigma}\MP\bar B_\sigma h) {\rm d} z{\rm d}\sigma}{\int_{\rset^{2N}}
\TR\rho_0 {\rm d}z}\right| 
 \le C\frac{t}{M}\, .
 \end{split}
 \end{equation*}
%where the constant $C$ typically is bounded with respect to $N$.

% Theorem \ref{gibbs_corr_thm} shows that
% \[
% \begin{split}
% \frac{\TR \big(\OPER{\bar A}_{\tau+s}\OPER{\bar B}_s  \hat h \big)}{\TR (\hat \rho_0)}
% &= %\Big(\frac{\sqrt M}{2\pi}\Big)^N\Big(
% \frac{\sum_{j=1}^d\int_{\rset^{2N}} \TR \Big({\bar A}_{jj}\big(0,z^j_{\tau+s}(z_0)\big) {\bar B}_{jj}\big(0, z^j_{s}(z_0)\big)   h_{jj}(z_0)\Big) \Rd  z_0}{
% \sum_{k=1}^d\int_{\rset^{2N}} \TR \big(\rho_{kk}(z_0)\big) \Rd  z_0}\\
% &\quad +\mathcal O(M^{-1/2})\, ,
% %\Big)\, 
% \end{split}
% \]
% provided $h$ is sufficiently regular,
% and we obtain 
% \begin{equation}\label{trace_perturb}
% \frac{\TR (\OPER{\bar A}_\tau\OPER{\bar B}_0\hat v_t)}{
% \TR (\hat \rho_0)}=\mathcal O(\frac{t}{M})\, .
% \end{equation}

The remainder $h$ includes error terms, as $(p\cdot\nabla_x)^3 \Lambda$ and $(p\cdot\nabla)\Delta \Lambda$, that typically are large proportional to $N$. 
To obtain estimates that do not increase with $N$, we need a new setting: we
consider as in Section~\ref{gibbs_why} a smaller system, with coordinates $(x_s,p_s)=z_s\in\rset^{2n}$, weakly coupled to a large heat bath, with coordinates $(x_b,p_b)=z_b\in\rset^{2(N-n)}$, where $n\ll N$. We assume weak coupling, as in \eqref{z_assump}, which here means that  $\Lambda(x_s,x_b)=\Lambda_s(x_s)+\Lambda_b(x_s,x_b)$  satisfies
\begin{equation}\label{coupling_small}
|\nabla_{x_s} \Lambda_b(x_s,x_b)|\ll 1\, .
\end{equation}
Let $z=(z_s,z_b)\in\rset^{2N}$.
To understand the heat bath perturbation with respect to the system Hamiltonian, $\bar H_s$,
we can, on the operator level, study 
$\EXP{-\gamma\hat{\bar H}_s}
\EXP{-\gamma\hat{\bar H}_b}=\EXP{-\gamma(\hat{\bar H}_s+\hat{\bar H}_b)-\gamma \hat H_\delta}$
based on the Baker-Campel-Hausdorff expansion for $\hat H_\delta$, which to leading order satisfies
\[
\hat H_\delta\simeq \frac{1}{2}[\hat{\bar H}_s,\hat{\bar H}_b]= -\Delta_{x_s}\Lambda_b(x_s,x_b)-2\nabla_{x_s}\Lambda_b(x_s,x_b)\cdot\nabla_{x_s}\, .
\]This perturbation is small in the sense that it is bounded with respect to $N$
provided the coupling \eqref{coupling_small} is weak.
Here $\gamma=\beta$ or $\gamma={\rm i}M^{1/2} t$ for $\beta,t\in\rset$.
Does the remainder $\widehat{\EXP{-\beta\hat{\bar H}}}$ also give a
perturbation that is bounded with respect to $N$?
Integration by parts as follows in fact shows that the perturbation caused by $h$ is bounded with respect to $N$.
Introduce the notation
%To leading order we have
$D_\sigma:=\bar A_{\tau+\sigma}\MP\bar B_\sigma(z)\, .
$
%=
%\bar A\big(0,z_\sigma(\tau+\sigma,z)\big)\bar B\big(0,z_\sigma(\sigma,z)\big)
%+\mathcal O(M^{-1})\, .
We have by Lemma \ref{moyal_lemma}
\begin{equation*}\label{trace_perturb20}
\begin{split}
&\int_{\rset^{2N}}\TR(\bar A_{\tau+\sigma}\MP\bar B_{\sigma} h){\rm d} z\\
&=\frac{1}{16M^{3/2}}\int_0^1\int_{\rset^{2N}}
 \TR\Big(D_\sigma(x',p) \EXP{\frac{{\rm i}r}{2M^{1/2}}\nabla_x\cdot\nabla_p} (\nabla_x\cdot\nabla_p)^3
 \big(\Lambda(x)\EXP{-\beta|p|^2/2}\big)\times\\
 &\qquad\times \EXP{-\beta\Lambda(x')}\Big)\Big|_{x=x'}
 {\rm d} x{\rm d}p(1-r)^2{\rm d}r\\
% &\frac{\TR (\OPER{\bar A}_\tau\OPER{\bar B}_0\hat v_t)}{\TR (\hat \rho_0)}\\
%  &=
% {\rm i}
% \frac{\int_0^t\int_{\rset^{2N}}
% \TR(D_s(x',p) \EXP{\frac{{\rm i}}{2M^{1/2}}\nabla_x\cdot\nabla_p} (\nabla_x\cdot\nabla_p)^3
% \big(\Lambda(x)\EXP{-\beta|p|^2/2}\big) \EXP{-\beta\Lambda(x')}\Big|_{x=x'}
% {\rm d} x{\rm d}p{\rm d}s}{M\int_{\rset^{2N}}
% \TR\rho_0 {\rm d}z}\\
% &={\rm i}
% \frac{\int_0^t
&=\frac{1}{16M^{3/2}}\int_0^1\int_{\rset^{2N}}\int_{\rset^{2N}}
 \TR\Big( \EXP{-\frac{{\rm i}r}{2M^{1/2}}\nabla_x\cdot\nabla_p} (-\nabla_x\cdot\nabla_p)^3
 \big(D_\sigma(x',p) \Lambda(x)\big)\times\\
 &\qquad\times\EXP{-\beta\Lambda(x')} \EXP{-\beta |p|^2/2}\Big)\Big|_{x=x'}
 {\rm d} x{\rm d}p(1-r)^2{\rm d}r
 %{\rm d}\sigma}
 %{M\int_{\rset^{2N}}
% \TR\rho_0 {\rm d}z}
 \, .\\
 \end{split}
 \end{equation*}
We can write
\[
(\nabla_x\cdot\nabla_p)^3
\big(D_\sigma(x',p) \Lambda(x)\big)
=\sum_{|\alpha|=3}\partial^\alpha_p D_\sigma(x',p)\partial^\alpha_x\Lambda(x)
\]
and we assume that there is a constant $C$ such that
\begin{equation}\label{D_assump22}
\begin{split}
\|\partial^\alpha_x\Lambda\|_{L^\infty(\rset^N)}&\le C\, ,\quad \mbox{ for } |\alpha|\le 3,\\
\sum_{|\alpha|=3}\|\partial^\alpha_p D_\sigma\|_{L^2(\rset^{2N})} &\le C\, ,\\
\end{split}
\end{equation}
uniformly in $N$.
The first assumption means that three derivatives of the eigenvalue is bounded
and the second assumption is based on  the observable $D_\sigma$  defined by the initial observables $\bar A_0$ and $\bar B_0$. These two initial observables only depend on the system coordinates. The assumption of weak coupling is then related to the assumption for $D_\sigma$.
The system dynamics $z_s(t,z_0) \in\rset^{2n}$, with initial data $z_0=(x_s,p_s,x_b,p_b)$, depends only weakly on the heat bath coordinate $x_b$ through $\Lambda_b$. A motivation for the last assumption in \eqref{D_assump22} is the following: to leading order we have
\[D_\sigma(z_0)\simeq \bar A\big(0,z_s(\tau+\sigma,z_0)\big)\bar B\big(0,z_s(\sigma,z_0)\big)\]
and the derivatives $\partial_p^\alpha$  of the right hand side, with $z_0=(x,p)$,
are determined by the flows $z',z'',z'''$ as in \eqref{A_z_t}, based on the assumption
\eqref{R7}.

Lemma \ref{comp_lemma} implies that there is a constant $C$ such that
the time dependent perturbation has the bound
 \begin{equation*}\label{trace_perturb200}
 \begin{split}
&\left|\frac{\TR (\OPER{\bar A}_\tau\OPER{\bar B}_0\hat v_t)}{
 \TR (\hat \rho_0)} \right|\\
% &=
% {\rm i}
%\frac{\int_0^t\int_{\rset^{2N}}
%\TR( \EXP{-\frac{{\rm i}}{2M^{1/2}}\nabla_x\cdot\nabla_p} %(-\nabla_x\cdot\nabla_p)^3
%\big(D_s(x',p) \Lambda(x)\big) %\EXP{-\beta\Lambda(x')}\EXP{-\beta |p|^2/2}\Big|_{x=x'}
%{\rm d} x{\rm d}p{\rm d}s}{M\int_{\rset^{2N}}
%\TR\rho_0 {\rm d}z}\\
&\le \frac{\|\EXP{-\frac{{\rm i}}{2M^{1/2}}\nabla_x\cdot\nabla_p} (-\nabla_x\cdot\nabla_p)^3
\big(D_\sigma(x',p) \Lambda(x)\big)\|_{L^2(\rset^{2N})}\|\EXP{-\beta\bar H}\|_{L^2(\rset^{2N})}}{
\frac{M}{t}\|\EXP{-\beta\bar H}\|_{L^1(\rset^{2N})}}\\
&\le \frac{t\|\EXP{-\beta\bar H}\|_{L^2(\rset^{2N})}}{M
\|\EXP{-\beta\bar H}\|_{L^1(\rset^{2N})}}\|\partial^\alpha_x\Lambda\|_{L^\infty(\rset^N)}
\sum_{|\alpha|=3}\|\partial^\alpha_p D_\sigma\|_{L^2(\rset^{2N})}\\
&\le \frac{C}{M}\, ,
 \end{split}
 \end{equation*}
which holds uniformly in $N$.

\section{Numerical tests}\label{numerical_tests}
In this section we present precise formulations of the
numerical demonstrations in Section \ref{sec_comp_demo}.
%that demonstrates the results in this paper. 
We formulate a simple model problem in order to compare the Schr\"odinger density with the molecular dynamics density in \eqref{qc_density} and \eqref{cl_density}. Then we present a similar model for comparing the Schr\"odinger position correlation observable with its molecular dynamics approximation. In this section we describe the respective numerical methods that were used in numerical approximation of these quantities.
%**
%**
%**
\subsection{Model problem formulation}\label{sec:equilibriumposobs}
We define the model Hamiltonian $\hat H = -M^{-1}\Id\Delta + V(x)$, where $V: \mathbb{R}\rightarrow \mathbb{R}^{2\times 2}$ and $\Id$ is the $2\times 2$ identity matrix. 
% Then form the Schr\"odinger eigenvalue equation $\hat H \Phi_n = E_n\Phi_n$, where $E_n \in \mathbb{R}$ and $\Phi_n:\mathbb{R}\rightarrow\mathbb{R}^2$. The goal is to numerically test estimate \eqref{G_corr} for observables $g:\rset\to\rset$ depending only on position. We show in a numerical simulation that also the corresponding observables 
% \begin{equation*}%
% \frac{\int_\rset\sum_n  g(x) |\Phi_n(x)|^2  \EXP{-\beta E_n}\Rd x}{\int_{\mathbb{R}}\sum_n |\Phi_n(\bar x)|^2 \EXP{-\beta E_n}\Rd\bar x} 
% \end{equation*}
% and
% \begin{equation}\label{q_k_exp}
% \sum_{k = 1}^{2}q_k\int_\rset g(x) \frac{\EXP{-\beta \lambda_{k}(x)} \Rd x}{\int_{\mathbb{R}} \EXP{-\beta \lambda_{k}(\bar x)} \Rd\bar x} %
% \end{equation}
% are very close for large mass $M$.
% Here $\lambda_{k}(x)$ is the k'th eigenvalue of $V(x)$, such that $\lambda_{1}(x)\leq \lambda_{2}(x)$, and
% \begin{equation*}
% q_k = \frac{\int_{\mathbb{R}} \EXP{-\beta \lambda_{k}(x)} \Rd x}{\sum_{j = 1}^{2}\int_{\mathbb{R}} \EXP{-\beta \lambda_{j}(x)} \Rd x}.
% \end{equation*}
We choose to construct the potential $V$ that yields an avoided crossing with a variable spectral gap depending on the parameter $\delta\in \mathbb{R}$. The construction is done in such a way that the smallest energy gap, $2\delta$, appears at a single position, $x = 0$. We define  the matrix
\begin{equation}\label{eqn:potMatNoOsc}
\bar V(x) =
\begin{bmatrix}
  x + x^2 & \delta \\
  \delta & -x + x^2 \\
\end{bmatrix}, \quad x\in \mathbb{R},
\end{equation}
with the eigenvalues
\begin{equation*}
\bar{\lambda}_1(x) = x^2-\sqrt{\delta^2 + x^2} \text{, }\quad \bar{\lambda}_2(x) = x^2+\sqrt{\delta^2 + x^2}
\end{equation*}
and the normalized eigenvectors ${\psi}_1 := \bar{{\psi}}_1/\|\bar{{\psi}}_1\|_2$, ${\psi}_2 := \bar{{\psi}}_2/\|\bar{{\psi}}_2\|_2$ where
\begin{equation}\label{eqn:eigenvectorsVBar}
{\bar{\psi}}_1 = \begin{bmatrix}
  \frac{x-\sqrt{\delta^2 + x^2}}{\delta}  \\
  1 \\
\end{bmatrix}
\text{, }\quad {\bar{\psi}}_2 = \begin{bmatrix}
  \frac{x+\sqrt{\delta^2 + x^2}}{\delta}  \\
  1 \\
\end{bmatrix}.
\end{equation}
The derivatives with respect to $x$ of the eigenvectors ${\psi}_1$ and ${\psi}_2$ are of order $1/\delta$. We study how  the size of the spatial derivative of the eigenvalues impacts the approximation of the observables. Therefore we construct a family of matrices with the eigenvalues
\begin{equation}\label{eqn:eigenvaluesV}
\lambda_1(x) := \bar{\lambda}_1(x) + a\cos(bx)-1 \text{, }\quad \lambda_2(x) := \bar{\lambda}_2(x)\COMMA
\end{equation}
illustrated in Figure \ref{image:examplePotentialMatrix}.
Then we define the matrix-valued potential function $V(x)$
\begin{equation*}
V := \Psi D \Psi^*\COMMA\;\mbox{ where}\;\;
D := \left[\begin{matrix}
  \lambda_1 & 0  \\
  0 & \lambda_2 \\
\end{matrix}\right]\COMMA\;\mbox{ and}\;\;
\Psi := \left[\begin{matrix}
  {\psi}_1 \\
  {\psi}_2
\end{matrix}\right]\PERIOD
\end{equation*}

%**
%**
%**
\subsection{Numerical approximation}\label{sec:numDeMo}
To solve the Schr\"odinger equation with the Hamiltonian $\OPER{H}$ we discretize the computational domain $\Omega:= [-6,6]$ using the uniform mesh
$x_k = -6+ k\Delta x$, $k = 0, 1, 2, ..., K$, 
$\Delta x = \frac{12}{K}$. The computational domain is chosen such that the quantum density $\mu_\mathrm{qc}(x)$, \eqref{qc_density}, approximately vanishes outside $\Omega$.
In the presented numerical results we use $K = 751$. 
The eigenvalue problem $\OPER{H}\Phi_n = E_n\Phi_n$ is approximated using the 2nd-order finite difference approximation of the Laplacian with the zero boundary conditions on the computational domain $\Omega$ resulting in the algebraic eigenvalue problem 
\begin{equation}\label{algebraic_eig}
\boldsymbol{H}_d \boldsymbol{\phi}_n = e_n\boldsymbol{\phi}_n
\end{equation}
with the matrix
\begin{equation*}
\begin{aligned}
&\boldsymbol{H}_d :=
	\frac{1}{2M(\Delta x)^2}\left[\begin{smallmatrix}
		h_{11,0} + 2 & h_{12,0} & -1 & & & &   & \\
		h_{21,0} & h_{22,0} + 2 & 0 & -1 & & & \mathbf{0}  & \\
		-1 & 0 & h_{11,1} + 2 & h_{12,1} & -1 & &   & \\
		 & -1 & h_{21,1} & h_{22,1} + 2 & 0 & -1 &   & \\
		 & & \cdot & \cdot & \cdot & \cdot & \cdot  & \\
		 & \mathbf{0}&  & & -1 & 0 & h_{11,K} + 2 & h_{12,K}\\
		 & & & & & -1 & h_{21,K} & h_{22,K} + 2
	\end{smallmatrix}\right]\COMMA
\end{aligned}
\end{equation*}
and approximation of the eigen-functions
\[
\boldsymbol{\phi}_{n} := \left[
\phi_{n,0,1},
\phi_{n,0,2},
\phi_{n,1,1},
\phi_{n,1,2}, \ldots,
\phi_{n,K,1},
\phi_{n,K,2}\right]^T\COMMA
\]
with the boundary condition $\phi_{n,0,i} = \phi_{n,K,i} = 0$ for $i = 1,2$.
The entries of the matrix are 
\[
h_{ij,k} := 2M(\Delta x)^2 V_{ij}(x_k)\COMMA\;\;i,j=1,2\;\;
k=0,\dots,K\PERIOD
\]
The algebraic eigenvalue problem has been solved using the \texttt{Matlab} function \texttt{eig}. In the numerical experiments reported here we used the parameters $a = 1$ and $b = 10$ in \eqref{eqn:eigenvaluesV}.

The quantum density $\mu_\mathrm{qc}(x)$, \eqref{qc_density}, is approximated by 
\begin{equation}\label{equation:schApprox}
\frac{\sum_{n}\left(\left|\phi_{n,k,1}\right|^2 + \left|\phi_{n,k,2}\right|^2\right) \EXP{-\beta e_n}}{\sum_k\sum_{n}\left(\left|\phi_{n,k,1}\right|^2 + \left|\phi_{n,k,2}\right|^2\right) \EXP{-\beta e_n}\Delta x} , \;\;  k = 0,1,...,N\PERIOD
\end{equation}
Since the computational domain $\Omega$ is chosen large enough so that $\mu_\mathrm{qc}(x)\approx 0$ for $x\not\in \Omega$ the imposed Dirichlet boundary condition does not introduce a significant numerical error.
% and choose $x_0$, $x_N$ far enough  away from $0$ so that %
% the imposed Dirichlet boundary conditions only affect the solution slightly.
To approximate the integrals in the quotients of the molecular density $\mu_\mathrm{cl}(x)$, and the weight $q_k$,  \eqref{cl_density} 
% \begin{equation}\label{equation:molDynApprox}
% \begin{aligned}
% &\sum_{k=1}^{2} q_k \frac{ \EXP{-\frac{\lambda_k(x)}{T}}}{\int_{\mathbb{R}} \EXP{-\frac{\lambda_k(x)}{T}}\Rd x}, \\
% &q_k = \frac{\int_{\mathbb{R}} \EXP{-\frac{\lambda_k(x)}{T}} \Rd x}{\sum_{j=1}^2 \int_{\mathbb{R}} \EXP{-\frac{\lambda_j(x)}{T}} \Rd x},
% \end{aligned}
% \end{equation}
we use the trapezoidal rule.
%**
%**
%**
%\subsection{A numerical simulation for the time correlated observables}
% We use the constants $T = 1.9947$ and $\delta = 0.1$ which together imply $q_1 = 0.8$ and treat the left and the right hand sides of the estimate \eqref{eqn:analyticalCorrelation} separately. In the estimate 

\medskip
\noindent{\it Correlation dependent observables.}
For numerical simulation of the position correlation observable 
%\eqref{eqn:analyticalCorrelation} there is 
we deal with the expression $\TR(\OPER{x}_{\tau}\OPER{x}_{0}\EXP{-\beta\OPER{H}})$ where $\OPER{x}_{\tau} = \EXP{\IU\tau\sqrt{M}\OPER{H}}\OPER{x}_0 \EXP{-\IU\tau\sqrt{M}\OPER{H}}$ for which we introduce the approximations
\begin{equation*}
\EXP{\IU\tau\sqrt{M}\OPER{H}} \simeq \EXP{\IU\tau\sqrt{M}\boldsymbol{H}_d}\COMMA\;\;\;
\EXP{-\beta\OPER{H}} \simeq \EXP{-\beta \boldsymbol{H}_d},
\end{equation*}
and
\begin{equation*}
\OPER{x}_0 \simeq \left[
\begin{smallmatrix}
x_0 & & & & & & &\mathbf{0}\\
& x_0 & & & & & &\\
& & x_1 & & & & &\\
& & & x_1 & & & &\\
& & & & \cdot & & &\\
& & & & & \cdot & &\\
& & & & & & x_K & \\
\mathbf{0} & & & & & & & x_K 
\end{smallmatrix}
\right] =: X.
\end{equation*}
Next we define the matrices
\begin{equation*}
P := \begin{bmatrix}\boldsymbol{\phi}_{1} & \boldsymbol{\phi}_{2} & \boldsymbol{\phi}_{3} & \cdots & \boldsymbol{\phi}_{2K}\end{bmatrix} \text{, } \quad D := \left[\begin{smallmatrix}
e_1 & & & & \mathbf{0}\\
& e_2 & & &\\
& & \cdot & &\\
& & & \cdot &\\
\mathbf{0} & & & & e_{2K}
\end{smallmatrix}\right]\COMMA
\end{equation*}
where the column vectors $\boldsymbol{\phi}_{1},...,\boldsymbol{\phi}_{K}$ and the numbers  $e_1,...,e_K$ solve the algebraic eigenvalue problem  \eqref{algebraic_eig} with the mesh $x_k = -4.5 +k\Delta x$,  on the computational domain  $[-4.5, 4.5]$ for $\Delta x = \frac{9}{K}$ where $K = 2048$. Note that the orthogonal matrix $P$ diagonalizes the real symmetric matrix $\boldsymbol{H}_d$, so $\boldsymbol{H}_d = PDP^T$. 
Thus 
\begin{equation*}
\EXP{\IU\tau\sqrt{M}\boldsymbol{H}_d} = P\EXP{\IU\tau\sqrt{M}D}P^T\COMMA\;\;\mbox{and}\;\;
\EXP{-\beta \boldsymbol{H}_d} = P\EXP{-\beta D}P^T.
\end{equation*}
We approximate the left hand side of \eqref{eqn:analyticalCorrelation} by
\begin{equation*}
\frac{\TR \big(P\EXP{\IU\tau\sqrt{M}D}P^T X P\EXP{-\IU\tau\sqrt{M}D}P^T (XP\EXP{-\beta D}P^T + P\EXP{-\beta D}P^T X)\big)}{\TR(2P\EXP{-\beta D}P^T)}
\end{equation*}
and perform the calculations in \texttt{Matlab}.

For the right hand side of the estimate \eqref{eqn:analyticalCorrelation} we 
%do the computations on a parallel computer using \texttt{MPI} in the \texttt{C} programming language and 
solve the Hamiltonian dynamics \eqref{eqn:HamiltonDynamics} using a position Verlet scheme, see \cite{ben}.
More specifically let $(x_k, p_l) = (k\Delta x, l\Delta p)$ be a partition of $\left[-4.5,4.5\right]\times \left[-4.5,4.5\right]$ where 
\[
\mbox{$\Delta x = \frac{9}{K_x - 1},  \Delta p = \frac{9}{K_p - 1}$ and $K_x=K_p=1000$.}
\] 
We compute for each $k,l\in \left\{-500, -499, ..., 499, 500\right\}$ the path from the dynamics \eqref{eqn:HamiltonDynamics} 
%We do the computations in parallel by a linear distribution of the $k$-values to the processes
and we approximate the integrals on the right hand side of the estimate \eqref{eqn:analyticalCorrelation} with the 2-dimensional trapezoidal method.
%**
%**
%**
%\subsection{Numerical results and conclusions}

\medskip
\noindent{\it Choice of parameters.} 
The model problem allows us to control the spectral gap by changing the parameter $\delta$ in the definition of $V(x)$. In the numerical simulations for the equilibrium densities we have chosen $M=1000$. Note that in the atomic units used here the mass ratio for proton-electron system is approximately $M\approx 1836$.  The temperature $T$ and the parameter $\delta$ were chosen such that the weight $q_1$ is kept the same and set to $q_1=0.8$. This choice guarantees that the contribution to the observable averaging from the excited state is not negligible. 
% We present plots of the densities \eqref{equation:schApprox} and \eqref{equation:molDynApprox}, for 
% \[T = 1.9946, \delta = 0.1 \text{ and } M = 1000\COMMA\]
%  in Figure \ref{image:density}. The figure shows that the densities, up to the picture resolution, coincide. The domain of the densities is large enough so that the densities are approximately $0$ at the endpoints of the domain.

% In Figure \ref{fig:1D-delta-pE} we show the $L^1$-error and the $L^{\infty}$-error of the molecular dynamics density compared to the Schr\"odinger density with respect to the mass ratio $M$. Both errors are inverse proportional to the mass ratio $M$. The temperature $T$ and the gap $\delta$ are chosen so that $q_1 = 0.8$. In neither of the norms, in Figure \ref{fig:1D-delta-pE}, can we see a $\delta$ dependency in the error while in our derivation the constant in $\BIGO(M^{-1})$ of \eqref{eqn:analyticalCorrelation}  depends on the norm of the derivatives up to order five 
%  of the eigenvectors and eigenvalues of $V$. In this case the eigenvector derivative of order $n$ is $\BIGO\left(\delta^{-n}\right)$.

% With the same values for $T$ and $\delta$ we show the position correlation observable for 
% different correlation times $\tau$, and three different mass ratios $M$, in
% %
% Figure \ref{fig:corr3m}. For larger mass ratios $M$ the molecular dynamics position  correlation gives a better approximation of the Schr\"odinger position correlation observable.

The simulation evaluating the error for the position correlation observable was done for the correlation time $\tau=0.2$ and the mass ratios up to $M=100$. Nonetheless, even this relatively small value of $M$ confirmed that the $L^{\infty}$-error is inverse proportional to the mass ratio $M$.

\end{document}